\definecolor{gray}{rgb}{0.93,0.93,0.93}
\definecolor{light-gold}{rgb}{0.99,0.97,0.78}
\def\be{\begin{equation}}
\def\ee{\end{equation}}
\def\bm{\begin{multline}}
\def\bfig{\begin{figure}[htb]}
\def\efig{\end{figure}}
\newcommand{\dd}{{\rm d}}
\newcommand{\e}[1]{\,{\rm e}^{#1}\,}
\newcommand{\ii}{{\rm i}}
\def\Tr{{\operatorname{Tr\,}}}
\newcommand{\sumtwo}[2]{\sum_{\substack{#1 \\ #2}}}
\numberwithin{equation}{section}
\newtheorem{theorem}{Theorem}[section]
\newtheorem{proposition}[theorem]{Proposition}
\newtheorem{lemma}[theorem]{Lemma}
\newtheorem{remark}[theorem]{Remark}
\newcommand{\eps}{{\varepsilon}}
\newcommand{\caE}{{\mathcal E}}
\newcommand{\caG}{{\mathcal G}}
\newcommand{\caH}{{\mathcal H}}
\newcommand{\caL}{{\mathcal L}}
\newcommand{\caS}{{\mathcal S}}
\newcommand{\caU}{{\mathcal U}}
\newcommand{\bbC}{{\mathbb C}}
\newcommand{\bbE}{{\mathbb E}}
\newcommand{\bbN}{{\mathbb N}}
\newcommand{\bbP}{{\mathbb P}}
\newcommand{\bbR}{{\mathbb R}}
\newcommand{\bbS}{{\mathbb S}}
\newcommand{\bbZ}{{\mathbb Z}}
\newcommand{\bss}{{\boldsymbol s}}
\def\bbone{{\mathchoice {\rm 1\mskip-4mu l} {\rm 1\mskip-4mu l} {\rm 1\mskip-4.5mu l} {\rm 1\mskip-5mu l}}}
\newcommand{\crit}{\mathrm{c}}
\renewcommand{\S}{\Sigma}
\newcommand{\s}{\sigma}
\renewcommand{\b}{\beta}
\newcommand{\oo}{\infty}
\newcommand{\la}{\lambda}
\newcommand{\EE}{\mathbb{E}}
\newcommand{\cD}{\mathcal{D}}
\newcommand{\ul}{\underline}
\newcommand{\one}{\hbox{\rm 1\kern-.27em I}}
\newcommand{\xy}{\textsc{xy}}
\newcommand{\SSx}{\S^{(1)}}
\newcommand{\SSy}{\S^{(2)}}
\newcommand{\SSz}{\S^{(3)}}
\newcommand{\bh}{\mathbf{h}}
  \def\tagform@#1{\maketag@@@{\scriptsize{(#1)}\@@italiccorr}}
\renewcommand{\eqref}[1]{(\ref{#1})}
\begin{document}


\title[Quantum spins and random loops]
{Quantum spins and random loops \\
 on the complete graph}

\author{Jakob E. Bj\"ornberg}
\address{Department of Mathematics,
Chalmers University of Technology and the University of Gothenburg,
Sweden}
\email{jakob.bjornberg@gu.se}

\author{J\"urg Fr\"ohlich}
\address{Institut f\"ur Theoretische Physik, ETH Z\"urich, Wolfgang-Pauli-Str.\ 27,
8093 Z\"urich, Switzerland}
\email{juerg@phys.ethz.ch}

\author{Daniel Ueltschi}
\address{Department of Mathematics, University of Warwick,
Coventry, CV4 7AL, United Kingdom}
\email{daniel@ueltschi.org}

\subjclass{60K35, 82B10, 82B20, 82B26, 82B31}

\keywords{quantum spins, complete graph, phase transition, 
Poisson--Dirichlet distribution}

\begin{abstract}
We present a systematic analysis of quantum Heisenberg-, \xy- and interchange 
models on the complete graph. These models exhibit phase transitions accompanied 
by spontaneous symmetry breaking, which we study by calculating
the generating function of expectations of powers of the averaged spin density. 
Various critical exponents are determined.
Certain objects of the associated loop models are shown to have properties of
Poisson--Dirichlet distributions. 
\end{abstract}

\thanks{\copyright{} 2018 by the authors. This paper may be reproduced, in its
entirety, for non-commercial purposes.}

\maketitle

{\small\tableofcontents}

\section{Introduction}
\label{sec intro}

We study phase transitions accompanied by spontaneous symmetry
breaking in quantum spin systems with two-body interactions on the
complete graph. Among models analyzed in this paper are the quantum
Heisenberg ferromagnet, the quantum \xy -model, and the 
``quantum interchange model'' where interactions are expressed in terms of the
``transposition operator''. For these models, we investigate the
structure of the space, $\Psi_{\beta}$, of extremal Gibbs states at
inverse temperature $\beta=(kT)^{-1}$, for different values of
$\beta$. Following a suggestion of Thomas Spencer, we analyze the
generating function, $\Phi_{\beta}(h)$, of correlations of the
averaged spin density in the symmetric Gibbs state at inverse
temperature $\beta$, which depends on a symmetry-breaking external
magnetic field, $h$. The function $\Phi_{\beta}(h)$ can be viewed as a
Laplace transform of the measure d$\mu$ on $\Psi_{\beta}$ whose
barycenter is the symmetric Gibbs state at inverse temperature
$\beta$. Its usefulness lies in the fact that it sheds light on the
structure of the space of extremal Gibbs states. We
calculate $\Phi_{\beta}(h)$ explicitly for a class of (mean-field) spin
models defined on the complete graph, for all values of $\beta>0$. It
is expected that the dependence of $\Phi_{\beta}(h)$ on the external 
magnetic field $h$ is \textit{universal}, in the sense that it is equal to the 
one calculated for the corresponding models defined on the lattice $\bbZ^d$, 
provided the dimension $d$ satisfies $d\geq3$. Moreover, the structure of 
$\Psi_{\beta}$ is expected to be independent of $d$, for $d\geq 3$, and
identical to the one in the models on the complete graph.
Rigorous proofs, however, still elude us.

The quantum spin systems studied in this paper happen to admit random
loop representations, and the functions $\Phi_{\beta}(h)$ correspond
to characteristic functions of the lengths of random loops. It turns
out that these characteristic functions are equal to those of the
Poisson--Dirichlet distribution of random partitions. This is a strong
indication that the joint distribution of the lengths of the random
loops is indeed the Poisson--Dirichlet distribution.

Next, we briefly review the general theory of extremal-states
decompositions. (For more complete information we refer the reader to
the 1970 Les Houches lectures of the late O. E. Lanford III
\cite{Lan},
and
the books of R. B. Israel \cite{Isr} and B. Simon \cite{Sim}.)  The
set, $\mathcal{G}_{\beta}$, of infinite-volume Gibbs states at inverse
temperature $\beta$ forms a \textit{Choquet simplex}, i.e., a compact convex
subset of a normed space with the property that every point can be expressed
\textit{uniquely} as a convex combination of extreme
points, (i.e., as the barycenter of a probability measure supported on 
extreme points). As above, let $\Psi_\beta \subset \mathcal{G}_{\beta}$ denote
the space of extremal Gibbs states at inverse temperature
$\beta$. Henceforth we denote an extremal Gibbs state by $\langle
\cdot\rangle_{\psi}$, with $\psi \in \Psi_{\beta}$. Since
$\mathcal{G}_{\beta}$ is a Choquet simplex, an arbitrary state
$\langle \cdot \rangle \in \caG_\beta$ determines a unique probability
measure d$\mu$ on $\Psi_\beta$ such that 
\be\label{extremal state decomposition} 
\langle \cdot \rangle =
\int_{\Psi_\beta} \langle \cdot \rangle_\psi \, \dd\mu(\psi).  
\ee 
At
small values of $\beta$, i.e., high temperatures, the set $\caG_\beta$
of Gibbs states at inverse temperature $\beta$ contains a single
element, and the above decomposition is trivial. The situation tends
to be more interesting at low temperatures:  the set $\caG_\beta$ may
then contain many states, in which case one would like to characterise
the set $\Psi_\beta$ of extreme points of $\caG_\beta$.

In the models studied in this paper, the Hamiltonian is invariant
under a continuous group, $G$, of symmetries, and the set $\caG_\beta$
of Gibbs states at inverse temperature $\beta$ carries an action of
the group $G$. At low temperatures, this action tends to be
non-trivial; i.e., there are plenty of Gibbs states that are
\textit{not} invariant under the action of $G$ on
$\mathcal{G}_{\beta}$. This phenomenon is referred to as
\textit{``spontaneous symmetry breaking''}. For the models studied in
this paper, the space $\Psi_{\beta}$ of extremal Gibbs states is
expected to consist of a single orbit of an extremal state $\langle
\cdot \rangle_{\psi_0}, \psi_{0} \in \Psi_{\beta},$ under the action
of $G$ (this is clearly a special case of the general
situation). Then $\Psi_{\beta} \simeq G/H$, where $H$ is the largest
subgroup of $G$ leaving $\langle \cdot \rangle_{\psi_0}$ invariant,
and the \textit{symmetric} (i.e., $G$-invariant) state in $\caG_\beta$
can be obtained by averaging over the orbit of the state
$\langle\cdot\rangle_{\psi_0}$ under the action of the group $G$ using
the (uniform) Haar measure on $G$.

As announced above, we will follow a suggestion of T. Spencer and
attempt to characterise the set $\Psi_\beta$ by considering a Laplace
transform $\Phi_{\beta}(h)$ of the measure on $\Psi_{\beta}$ whose
barycenter is the symmetric state. We describe the general ideas of
our analysis for models of quantum spin systems defined on a lattice
$\mathbb{Z}^{d}, d\geq 3$; afterwards we will rigorously 
study similar models
defined on the complete graph. At each site $i\in \mathbb{Z}^{d}$,
there are $N$ operators $\vec{S}_{i}=(S^{(1)}, \dots, S^{(N)})$
describing a ``quantum spin'' located at the site $i$. We assume that
the symmetry group $G$ is represented on the algebra of
spin observables generated by the operators $\lbrace \vec{S}_{i} \mid
i \in \mathbb{Z}^{d} \rbrace$ by $^{*}$-automorphisms, $\alpha_{g}, g
\in G$, with the property that there exist $N\times N$- matrices $R(g), g \in G,$
acting transitively on the unit sphere $S^{N-1} \subset \mathbb{R}^{N}$ such that 
\be
\alpha_{g}
(\vec{S}\cdot \vec{n})= \vec{S}\cdot R(g)\vec{n}, \quad \forall
\vec{n} \in \mathbb{R}^{N}.  
\ee 
We assume that the states $\langle \cdot \rangle_{\psi},
\,\psi \in \Psi_{\beta},$ are invariant under lattice
translations.
Denoting by $\langle \cdot
\rangle_{\Lambda,\beta}$ the symmetric Gibbs state in a finite domain
$\Lambda \subset \bbZ^d$, and by $\Lambda \Uparrow \bbZ^d$ the
standard infinite-volume limit (in the sense of van Hove), we consider
the generating function 
\be
\label{heur sym breaking}
\begin{split} 
\lim_{\Lambda \Uparrow \bbZ^d} \big\langle \e{\frac
h{|\Lambda|} \sum_{i \in \Lambda} S_i^{(1)}}
\big\rangle_{\Lambda,\beta} &\overset{(?)}{=} \lim_{\Lambda \Uparrow
\bbZ^d} \lim_{\Lambda' \Uparrow \bbZ^d} \big\langle \e{\frac
h{|\Lambda|} \sum_{i \in \Lambda} S_i^{(1)}}
\big\rangle_{\Lambda',\beta} \\ 
&= \lim_{\Lambda \Uparrow \bbZ^d}
\int_{\Psi_\beta} \big\langle \e{\frac h{|\Lambda|} \sum_{i \in
\Lambda} S_i^{(1)}} \big\rangle_\psi \, \dd\mu(\psi) \\ 
&=\int_{\Psi_\beta} \e{h \langle S_0^{(1)} \rangle_\psi} \dd\mu(\psi).
\end{split} 
\ee 
Here, $S_0^{(1)}$ is the spin operator $S^{(1)}$
acting at the site $0$.  The first identity is expected to hold true
in great generality; but it appears to be difficult to prove it in
concrete models. The second identity holds under very general
assumptions, but the exact structure of the space $\Psi_\beta$ and the
properties of the measure d$\mu$ are only known for a restricted class
of models, such as the Ising- and the classical \xy -model. The third
identity usually follows from cluster properties of connected
correlations in extremal states.

Assuming that all equalities in \eqref{heur sym breaking} hold true,
we define the (``spin-density'') Laplace transform of the measure
d$\mu$ corresponding to the symmetric state by 
\be \label{spin-density-Lap}
\Phi_{\beta}(h) =
\lim_{\Lambda \Uparrow \bbZ^d} \big\langle \e{\frac h{|\Lambda|}
\sum_{i \in \Lambda} S_i^{(1)}} \big\rangle_{\Lambda,\beta} =
\int_{\Psi_\beta} \e{h \langle S_0^{(1)} \rangle_\psi} \dd\mu(\psi).
\ee
The action of $G$ on the space $\mathcal{G}_{\beta}$ of 
Gibbs states is given by
$$\langle\cdot \rangle \mapsto \langle \cdot \rangle^{g}, \, \text{ where  }\, \langle A\rangle^{g}:=\langle \alpha_{g^{-1}}(A) \rangle,$$ 
for an arbitrary spin observable $A$.  As mentioned above, we will
consider models for which it is expected that $\Psi_{\beta}$ is the
orbit of a \textit{single} extremal state, $\langle \cdot
\rangle_{\psi_0}$; i.e., given $\psi \in \Psi_{\beta}$, there exists
an element $g(\psi) \in G$ such that
\be
\langle \cdot \rangle_{\psi}= \langle \cdot \rangle_{\psi_0}^{g(\psi)},
\ee
where $g(\psi)$ is unique modulo the stabilizer subgroup $H$ of
$\langle \cdot \rangle_{\psi_0}$. Then we have that 
\be
\big\langle
\vec{S}_{0} \big\rangle_\psi \cdot \vec{e}= \big\langle
\alpha_{g(\psi)^{-1}} (\vec{S}_{0}\cdot \vec{e}) \big\rangle_{\psi_0}
= \big\langle \vec{S}_{0} \big\rangle_{\psi_0}\cdot R(g(\psi)^{-1})
\vec{e}.  
\ee 
Defining the magnetisation 
as $\vec{m}_{d}(\beta) =
\langle \vec{S}_{0} \rangle_{\psi_0}$, we find that the
spin-density Laplace transform \eqref{spin-density-Lap}
 is given by 
\be
\Phi_{\beta}(h) =
\int_{\Psi_\beta} e^{h\,\vec{m}_{d}(\beta)\cdot R(g(\psi)^{-1})
\vec{e}_1} \, \dd\mu(\psi),  
\ee
where $\vec{e}_{1}$ is the unit vector in the $1$-direction in $\mathbb{R}^{N}$; (actually, $\vec{e}_{1}$ can be replaced by an arbitrary unit vector in $\mathbb{R}^{N}$).

In this paper we study a variety of quantum spin systems for which we will
calculate the function $\Phi_{\beta}(h)$ in two different ways:
\begin{enumerate}[leftmargin=*]
\item For an explicit class of models defined on the complete graph, we are able to 
calculate the function $\Phi_{\beta}(h)$ explicitly and rigorously.
\item On the basis of some assumptions on the structure of the set $\Psi_\beta$ of extremal
Gibbs states and on the matrices $R(g), \, g\in G,$ that we will not justify rigorously, we are able to determine $\Phi_{\beta}(h)$ using
\eqref{heur sym breaking}.
\end{enumerate} 
We then observe that the two calculations yield identical results, representing support for the assumptions
underlying calculation (2).

\subsection*{Organization of the paper}

In Section \ref{sec results} we provide precise statements of our
results and verify that they are consistent with the
heuristics captured in Eq.\ \eqref{heur sym breaking}.  In Section \ref{sec loops} we
describe (known) representations of the spin systems considered in this paper in terms of 
random loops;  we then discuss probabilistic
interpretations of our results and relate them to the
Poisson--Dirichlet distribution.  In Sections 4 through 7, we present
proofs of our results. Some auxiliary calculations and arguments are collected in
four appendices.

\section{Setting and results}
\label{sec results}

In this section we describe the precise setting underlying the
analysis presented in this paper. Rigorous calculations will be
limited to quantum models on the complete graph.

Let $n \in \bbN$ be the number of sites, and let $S \in \frac12 \bbN$
be the spin quantum number. The state space of a model of quantum
spins of spin $S$ located at the sites $\lbrace 1, \dots, n \rbrace$
is the Hilbert space $\caH_n = (\bbC^{2S+1})^{\otimes n}$. The usual
spin operators acting on $\mathcal{H}_{n}$ are denoted by
$\vec{S}_{j}=(S_j^{(1)}, S_j^{(2)}, S_j^{(3)})$, with $1 \leq j \leq
n$. They obey the commutation relations 
\be
[S_j^{(1)},S_k^{(2)}] =
\ii \, \delta_{jk} S_j^{(3)}, 
\ee 
with further commutation relations
obtained by cyclic permutations of 1,2,3; furthermore, 
\be
(S_j^{(1)})^2 + (S_j^{(2)})^2 + (S_j^{(3)})^2 = S(S+1) {\bf{1}}.  
\ee
The Hamiltonian, $H_{n,\Delta}^{\rm Heis}$, of the quantum Heisenberg
model is given by 
\be
H_{n,\Delta}^{\rm Heis} = -\frac2n \sum_{1 \leq
i < j \leq n} \bigl( S_i^{(1)} S_j^{(1)} + S_i^{(2)} S_j^{(2)} +
\Delta S_i^{(3)} S_j^{(3)} \bigr)\,, \qquad  \Delta \in [-1,1].
\ee
The
value $\Delta=0$ corresponds to the \xy-model, and $\Delta=1$
corresponds to the usual Heisenberg ferromagnet. By $\langle \cdot
\rangle^{\rm Heis}_{n,\beta,\Delta}$ we denote the corresponding Gibbs
state 
\be
\langle \cdot \rangle^{\rm Heis}_{n,\beta,\Delta} =
\frac1{\Tr[ \e{-\beta H_{n,\Delta}^{\rm Heis}}]} \Tr[ \cdot \e{-\beta
H_{n,\Delta}^{\rm Heis}}].  
\ee
The Hamiltonian of the quantum interchange model is chosen to be 
\be
H_n^{\rm int} = -\frac1n \sum_{1 \leq i < j \leq n} T_{i,j}\,, 
\ee
where the operators $T_{i,j}$ are the transposition operators defined
by 
\be
T_{i,j} \, |\varphi_1 \rangle \dots \otimes |\varphi_i \rangle
\dots \otimes |\varphi_j \rangle \dots \otimes |\varphi_n \rangle =
|\varphi_1 \rangle \dots \otimes |\varphi_j \rangle \dots \otimes
|\varphi_i \rangle \dots \otimes |\varphi_n \rangle\,,
\ee
where the vectors $\vert \varphi_{i} \rangle$ belong to the space $\mathbb{C}^{2S+1}$,
for all $i=1,2,\dots,n$.
The transposition operators are invariant under unitary transformations 
of $\mathbb{C}^{2S+1}$ and
can be expressed using spin operators; see \cite{Nac} or
\cite[Appendix~A]{Bjo2} for more details. Recall
that the eigenvalues of $(\vec S_i + \vec S_j)^2$ are given by $\lambda
(\lambda+1)$, with $\lambda = 0,1,\dots,2S$; hence the eigenvalues of
$2 \vec S_i \cdot \vec S_j$ are given by $\lambda (\lambda+1) - 2S
(S+1)$. Denoting by $P_\lambda$ the corresponding spectral projections
we find that 
\be
T_{i,j} = \sum_{\lambda=0}^{2S} (-1)^{\lambda+1}
P_\lambda = \sum_{\lambda=0}^{2S} (-1)^{2S-\lambda} \prod_{\lambda'
\neq \lambda} \frac{2 \vec S_i \cdot \vec S_j - \lambda' (\lambda'+1)
+ 2S (S+1)}{\lambda (\lambda+1) - \lambda' (\lambda'+1)}.  
\ee
It is
apparent that $T_{i,j}$ is a linear combination of $(\vec S_i \cdot
\vec S_j)^k$, with $k=0,1,\dots,2S$. One checks that 
\be
T_{i,j}
= \begin{cases} 2 \vec S_i \cdot \vec S_j + \tfrac12 {\bf 1} & \text{if
} S = \tfrac12, \\ (\vec S_i \cdot \vec S_j)^2 + \vec S_i \cdot \vec
S_j - {\bf  1} & \text{if } S=1. \end{cases} 
\ee
If $S=\frac12$ the
quantum interchange model is equivalent to the Heisenberg ferromagnet,
but this is not the case for other values of the spin quantum number
$S$. (The expressions for $T_{i,j}$, with $S \geq \frac32$, look
unappealing.) The Gibbs state of the quantum interchange model is
given by 
\be
\langle \cdot \rangle^{\rm int}_{n,\beta} = \frac1{\Tr
[\e{-\beta H_n^{\rm int}}]} \Tr[ \cdot  \e{-\beta H_n^{\rm int}}]\,.
\ee

\subsection{Heisenberg and \xy-models}

First we consider the Heisenberg model with $\Delta=1$ and arbitrary
spin $S \in \frac12 \bbN$. In order to define the spontaneous
magnetisation, we introduce a function $\eta : \bbR \to \bbR$ by setting
\be
\label{def psi} 
\eta(x) = \log \Big(\frac{\sinh (\frac{2S+1}2 x)}
{\sinh (\frac12 x)}\Big).  
\ee (At $x=0$ we define $\eta(0) = \log(2S+1)$.)
Its first and second derivatives are 
\be
\begin{split} &\eta'(x) = \tfrac{2S+1}2 
\coth (\tfrac{2S+1}2 x) -
\tfrac12 \coth (\tfrac12 x), \\ &\eta''(x) = \tfrac14
\frac{\sinh^2(\frac{2S+1}2 x) - (2S+1)^2\sinh^2(\frac12
x)}{\sinh^2(\frac{2S+1}2 x) \, \sinh^2(\frac12 x)}.
\end{split} 
\ee
Note that this function is smooth at $x=0$, where
$\eta'(0)=0$. The second derivative is positive, and $\eta'(\pm\infty)
= \pm S$, so that the equation 
\be\label{def x*} 
\eta'(x) = m, 
\ee 
has a unique solution for all $m \in
(-S,S)$. We denote this solution by $x^\star(m)$.  Lengthy
calculations yield 
\be\label{a*-der} 
x^\star(0) = 0\,; \qquad
\frac{\dd x^\star}{\dd m}(0) = \frac{3}{S^2 + S}\,; \qquad \frac{\dd^2
x^\star}{\dd m^2}(0) = 0\,.  
\ee 
Next, we define a function $g_{\beta}$ by
\be\label{def g ii} 
g_\beta(m) := \eta\bigl(
x^\star(m) \bigr) - mx^\star(m) + \beta m^2, \qquad m\in[0,S).  
\ee 
One finds that
\be\label{g-deri} 
g_\beta(0) = \log(2S+1); \qquad
g_\beta'(0) = 0; \quad \text{ and }\quad g_\beta''(0) = 2\beta - \frac{3}{S^2+S}.  
\ee
Let $m^\star(\beta) \in[0,S)$ be the maximiser of $g_\beta$. From
\eqref{g-deri} we infer that $m^\star(\beta) >0$ if and only if
$\beta$ is greater than the \textit{critical} inverse temperature $\beta_{c}$ given by
\be
 \beta_{\rm c} = \frac{3/2}{S^2+S}.  
\ee

It may be useful to note that, for $S=\frac{1}{2}$, the above definitions simplify
considerably:
\be\label{g beta} 
g_\beta(m) = \beta m^2 - (\tfrac12-m) \log (\tfrac12-m)
- (\tfrac12+m) \log (\tfrac12+m).  
\ee 
One easily checks that
$g_\beta'(0)=0$, $g_\beta'''(m)<0$ for all $m \in (0,\frac12)$, and
that $g_\beta''(0) = 2\beta-4$ is positive if and only if
$\beta>2$. It follows that the unique maximiser
$m^{\star}(\beta)$ is positive if and only if
$\beta>2$; see Fig.\ \ref{fig g}.  
For the symmetric spin-$\tfrac12$ Heisenberg model ($S=\tfrac12$ and $\Delta=1$),
the magnetisation
$m^\star(\beta)$ was first identified by T\'oth \cite{Toth1} and
Penrose \cite{Pen}.
(See also the recent paper \cite{alon-kozma} by Alon and Kozma.)

\bfig \centerline{\includegraphics[width=40mm]{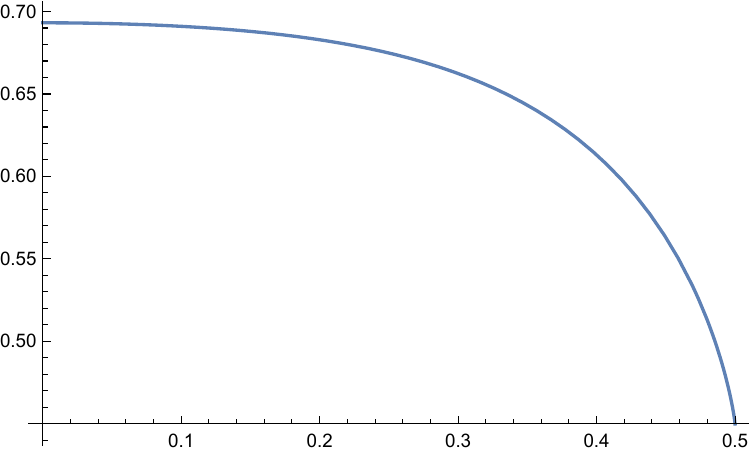}
\qquad\qquad\qquad \includegraphics[width=40mm]{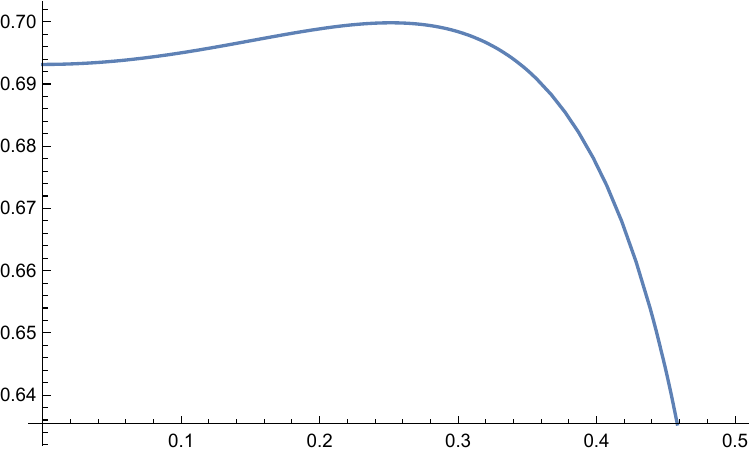}}
\caption{For $S=\tfrac12$,
 the function $g_\beta(m)$ with $\beta=1.8$ (left) and
$\beta=2.2$ (right). The maximiser $m^\star(\beta)$ is positive when
$\beta>2$.}
\label{fig g} \efig

\begin{theorem}[Isotropic Heisenberg model]
\label{spin S thm} 
For $\Delta=1$ and arbitrary $S\in\tfrac12\bbN$, we have
\[
\lim_{n\to\infty} \Bigl\langle \exp\Bigl\{ \frac hn \sum_{i=1}^n
S_i^{(1)} \Bigr\} \Bigr\rangle_{n,\beta,\Delta=1}^{\rm Heis} =
\frac{\sinh (hm^\star(\beta))}{hm^\star(\beta)},
\quad\forall\; h\in\bbC.
\]
\end{theorem}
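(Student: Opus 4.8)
The plan is to reduce the generating function to one involving only the total spin $\vec S_{\rm tot}=\sum_{i=1}^n\vec S_i$, decompose the state space into irreducible representations of $SU(2)$, and then apply Laplace's method. The starting point is the identity $\sum_{1\le i<j\le n}\vec S_i\cdot\vec S_j=\tfrac12\bigl((\vec S_{\rm tot})^2-nS(S+1){\bf 1}\bigr)$, which gives $H_{n,\Delta=1}^{\rm Heis}=-\tfrac1n(\vec S_{\rm tot})^2+S(S+1){\bf 1}$, hence $\e{-\beta H}=\e{-\beta S(S+1)}\,\e{(\beta/n)(\vec S_{\rm tot})^2}$. On the spin-$J$ sector, which occurs with multiplicity $d_{n,S}(J)$ in the decomposition of $\caH_n$, the Casimir $(\vec S_{\rm tot})^2$ acts as $J(J+1)$, and since each component of $\vec S_{\rm tot}$ commutes with it, the trace of $\e{(h/n)S^{(1)}_{\rm tot}}$ over that sector equals $\sum_{m'=-J}^{J}\e{(h/n)m'}=\sinh\!\bigl(\tfrac hn(J+\tfrac12)\bigr)\big/\sinh\!\bigl(\tfrac h{2n}\bigr)$. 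The constant $\e{-\beta S(S+1)}$ cancels, so
\[
\Bigl\langle\exp\Bigl\{\tfrac hn\sum_{i=1}^n S_i^{(1)}\Bigr\}\Bigr\rangle_{n,\beta,\Delta=1}^{\rm Heis}
=\frac{\sum_{J}d_{n,S}(J)\,\e{(\beta/n)J(J+1)}\,\sinh\!\bigl(\tfrac hn(J+\tfrac12)\bigr)\big/\sinh\!\bigl(\tfrac h{2n}\bigr)}{\sum_{J}d_{n,S}(J)\,\e{(\beta/n)J(J+1)}\,(2J+1)}.
\]

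Next I would rewrite this as $\int f_n\,\dd\nu_n$, where $\nu_n$ is the probability measure on $\{J/n\}\subseteq[0,S]$ with $\nu_n(\{J/n\})\propto d_{n,S}(J)\,\e{(\beta/n)J(J+1)}(2J+1)$, and $f_n(J/n)=\sinh\!\bigl(\tfrac hn(J+\tfrac12)\bigr)\big/\bigl[(2J+1)\sinh\!\bigl(\tfrac h{2n}\bigr)\bigr]$. Because $(2J+1)\sinh(\tfrac h{2n})=\tfrac hn(J+\tfrac12)\cdot\bigl(\sinh(\tfrac h{2n})\big/\tfrac h{2n}\bigr)$, a short computation shows that $f_n(J/n)\to\sinh(hm)/(hm)$ uniformly in $m=J/n\in[0,S]$ (the limit being $1$ at $m=0$), and that $\sup_{n,J}|f_n(J/n)|<\infty$ for $h$ in any fixed compact subset of $\bbC$. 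It therefore suffices to show that $\nu_n$ concentrates at $m^\star(\beta)$, i.e.\ $\nu_n(\{|m-m^\star(\beta)|>\eps\})\to0$ for every $\eps>0$; together with the uniform convergence and boundedness of $f_n$ this yields $\int f_n\,\dd\nu_n\to\sinh(hm^\star(\beta))/(hm^\star(\beta))$ for every $h\in\bbC$, the right-hand side being interpreted as $1$ when $m^\star(\beta)=0$.

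The concentration of $\nu_n$ is a large-deviations statement whose key ingredient is the exponential asymptotics of the multiplicities. Writing $N_{n,S}(J)=\#\{\vec k\in\{-S,\dots,S\}^n:\sum_i k_i=J\}$ for the dimension of the weight-$J$ subspace of $\caH_n$, one has $\sum_J N_{n,S}(J)\,z^J=\bigl(\sum_{k=-S}^{S}z^k\bigr)^n$, so the cumulant generating function of a single coordinate is exactly the function $\eta$ of \eqref{def psi}. A saddle-point (Cram\'er) estimate then gives $\tfrac1n\log N_{n,S}(mn)\to\eta(x^\star(m))-m\,x^\star(m)$, with $x^\star$ as in \eqref{def x*}, and since $d_{n,S}(J)=N_{n,S}(J)-N_{n,S}(J+1)$ --- a subtraction that changes only subexponential factors for $m>0$ and leaves a positive contribution of the same exponential order at $m=0$ --- the same limit holds for $\tfrac1n\log d_{n,S}(mn)$. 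Hence $\tfrac1n\log\bigl(\nu_n\text{-weight at }m\bigr)\to\eta(x^\star(m))-m\,x^\star(m)+\beta m^2=g_\beta(m)$, the function of \eqref{def g ii}, and since $g_\beta$ has a unique maximiser $m^\star(\beta)$ on $[0,S)$, the usual Laplace argument delivers the concentration.

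The step I expect to be the main obstacle is making the multiplicity asymptotics rigorous and uniform --- essentially a uniform large-deviation (local-limit) estimate for $N_{n,S}$ together with control of the subtraction $N_{n,S}(J)-N_{n,S}(J+1)$, which is delicate near $m=0$ --- and the verification that $g_\beta$ has a unique maximiser on $[0,S)$, which for $S=\tfrac12$ follows from \eqref{g beta}, \eqref{g-deri} and $g_\beta'''<0$ on $(0,\tfrac12)$, and should admit an analogous argument in general. One may alternatively organise the computation through a Hubbard--Stratonovich linearisation of $\e{(\beta/n)(\vec S_{\rm tot})^2}$ by a Gaussian integral over $\bbR^3$, which makes the answer $\sinh(\cdot)/(\cdot)$ transparent as an average over the sphere of magnetisation directions; this is, however, not an exact operator identity, since the components of $\vec S_{\rm tot}$ do not commute, and it requires essentially the same error estimates. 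For complex $h$ one may also simply invoke Montel's and Vitali's theorems: each term of the sequence is entire in $h$ and, being $\Tr[\rho_{n,\beta}\,\e{(h/n)S^{(1)}_{\rm tot}}]$ for the Gibbs density matrix $\rho_{n,\beta}$, is bounded by $\e{S|h|}$ uniformly in $n$ on compacts, so convergence on $\bbR$ propagates to all of $\bbC$.
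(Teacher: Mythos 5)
Your proposal follows essentially the same route as the paper's proof: decompose $\caH_n$ into $SU(2)$-irreducibles via the total spin, express the Gibbs average sector-by-sector so that $\sum_{M=-J}^{J}\e{hM/n}$ produces the $\sinh$ kernel, obtain the $\tfrac1n\log$-asymptotics of the multiplicities $L_{J,n}-L_{J+1,n}$ by a Cauchy-integral/saddle-point (Cram\'er) estimate with cumulant generating function $\eta$, and conclude by Laplace concentration at the maximiser $m^\star(\beta)$ of $g_\beta$ (the paper packages the last step as Lemma~\ref{conv_lem}, your concentration-of-$\nu_n$ formulation is equivalent). The only cosmetic difference is that you estimate $N_{n,S}(J)$ and then argue the subtraction changes only subexponential factors, whereas the paper's Proposition~\ref{prop asymptotics L} runs the saddle-point directly on the difference, producing the explicit prefactor $(1-\e{-x^\star(m)})$; both yield the same uniform exponential rate.
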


The proof of this theorem can be found in Section \ref{heis-sec}.

Concerning symmetry breaking, we expect that the extremal states are
labeled by $\vec a \in \bbS^2$. (The 2-sphere is the orbit of any
point on $\Psi_{\beta}$ under the action of the symmetry group
$SO(3)$, and $H=SO(2)$). For $\vec a \in \bbS^2$ we introduce the following Gibbs states:
\be
\label{def extr}
\begin{split}
\langle \cdot \rangle_{\vec a,h} &=
\lim_{n\to\infty} \frac{\Tr[ \cdot \e{-\beta H_{n,\Delta}^{\rm Heis} +
h \sum_{i=1}^n \vec a \cdot \vec S_i}]}
{\Tr[ \e{-\beta H_{n,\Delta}^{\rm
Heis} + h \sum_{i=1}^n \vec a \cdot \vec S_i}]}, \\
\langle \cdot \rangle_{\vec a} &= \lim_{h \downarrow 0}
\langle \cdot \rangle_{\vec a,h}.
\end{split}
\ee
For $h\neq0$ the states $\langle \cdot \rangle_{\vec a,h}$ are extremal by an extension of the Lee-Yang theorem \cite{Asa,SF}; it is reasonable to expect that the limiting states $\langle \cdot \rangle_{\vec a}$ are also extremal, although this has not been proved.
(A non-trivial technical issue is whether
the limits in \eqref{def extr} exist; but we do not worry about it in this discussion.) 
Defining $m^{\star}(\beta) = \langle S_i^{(1)}\rangle_{\vec e_1}$,
we have that 
\be
\langle S_i^{(1)}
\rangle_{\vec a} = \langle \vec a \cdot \vec S_i \rangle_{\vec e_1} =
a_1 \langle S_i^{(1)} \rangle_{\vec e_1} = a_1 m^\star(\beta)\,,
\ee
where $\vec{e}_{1}=(1,0,0)^{T}$ is the unit vector in the $1$-direction. 
Assuming that \eqref{heur sym breaking} is correct, 
we expect that 
\be \label{extremal-states}
\lim_{n\to\infty} \Big\langle \e{\frac hn \sum_{i=1}^n S_i^{(1)}}
\Big\rangle_{n,\beta,\Delta=1}^{\rm Heis} 
= \frac1{4\pi}
\int_{\bbS^2} \e{h m^\star(\beta) a_1} \dd\vec a
\equiv \frac{\sinh
(hm^\star(\beta))}{hm^\star(\beta)}.  
\ee 
The right side of \eqref{extremal-states} coincides with the expression in Theorem \ref{spin S thm}; 
so \eqref{heur sym breaking} is expected to be correct for this model.

Our next result concerns the 
Heisenberg Hamiltonians with $\Delta \in [-1,1)$. 
Models with these Hamiltonians behave
just like the \xy-model, ($\Delta=0$). For models on the complete
graph,  this remains true also for $\Delta=-1$. 
(However, on a bipartite graph (lattice), the model with $\Delta=-1$ is
unitarily equivalent to the quantum Heisenberg antiferromagnet whose properties are 
different from those of the \xy-model.)
We let $m^\star(\b)$ be the maximiser of the function $g_\b$ in 
\eqref{def g ii}, as before.
Let $I_0(x) = \sum_{k\geq0} \frac1{(k!)^2} (\frac x2)^{2k}$ be the
modified Bessel function.

\begin{theorem}[Anisotropic Heisenberg model]
\label{thm XY} 
For $\Delta \in [-1,1)$ and $S\geq\tfrac12$, we have that
\[
\lim_{n\to\infty} \Big\langle \exp\Bigl\{ \frac hn \sum_{i=1}^n
S_i^{(1)} \Bigr\} \Big\rangle^{\rm Heis}_{n,\beta,\Delta} 
= I_0 \bigl( h m^\star(\beta) \bigr)\,, \quad \forall h \in \bbC\,.
\]
\end{theorem}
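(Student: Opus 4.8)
The plan is to compute the generating function by reducing the trace over the quantum system to a classical variational problem, exactly as for the isotropic model (Theorem~\ref{spin S thm}) but now tracking the fact that the external field couples only to the $S^{(1)}$-component, which lies in the ``easy plane'' of the anisotropic interaction. First I would introduce a Gaussian (Hubbard--Stratonovich) transformation to decouple the quadratic interaction: writing the interaction term $-\tfrac{1}{n}\sum_{i<j}(S_i^{(1)}S_j^{(1)}+S_i^{(2)}S_j^{(2)})$ (for the \xy-case $\Delta=0$; the terms with $\Delta S_i^{(3)}S_j^{(3)}$ and the diagonal corrections will be handled separately and shown not to contribute in the limit) in terms of the block spin $\vec{\Sigma}_n=\sum_{i=1}^n\vec{S}_i$, one has $-\tfrac1n\sum_{i<j}(\cdot)=-\tfrac1{2n}\bigl((\Sigma_n^{(1)})^2+(\Sigma_n^{(2)})^2\bigr)+\text{(lower order)}$, and this can be linearised by integrating against a two-dimensional Gaussian vector $\vec{u}=(u_1,u_2)$. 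Because the operators $\Sigma_n^{(1)},\Sigma_n^{(2)}$ do not commute, the Trotter/Duhamel expansion is genuinely needed; alternatively one can note that on the complete graph the relevant large-deviation rate function depends only on the norm of the transverse magnetisation, which is what produces the Bessel function rather than the $\sinh$.

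Next I would carry out the asymptotic (Laplace/saddle-point) analysis of the resulting integral. After the Gaussian decoupling the partition function becomes an integral over $\vec{u}\in\bbR^2$ of $\exp\{-\tfrac{n}{2\beta}|\vec{u}|^2\}\,\bigl(\Tr_{\bbC^{2S+1}}\e{\vec{u}\cdot(S^{(1)},S^{(2)})}\bigr)^n$, and by rotational invariance of the single-site trace in the $(1,2)$-plane this equals an integral of $\exp\{n(\eta(|\vec u|)-\tfrac1{2\beta}|\vec u|^2)\}$, where $\eta$ is precisely the function in~\eqref{def psi} (since $\Tr\e{t S^{(1)}}$ has the same value $\sinh(\tfrac{2S+1}2 t)/\sinh(\tfrac12 t)$ as for $S^{(3)}$). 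Changing variables from $\vec u$ to the magnetisation $m$ via $\eta'(|\vec u|)=|\vec u|$-conjugate, the exponent becomes $g_\beta(m)$ of~\eqref{def g ii}, so the integral concentrates on the circle $|\vec m|=m^\star(\beta)$. Inserting the field term $\e{\frac hn\Sigma_n^{(1)}}$ shifts $u_1\mapsto u_1+h/n$; expanding, the leading correction is $\e{h\cdot(\text{first component of the maximising }\vec u)}$, and averaging the direction of the maximiser uniformly over the circle $\{|\vec m|=m^\star(\beta)\}$ gives $\frac1{2\pi}\int_0^{2\pi}\e{h m^\star(\beta)\cos\theta}\,\dd\theta=I_0(hm^\star(\beta))$, which is the claimed answer.

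The main obstacle I expect is making the Gaussian decoupling and the saddle-point analysis rigorous in the \emph{non-commutative} setting — one cannot naively exponentiate $\vec u\cdot(\Sigma_n^{(1)},\Sigma_n^{(2)})$ and the $\Sz$-terms together. I would handle this either by a Trotter product formula combined with the Berezin--Lieb or Klein-type inequalities to sandwich the quantum partition function between classical ones with the same rate function $g_\beta$, or (cleaner on the complete graph) by diagonalising in the total-spin basis: decompose $\caH_n$ into irreducibles of the diagonal $SU(2)$-action, note that $H_{n,\Delta}^{\rm Heis}$ restricted to the spin-$J$ sector is a bounded perturbation (of order $O(1)$, uniformly) of $-\tfrac1n(\vec\Sigma_n)^2$ plus the anisotropy, and use the known multiplicities together with the fact that the field term is $SU(2)$-covariant. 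The anisotropy $\Delta S_i^{(3)}S_j^{(3)}$ for $\Delta\in[-1,1)$ must be shown to be subcritical in the transverse direction: since $|\Delta|<1$ strictly, the $\Sz$-fluctuations contribute a strictly smaller quadratic coefficient and hence do not alter the location $m^\star(\beta)$ of the transverse saddle, nor the value of the limit — this is the one place where the hypothesis $\Delta\neq 1$ is essential and where the argument genuinely diverges from that of Theorem~\ref{spin S thm}. Verifying that the $\Delta=-1$ case on the complete graph still behaves like the \xy-model (unlike on bipartite lattices) amounts to observing that the sign of the $\Sz\Sz$ term is irrelevant once it is subcritical, since only its magnitude enters the Gaussian variance.
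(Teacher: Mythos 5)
Your proposal is heuristically correct but it is not a proof, and it diverges from the paper's actual argument in ways that matter.

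The main route you propose (Hubbard--Stratonovich decoupling) is genuinely different from the paper's, and it has a gap that you acknowledge but do not close. Because $\SSx$ and $\SSy$ do not commute, $e^{\frac{\beta}{n}[(\SSx)^2+(\SSy)^2]}$ cannot be linearised by a single two-dimensional Gaussian integral; a Trotter product or a Duhamel expansion is needed, and after that the integrand is a time-ordered object, not a product of single-site traces. This is not a formality: the "rotational invariance of the single-site trace" you invoke is a property of $\Tr_{\bbC^{2S+1}}e^{\vec{u}\cdot(S^{(1)},S^{(2)})}$, which is indeed what appears after decoupling a classical model, but the quantum decoupling produces a path-ordered exponential, and identifying its large-$n$ rate function with $\eta(|\vec{u}|)-\frac{1}{2\beta}|\vec{u}|^2$ requires an argument (this is essentially where the Berezin--Lieb sandwich would come in, but you only name it). Similarly, your claim that inserting $e^{\frac{h}{n}\SSx}$ amounts to a shift $u_1\mapsto u_1+h/n$ would be clean if the decoupled action were exactly linear in $\SSx$ with no residual non-commutative structure; inside a Trotterised integral this needs justification. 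The step "averaging the direction of the maximiser over the circle gives $I_0$" is the right picture of where the Bessel function comes from, but it presupposes that fluctuations of the direction of the saddle equilibrate to the uniform measure on the circle, which is precisely the content one needs to prove.

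You do mention the alternative of decomposing into irreducibles of the total-spin $SU(2)$ action and using the multiplicities $L_{J,n}-L_{J+1,n}$, and you correctly judge this to be cleaner on the complete graph --- it is in fact the route the paper takes. But you do not carry out the decisive step. In that basis the whole problem reduces to estimating
$$
\langle J,M,\alpha_0 \,|\, e^{\frac{h}{n}\SSx} \,|\, J,M,\alpha_0\rangle
$$
in the regime $J/n\approx m^\star$, $M/n\approx 0$. The paper does this by expanding $e^{\frac{h}{n}\SSx}=\sum_k\frac{1}{k!}(\tfrac{h}{2})^k(\tfrac1n\Sigma^++\tfrac1n\Sigma^-)^k$, counting the $\binom{k}{k/2}$ admissible $\pm$-sequences (only even $k$ survive, which is where the restriction to even terms --- and ultimately the $\frac{1}{(j!)^2}$ structure of $I_0$ --- originates), and showing uniform convergence of each factor $\sqrt{J(J+1)/n^2-\cdots}$ to $m^\star$. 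That quantitative control over the matrix element, together with the reduction lemma that lets one restrict to $J/n$ near $m^\star$ and $M/n$ near $0$, is the substance of the proof and is absent from your sketch. Your discussion of why $\Delta\in[-1,1)$ forces $M\approx 0$ at the saddle (the term $-(1-\Delta)\frac{\beta}{n}M^2$ penalises large $|M|$) is correct and is essentially the same observation the paper uses to pass from the denominator to $g_\beta(J/n)$ with only an $o(1)$ correction.

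In short: your heuristic identifies the correct mechanism (the saddle lives on the circle $|\vec m|=m^\star$ in the $1$--$2$ plane, and the circle average of $e^{hm^\star\cos\theta}$ is $I_0(hm^\star)$), and your fallback strategy coincides with the paper's actual method, but the non-commutative technical work --- the Trotterised decoupling if you go the Hubbard--Stratonovich route, or the ladder-operator expansion of the matrix elements if you go the spectral route --- is not done, so the proposal stops short of a proof.
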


The proof of this theorem can be found in Section \ref{xy-pf-sec}. 
This theorem confirms that the phase transition signals
the onset of spontaneous magnetisation in the 1--2 plane. 
We now introduce
\be
\langle \cdot \rangle_{\vec a} = \lim_{h \downarrow 0}
\lim_{n\to\infty} \frac{\Tr[ \cdot \e{-\beta H_{n,\Delta}^{\rm Heis} +
h \sum_{i=1}^n \vec a \cdot \vec S_i}]}
{\Tr[ \e{-\beta H_{n,\Delta}^{\rm
Heis} + h \sum_{i=1}^n \vec a \cdot \vec S_i}]}
\,, \quad \text{for }\,
\vec a \perp \vec{e}_{3}, \,\,\vert \vec{a} \vert=1\,. 
\ee
As in \eqref{def extr}, these states are limits of extremal states by the Lee-Yang theory, so they should also be extremal.
With 
$m^\star(\beta) = \langle S_i^{(1)}\rangle_{\vec e_1}$
as before,
according to the heuristics in \eqref{heur sym breaking}, one expects
that 
\be
\lim_{n\to\infty} \Big\langle \e{\frac hn \sum_{i=1}^n
S_i^{(1)}} \Big\rangle_{n,\beta,\Delta}^{\rm Heis} 
= \frac1{2\pi}
\int_{\bbS^1} \e{h m^\star(\beta) a_1} \dd\vec a \equiv 
I_0 \bigr( h m^\star(\beta)
\bigl).  
\ee
Since we get exactly what is stated in Theorem \ref{thm XY}, 
we are tempted to conclude that the above heuristics are valid.

\subsection{Quantum interchange model}

We turn to the quantum interchange model. Recall that, for
$S=\frac12$, this model is equivalent to the Heisenberg model.
To avoid overlap with Theorem \ref{spin S thm},  for this model we
consider only $S\geq1$.  General values of $S$ are
interesting because the pattern of symmetry breaking changes; but the
calculations become considerably more difficult.

In order to define the object that plays the r\^ole of the
magnetisation, let $\phi_\beta$ be the function $ [0,1]^{2S+1} \to
\bbR$ given by
\be
\phi_\beta(x_1,\dots,x_{2S+1}) = \frac\beta2 \Bigl(
\sum_{i=1}^{2S+1} x_i^2 - 1 \Bigr) - \sum_{i=1}^{2S+1} x_i \log x_i.
\ee
We look for maximisers $(x_1^\star,\dots,x_{2S+1}^\star)$ of $\phi_\beta$
under the condition $\sum_i x_i = 1$ and $x_1 \geq x_2 \geq \dots
\geq x_{2S+1}$. It was understood and proven by Bj\"ornberg, see
\cite[Theorem 4.2]{Bjo2}, that the answer involves the critical
parameter 
\be
\beta_{\rm c}(S) =  \frac{4S}{2S-1} \log (2S), \qquad(S \geq 1). 
\ee
The maximiser is unique and satisfies 
\be
\begin{split} &x_1^\star = \dots = x_{2S+1}^\star = \tfrac1{2S+1}, \qquad
\text{if } \beta < \beta_{\rm c}(S), \\ 
&x_1^\star > x_2^\star = \dots =
x_{2S+1}^\star, \quad\qquad \text{if } \beta \geq \beta_{\rm c}(S)
\end{split} 
\ee
 (see Appendix \ref{phi-app}). 
The analogue of the magnetisation is defined as 
\be
z^\star(\beta) = \frac{(2S+1) x_1^\star - 1}{2S} = x_1^\star - x_2^\star.  
\ee
In the following theorem, $R$ denotes the function
\be\label{R-def}
R(h_1,\dotsc,h_{2S+1};x_1,\dotsc,x_{2S+1})=
 \det\big[e^{h_ix_j}\big]_{i,j=1}^{2S+1}
\prod_{1\leq i<j\leq 2S+1} \frac{j-i}{(h_i-h_j)(x_i-x_j)}
\ee
and if $A$ is an arbitrary $(2S+1)\times (2S+1)$ matrix then 
$A_i := \bbone \otimes \dots \otimes A \otimes \dots \otimes
\bbone$, where $A$ occupies the $i$th factor.
Note that $R$ is continuous: in
the numerator, $\det\big[e^{h_ix_j}\big]_{i,j=1}^\theta$
is analytic in the variables $h_i$ and $x_i$, and it is anti-symmetric under
permutations of the arguments $h_i$ and $x_i$,  hence it vanishes whenever
two or more of the $h_i$'s or of the $x_i$'s coincide.

\begin{theorem}[Spin-$S$ quantum interchange model]
\label{thm int} For an arbitrary 
$(2S+1)\times (2S+1)$ matrix $A$, with eigenvalues
$h_1,\dotsc,h_{2S+1}\in\bbC$,
we have that
\[
\lim_{n\to\infty} \Big\langle \exp \Bigl\{
\frac 1n \sum_{i=1}^n A_i \Bigr\} \Big\rangle_{n,\beta}^{\rm int} 
=R(h_1,\dotsc,h_{2S+1};x^\star_1,\dotsc,x^\star_{2S+1}).
\]
\end{theorem}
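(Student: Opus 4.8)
The plan is to diagonalise the Gibbs weight and the observable simultaneously by Schur--Weyl duality, reduce the expectation to a ratio of sums over Young diagrams, and then apply Laplace's method. Set $d := 2S+1$, so $\caH_n = (\bbC^d)^{\otimes n}$. First I would recall that the transposition operators $T_{i,j}$ realise the permutation action of the symmetric group $\mathfrak{S}_n$ on $\caH_n$, which commutes with the diagonal action $g \mapsto g^{\otimes n}$ of $\mathrm{GL}_d$; hence $\caH_n \cong \bigoplus_\lambda V_\lambda \otimes W_\lambda$, the sum over $\lambda \vdash n$ with $\ell(\lambda) \le d$, where $V_\lambda$ is the irreducible $\mathrm{GL}_d$-module and $W_\lambda$ the Specht module. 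Now $\sum_{i<j} T_{i,j}$ is the image of the central element $\sum_{i<j}(i\,j) \in \bbC[\mathfrak{S}_n]$, which acts on $W_\lambda$ as the scalar $c(\lambda) := \sum_{(i,j)\in\lambda}(j-i)$ (the sum of contents, i.e.\ the eigenvalue of the Jucys--Murphy sum), so $\e{-\beta H_n^{\rm int}}$ acts on $V_\lambda \otimes W_\lambda$ as $\e{\frac{\beta}{n}c(\lambda)}$ times the identity. Since the $A_i$ commute, $\exp\{\tfrac1n\sum_i A_i\} = (\e{A/n})^{\otimes n}$, acting on $V_\lambda \otimes W_\lambda$ as $\rho_\lambda(\e{A/n}) \otimes \mathrm{Id}$ ($\rho_\lambda$ the representation on $V_\lambda$), whose trace over $V_\lambda$ is $s_\lambda(\e{h_1/n},\dots,\e{h_d/n})$ (the Schur polynomial; the eigenvalues of $\e{A/n}$ are $\e{h_1/n},\dots,\e{h_d/n}$ with multiplicity). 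Taking traces over $\caH_n$ and using $s_\lambda(1,\dots,1) = \dim V_\lambda$, I arrive at the exact identity, valid for every $n$,
\[
\Big\langle \exp \Bigl\{ \frac 1n \sum_{i=1}^n A_i \Bigr\} \Big\rangle_{n,\beta}^{\rm int}
= \frac{\displaystyle\sum_\lambda \e{\frac{\beta}{n}c(\lambda)}\, s_\lambda\big(\e{h_1/n},\dots,\e{h_d/n}\big)\, \dim W_\lambda}
       {\displaystyle\sum_\lambda \e{\frac{\beta}{n}c(\lambda)}\, (\dim V_\lambda)\, \dim W_\lambda},
\]
the sums running over $\lambda \vdash n$ with $\ell(\lambda) \le d$.

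Next I would identify the large-$n$ behaviour of the summands. Writing $\ell_i := \lambda_i + d - i$ (strictly decreasing integers) and $\tilde x(\lambda) := (\ell_1/n,\dots,\ell_d/n)$, the bialternant formula for $s_\lambda$ together with $\e{h_i/n} - \e{h_j/n} = \tfrac{h_i - h_j}{n}(1 + O(1/n))$ shows, for distinct $h_i$, that
\[
\frac{s_\lambda\big(\e{h_1/n},\dots,\e{h_d/n}\big)}{\dim V_\lambda} = c_n(h)\, R\big(h_1,\dots,h_d;\tilde x(\lambda)\big),
\]
where $c_n(h)$ depends on $h$ but not on $\lambda$ and $c_n(h) \to 1$ as $n\to\infty$; since both sides are continuous in $h$ (using Lemma~\ref{R-prop} for the right side), this extends to all $h$. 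As $\tilde x(\lambda) - \lambda/n = O(1/n)$, continuity of $R$ then gives $s_\lambda(\e{h_1/n},\dots,\e{h_d/n})/\dim V_\lambda \to R(h_1,\dots,h_d;x)$ whenever $\lambda/n \to x$, uniformly for $\lambda/n$ near any fixed $x$. In parallel, Stirling applied to the hook-length formula for $\dim W_\lambda$, the bounds $1 \le \dim V_\lambda \le n^{\binom d 2}$, and $c(\lambda) = \tfrac12\sum_i \lambda_i^2 + O(n)$ yield
\[
\log\Bigl( \e{\frac{\beta}{n}c(\lambda)}\, (\dim V_\lambda)\, \dim W_\lambda \Bigr) = n\Bigl( \phi_\beta(\lambda/n) + \tfrac{\beta}{2} \Bigr) + O(\log n),
\]
uniformly over all $\lambda$ with $\ell(\lambda) \le d$.

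The last step is a concentration argument. Let $\mu_n$ be the probability measure on these $\lambda$ with weights proportional to $\e{\frac{\beta}{n}c(\lambda)}(\dim V_\lambda)\dim W_\lambda$. Using the uniform estimate above, the fact that $x^\star$ is the \emph{unique} maximiser of $\phi_\beta$ on the ordered simplex $\{x_1 \ge \dots \ge x_d \ge 0,\ \sum_i x_i = 1\}$ (Bj\"ornberg \cite[Theorem~4.2]{Bjo2}; see also Appendix~\ref{phi-app}), and compactness, I would show that $\mu_n(\{|\lambda/n - x^\star| \ge \eps\}) \to 0$ for every $\eps>0$. The ratio above is the $\mu_n$-average of $r_n(\lambda) := s_\lambda(\e{h_1/n},\dots,\e{h_d/n})/\dim V_\lambda$, which is bounded uniformly in $\lambda$ and $n$ by $\e{\max_i|h_i|}$ (as $s_\lambda(y)$ is a sum of $\dim V_\lambda$ monomials of degree $n$, so $|s_\lambda(\e{h_1/n},\dots,\e{h_d/n})| \le \e{\max_i|h_i|}\dim V_\lambda$) and which, by the previous paragraph, converges to $R(h_1,\dots,h_d;x^\star)$ uniformly over $\{|\lambda/n - x^\star| < \eps\}$ as $n\to\infty$ and then $\eps\downarrow 0$. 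Splitting the $\mu_n$-average over this neighbourhood and its complement, the near part tends to $R(h_1,\dots,h_d;x^\star)$ and the far part is at most $\e{\max_i|h_i|}\cdot\mu_n(\{|\lambda/n - x^\star|\ge\eps\}) \to 0$; hence the limit equals $R(h_1,\dots,h_d;x^\star)$, as claimed.

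The main obstacle, I expect, is this last step in the regime $\beta \ge \beta_{\rm c}(S)$, where the maximiser has $x_2^\star = \dots = x_d^\star$ and $R$ is an indeterminate $0/0$ expression at $(h;x^\star)$: one must check that $r_n(\lambda)$ genuinely approaches the \emph{continuously extended} value there, uniformly for $\lambda/n$ near $x^\star$, and that the $O(\log n)$ corrections to $\log\bigl(\e{\frac{\beta}{n}c(\lambda)}(\dim V_\lambda)\dim W_\lambda\bigr)$ are uniform --- including for partitions with only a bounded number of boxes outside the first row, where the naive $\sum_i x_i\log x_i$ approximation to $\log\dim W_\lambda$ is delicate. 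The reformulation $r_n(\lambda) = c_n(h)\,R(h_1,\dots,h_d;\tilde x(\lambda))$ with the always-distinct arguments $\tilde x(\lambda)$ is designed to reduce the first point to the continuity statement of Lemma~\ref{R-prop}.
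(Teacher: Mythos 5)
Your proof is correct, and it reduces to the same intermediate object as the paper's --- a ratio of sums over $\lambda\vdash n$ with $\ell(\lambda)\leq 2S+1$, in which $s_\lambda(\e{h_1/n},\dots,\e{h_{2S+1}/n})/\dim V_\lambda$ is the quantity that converges to $R(\bh;x)$, followed by a Laplace-type concentration around the unique maximizer $x^\star$ of $\phi_\beta$ --- but the route to that object is genuinely different. The paper deliberately stays in the loop picture: it proves the equivalent loop-model statement, Theorem \ref{interchange-thm}, writes the expectation as $\bbE[f_{\bh/n}(\sigma)]/\bbE[f_{\mathbf 0}(\sigma)]$ over a uniform-rate random permutation $\sigma$, Fourier-decomposes using $p_\mu=\sum_\lambda\chi_\lambda(\mu)s_\lambda$, and then imports the Berestycki--Kozma formula $\bbE[\chi_\lambda(\sigma)]=d_\lambda\exp\bigl\{\tfrac{\beta}{n}\binom n2[r(\lambda)-1]\bigr\}$ together with Bj\"ornberg's estimate that this equals $\exp\bigl(n[\phi_\beta(\lambda/n)+o(1)]\bigr)$. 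You bypass the loop representation entirely (and hence the unproven equivalence stated just before the proof, and the Berestycki--Kozma input) by applying Schur--Weyl duality directly on $\caH_n$: $H_n^{\rm int}$ is the image of the central element $-\tfrac1n\sum_{i<j}(i\,j)$ and acts by $-c(\lambda)/n$ on $V_\lambda\otimes W_\lambda$, while $\exp\{\tfrac1n\sum_iA_i\}$ contributes the Schur character. Since $r(\lambda)=2c(\lambda)/n(n-1)$, your exact formula and the paper's differ only by the $\lambda$-independent factor $\e{\beta(n-1)/2}$, which cancels in the ratio, so the two really are the same sum. Your device of passing to the shifted coordinates $(\lambda_i+d-i)/n$, where $R$ is automatically evaluated at distinct arguments, is the same idea as the $\eps_j$ shift in Lemma \ref{schur-lem}, and your concentration step is exactly the content of Lemma \ref{conv_lem}. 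What your approach buys is a shorter, self-contained proof of Theorem \ref{thm int} that does not rely on the stochastic representation; what the paper's route buys is the loop-model statement itself, which the authors want independently for the Poisson--Dirichlet discussion of Section \ref{sec loops}. The caveat you correctly flag at the end --- uniformity of the $O(\log n)$ correction to $\log\dim W_\lambda$ over all $\lambda$ with $\ell(\lambda)\leq 2S+1$, including degenerate ones near $(n,0,\dots,0)$ --- is real, and is precisely the step the paper outsources to \cite{Bjo2}.
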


We highlight the following two special cases of this result:
first, we get that
\be\label{int-spin}
\displaystyle \lim_{n\to\infty} \Big\langle \exp \Bigl\{
\frac hn \sum_{i=1}^n S_i^{(1)} \Bigr\} \Big\rangle_{n,\beta}^{\rm
int} = \Bigl( \frac{\sinh (\frac12 h z^\star)}{\frac12 h z^\star}
\Bigr)^{2S};
\ee
second,
if $Q$ denotes an arbitrary rank 1 projector, with eigenvalues
$1,0,\dotsc,0$, we get
\be\label{int-proj}
\displaystyle \lim_{n\to\infty} \Big\langle \exp \Bigl\{
\frac hn \sum_{i=1}^n Q_i \Bigr\} \Big\rangle_{n,\beta}^{\rm int} =
\frac{(2S)!}{(hz^\star)^{2S}} \e{\frac h{2S+1} (1-z^\star)}
\sum_{j=2S}^{\oo} \frac{(h z^\star)^j}{j!}
\ee
The step from Theorem \ref{thm int} to \eqref{int-spin}
and \eqref{int-proj} is not immediate; details appear in
Sect. \ref{int-pf-sec}.

Next, we discuss the heuristics of spontaneous symmetry breaking. The
Hamiltonian of the interchange model is invariant under an SU$(2S+1)$-symmetry: Given an arbitrary unitary matrix $U$ on $\bbC^{2S+1}$, let $U_n = \otimes_{i=1}^n U$; then $U_n^{-1} H_n^{\rm int} U_n = H_n^{\rm int}$.
As pointed out to us by Robert
Seiringer, the extremal states are labeled by rank-1 projections on
$\bbC^{2S+1}$, or, equivalently, by the complex projective space
$\bbC\bbP^{2S}$ (i.e., by the set of equivalence classes of vectors in
$\bbC^{2S+1}$ only differing by multiplication by a complex nonzero number). Given $v \in \bbC^{2S+1} \setminus \{0\}$, let $P^v$
denote the orthogonal projection onto $v$, and let $P_i^v := \bbone
\otimes \dots \otimes P^v \otimes \dots \otimes \bbone$, where $P^v$
occupies the $i$th factor. The extremal states are expected to be
given by 
\be
\langle \cdot \rangle_v = \lim_{h \downarrow 0} \lim_{n
\to \infty} \frac{\Tr[ \cdot \e{-\beta H_n^{\rm int} + h \sum_{i=1}^n
P_i^v}]}
{\Tr[ \e{-\beta H_n^{\rm int} + h \sum_{i=1}^n P_i^v}]}.  
\ee
As
$\beta\to\infty$, $\langle\cdot\rangle_v$ converges to the expectation
defined by the product state $\otimes v$. These product states are ground states of $H_n^{\rm int}$, which gives some justification to the claim that the states $\langle\cdot\rangle_v$ are extremal.  We expect that
\be
\lim_{n\to\infty} \Big\langle \exp \Bigl\{ \frac 1n \sum_{i=1}^n A_i \Bigr\} \Big\rangle_{n,\beta}^{\rm int} = \int_{\bbC\bbP^{2S}} \e{\langle A_1 \rangle_v} \dd v.
\ee
We take the state $\langle \cdot \rangle_{e_1}$ as the reference state, with vector $v = e_1 = (1,0,\dots,0)$. At the cost of some redundancy, the integral over $v$ in $\bbC\bbP^{2S}$ can be written as an integral over the space $\caU(2S+1)$ of unitary matrices on $\bbC^{2S+1}$ with the uniform probability (Haar) measure:
\be
\int_{\bbC\bbP^{2S}} \e{\langle A_1 \rangle_v} \dd v = \int_{\caU(2S+1)} \e{\langle U_1^{-1} A_1 U_1 \rangle_{e_1}} \dd U.
\ee
Next we consider the restriction of the state $\langle \cdot \rangle_{e_1}$ onto operators that only involve the spin at site 1. This restriction can be represented by a density matrix $\rho$ on $\bbC^{2S+1}$ such that
\be
\langle A_1 \rangle_{e_1} = \Tr_{\bbC^{2S+1}} (A\rho).
\ee
In all bases where $e_1 = (1,0,\dots,0)$, the matrix $\rho$ is diagonal with entries $(x_1^\star, \dots, x_{2S+1}^\star)$ on the diagonal, where
\be
x_i^\star = \Tr (P^{e_i} \rho) = \langle P_1^{e_i} \rangle_{e_1}.
\ee
It is clear that $x_2^\star = \dots = x_{2S+1}^\star$, and one should
expect that $x_1^\star$ is larger than or or equal to $x_{2}^{*}$.
Heuristic arguments suggest that
\be\label{harish}
\lim_{n\to\infty} \Big\langle \exp \Bigl\{ \frac 1n \sum_{i=1}^n A_i \Bigr\} \Big\rangle_{n,\beta}^{\rm int} = \int_{\caU(2S+1)} \e{\Tr (A U \rho U^{-1})} \dd U.
\ee
By the Harish-Chandra--Itzykson-Zuber formula \cite{IZ},
the right-hand-side of \eqref{harish} is equal to 
$R(h_1,\dots,h_{2S+1};x_1^\star,\dots,x_{2S+1}^\star)$ 
which agrees with the right-hand-side in Theorem \ref{thm int}.

\subsection{Critical exponents for the Heisenberg model}

Relatively minor extensions of our calculations 
for the Heisenberg model ($\Delta=1$) enable us to
determine some critical exponents for that model 
on the complete graph.  To state our results, we
introduce the \emph{pressure}
\be\label{press-eq}
p(\b,h)=\lim_{n\to\oo} \tfrac1n\log 
\Tr\big(\exp(-\b  H^{\mathrm{Heis}}_{n,\Delta=1} +
h\textstyle\sum_{i=1}^n S^{(1)}_i)\big)
\ee
(more accurately, this is $(-\beta)$ times the free energy;
``pressure'' 
is used by analogy to the Ising model, where it is justified by the lattice-gas interpretation). Next, we consider the magnetization and susceptibility
\be
m(\b,h)=\frac{\partial p}{\partial h},\qquad
\chi(\b)=\frac{\partial m}{\partial h}\Big|_{h=0}
\ee
and the \emph{transverse susceptibility}
\be
\chi^\perp_n(\b,h)=\frac1n\sum_{1\leq i<j\leq n}
\frac{\Tr\big(S^{(2)}_i S^{(2)}_j\e{-\b H^{\mathrm{Heis}}_n
+h\sum_{i=1}^n S^{(1)}_i}\big)}
{\Tr\big(\e{-\b  H^{\mathrm{Heis}}_n+
h\sum_{i=1}^n S^{(1)}_i}\big)}
\ee
as well as the limit 
$\chi^\perp(\b,h)=\lim_{n\to\oo} \chi^\perp_n(\b,h)$
(where we extract a converging subsequence if necessary).

The following theorem is proven in Section \ref{sec critexp}.
Recall the function $g_\b(m)$, $0\leq m\leq S$, given in \eqref{def g ii}
(which reduces to \eqref{g beta} for $S=\tfrac12$).
We write $f\sim g$ if $f/g$ converges to a positive constant.

\begin{theorem}\label{critexp-thm}
For the spin-$S\geq\tfrac12$ Heisenberg models the following formulae hold true.\\

(i) \underline{Pressure:} 
\be\label{heis-press}
p(\b,h)=\max_{0\leq m\leq S}  \big(g_\b(m)+hm\big)\,.
\ee

(ii) \underline{Critical Exponents:}
\be\begin{split}\label{critexp}
m^\star(\beta)
\underset{\beta\downarrow
\beta_\crit}{\sim} (\beta-\beta_\crit)^{1/2}\,, \quad
\chi(\beta)\underset{\beta\uparrow \beta_\crit }{\sim} (\beta_\crit-\beta)^{-1}\,, \quad
m(\beta_\crit,h)\underset{ h\downarrow 0}{\sim} h^{1/3}\,, 
\end{split}\ee

 and
\be\label{trv-critexp}
\chi^\perp(\b_\crit,h)\underset{h\downarrow0}{\sim} h^{-2/3}\,,
\qquad
\chi^\perp(\b,h)\underset{h\downarrow0}{\sim} h^{-1}\,, \mbox{ for }\, \b>\b_\crit \,.
\ee
\end{theorem}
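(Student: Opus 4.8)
The plan is to derive the whole theorem from the variational formula \eqref{heis-press} of part~(i), which itself comes out of the same large-deviation analysis that underlies Theorem~\ref{spin S thm}. Throughout I take $h\ge0$; the reflection $h\mapsto-h$ is a symmetry since $H^{\mathrm{Heis}}_{n,\Delta=1}$ commutes with every $U^{\otimes n}$, $U\in SU(2)$, so $p(\beta,h)=p(\beta,|h|)$. Since $H^{\mathrm{Heis}}_{n,\Delta=1}=-\tfrac1n|\vec S_{\mathrm{tot}}|^2+S(S+1)\one$ with $\vec S_{\mathrm{tot}}=\sum_{i=1}^n\vec S_i$, and $|\vec S_{\mathrm{tot}}|^2$ commutes with $S^{(1)}_{\mathrm{tot}}$, I decompose $(\bbC^{2S+1})^{\otimes n}$ into its $SU(2)$-isotypic blocks: writing $d_{n,S}(\lambda)$ for the multiplicity of the spin-$\lambda$ representation, one has $|\vec S_{\mathrm{tot}}|^2=\lambda(\lambda+1)$ on that block and $\Tr e^{hS^{(1)}_{\mathrm{tot}}}=e^{\eta_\lambda(h)}$ there, where $\eta_\lambda(h):=\log\bigl(\sinh((2\lambda+1)h/2)/\sinh(h/2)\bigr)$, so that
\[
\Tr e^{-\beta H^{\mathrm{Heis}}_{n,\Delta=1}+hS^{(1)}_{\mathrm{tot}}}=e^{-\beta S(S+1)}\sum_\lambda d_{n,S}(\lambda)\,e^{\frac\beta n\lambda(\lambda+1)+\eta_\lambda(h)}.
\]
The generating identity $\sum_\lambda d_{n,S}(\lambda)e^{\eta_\lambda(x)}=e^{n\eta(x)}$, together with superadditivity of $\log d_{n,S}(\ell n)$ (Clebsch--Gordan) and Fekete's lemma, shows that $\tfrac1n\log d_{n,S}(\ell n)$ converges, and a Legendre-duality argument identifies the limit as $\eta(x^\star(\ell))-\ell x^\star(\ell)$. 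Since $\tfrac1n\eta_{\ell n}(h)\to\ell h$ and $\tfrac1{n^2}\lambda(\lambda+1)\to\ell^2$ along $\lambda=\ell n$, Laplace's method gives $p(\beta,h)=\sup_{0\le\ell\le S}\bigl(\eta(x^\star(\ell))-\ell x^\star(\ell)+\beta\ell^2+\ell h\bigr)=\max_{0\le m\le S}\bigl(g_\beta(m)+hm\bigr)$, which is \eqref{heis-press}.

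For \eqref{critexp}: by Danskin's theorem $m(\beta,h)=\partial_h p(\beta,h)$ equals the maximiser $m^\star(\beta,h)$ in \eqref{heis-press}, which for $h>0$ is unique, lies in $(0,S)$, and solves $g_\beta'(m)=-h$. Differentiating \eqref{def g ii} and using \eqref{a*-der} gives the identity $g_\beta'(m)=2\beta m-x^\star(m)$, whence $g_\beta''(0)=2(\beta-\beta_\crit)$, $g_\beta'''(0)=0$, and $g_\beta''''(0)=-(x^\star)'''(0)$, a quantity independent of $\beta$ and, by inverting the Taylor series $\eta'(x)=\tfrac{S^2+S}{3}x-\tfrac{(2S+1)^4-1}{720}x^3+O(x^5)$, strictly negative. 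Feeding $g_\beta(m)=g_\beta(0)+(\beta-\beta_\crit)m^2+\tfrac{g_\beta''''(0)}{24}m^4+O(m^6)$ into $g_\beta'(m^\star)=0$ gives $m^\star(\beta)^2\sim\tfrac{12(\beta-\beta_\crit)}{|g_\beta''''(0)|}$, i.e.\ $m^\star(\beta)\sim(\beta-\beta_\crit)^{1/2}$ as $\beta\downarrow\beta_\crit$; differentiating $g_\beta'(m^\star(\beta,h))=-h$ at $h=0$ for $\beta<\beta_\crit$ gives $\chi(\beta)=-1/g_\beta''(0)=\tfrac1{2(\beta_\crit-\beta)}$; and at $\beta=\beta_\crit$, where $g_{\beta_\crit}''(0)=0$, the relation $g_{\beta_\crit}'(m)=-h$ becomes $\tfrac{g_{\beta_\crit}''''(0)}{6}m^3\sim-h$, so $m(\beta_\crit,h)\sim h^{1/3}$. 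In each regime one must still check that the relevant critical point is the global maximiser of $m\mapsto g_\beta(m)+hm$; this follows from monotonicity of $\eta'$ (for $S=\tfrac12$, from $g_\beta'''<0$ on $(0,\tfrac12)$) and continuity of $m^\star(\beta,h)$.

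For \eqref{trv-critexp}: from $\sum_{i<j}S_i^{(2)}S_j^{(2)}=\tfrac12\bigl((S^{(2)}_{\mathrm{tot}})^2-\sum_i(S_i^{(2)})^2\bigr)$ one has $\chi^\perp_n(\beta,h)=\tfrac1{2n}\langle(S^{(2)}_{\mathrm{tot}})^2\rangle-\tfrac12\langle(S_1^{(2)})^2\rangle$. Rotations about the $1$-axis are a symmetry of $H^{\mathrm{Heis}}_{n,\Delta=1}$ together with the field, hence $\langle(S^{(2)}_{\mathrm{tot}})^2\rangle=\tfrac12\bigl(\langle|\vec S_{\mathrm{tot}}|^2\rangle-\langle(S^{(1)}_{\mathrm{tot}})^2\rangle\bigr)$ and $\langle(S_1^{(2)})^2\rangle=\tfrac12\bigl(S(S+1)-\langle(S_1^{(1)})^2\rangle\bigr)$, the latter bounded uniformly in $n$ and $h$. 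On the spin-$\lambda$ block $|\vec S_{\mathrm{tot}}|^2$ contributes $\lambda(\lambda+1)$ and $(S^{(1)}_{\mathrm{tot}})^2$ contributes $\eta_\lambda''(h)+\eta_\lambda'(h)^2$; since $\eta_\lambda'(h)=\lambda+\tfrac12(1-\coth(h/2))+O(e^{-ch\lambda})$ and $\eta_\lambda''(h)=\tfrac1{4\sinh^2(h/2)}+O(e^{-ch\lambda})$, the $\Theta(n^2)$ terms cancel, leaving $\langle(S^{(2)}_{\mathrm{tot}})^2\rangle=\tfrac12\coth(h/2)\langle\lambda\rangle+O(1)$ with error exponentially small in $n$ (the relevant $\lambda$ concentrate near $m^\star(\beta,h)\,n>0$ by part~(i)). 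Dividing by $2n$, letting $n\to\oo$, using $\langle\lambda\rangle/n\to m^\star(\beta,h)$ and absorbing the bounded $\langle(S_1^{(2)})^2\rangle$-term, one obtains
\[
\chi^\perp(\beta,h)=\tfrac14\,m^\star(\beta,h)\coth(h/2)+O(1)\qquad(h\downarrow0).
\]
Since $\coth(h/2)\sim2/h$, this is $\sim m^\star(\beta,h)/(2h)$: for $\beta>\beta_\crit$ we have $m^\star(\beta,h)\to m^\star(\beta)>0$, giving $\chi^\perp\sim h^{-1}$, while for $\beta=\beta_\crit$ we have $m^\star(\beta_\crit,h)\sim h^{1/3}$ by \eqref{critexp}, giving $\chi^\perp\sim h^{-2/3}$.

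The two genuinely delicate points are these. First, part~(i) requires more than the exponential rate $\tfrac1n\log d_{n,S}(\ell n)$: one needs a law of large numbers for $\lambda/n$ under the tilted measure, with $\langle\lambda\rangle=m^\star(\beta,h)\,n+o(n)$, since this is what legitimises the passage to the limit in both parts. Second, in \eqref{trv-critexp} the quantity $\langle(S^{(2)}_{\mathrm{tot}})^2\rangle$ is a difference of two $\Theta(n^2)$ terms that cancel to leading order, so one must control the surviving $\Theta(n)$ contribution precisely and respect the order of limits ($n\to\oo$ before $h\downarrow0$) built into the definition of $\chi^\perp$. The first point can be imported from the proof of Theorem~\ref{spin S thm}; the second is the new estimate.
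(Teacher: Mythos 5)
Your proposal is correct, and parts (i) and (ii) follow essentially the same route as the paper: the isotypic decomposition into spin-$\lambda$ blocks, the large-deviations asymptotics for the multiplicities $d_{n,S}(\lambda)=L_{\lambda,n}-L_{\lambda+1,n}$, Laplace's method for the pressure, and the Taylor-expansion analysis of the stationarity equation $g_\beta'(m)=-h$ via $g_\beta'(m)=2\beta m - x^\star(m)$. The only cosmetic difference in (i) is that you obtain the exponential rate $\tfrac1n\log d_{n,S}(\ell n)\to\eta(x^\star(\ell))-\ell x^\star(\ell)$ from superadditivity (Clebsch--Gordan) plus Fekete plus Legendre duality, whereas the paper extracts it directly from the saddle-point estimate of Proposition \ref{prop asymptotics L}; both are legitimate and yield the same rate function.

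Part (iii) is where you genuinely diverge, and your route is the more explicit one. The paper proves \eqref{trv-critexp} by establishing the graph-independent two-sided bound of Theorem \ref{ward-thm} (a Ward identity giving $\chi^\perp\geq\tfrac1h m$, and the Falk--Bruch inequality providing the complementary bound with an error controlled by the double commutator $[\caM,[H,\caM]]$), and then feeding in the already-computed asymptotics of $m(\beta,h)$ from \eqref{critexp}. You instead exploit the mean-field structure directly: since on $K_n$ both the Hamiltonian and the field term are functions of $\vec\Sigma$ alone, the Gibbs weight factorizes over isotypic blocks, and the identities $\langle(\SSy)^2\rangle=\tfrac12(\langle\vec\Sigma^2\rangle-\langle(\SSx)^2\rangle)$ (from rotation about axis $1$) and $\langle(\SSx)^2\rangle_\lambda=\eta''_\lambda(h)+\eta'_\lambda(h)^2$ reduce everything to second derivatives of $\eta_\lambda$. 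The leading $\Theta(\lambda^2)$ pieces cancel and you land on the exact asymptotic $\chi^\perp(\beta,h)=\tfrac14 m^\star(\beta,h)\coth(h/2)+O(1)$, which is stronger than what the paper's inequalities give (the paper's method only pins down the exponent, not the leading constant). The trade-off is that your computation is tied to the complete graph, whereas the paper's Theorem \ref{ward-thm} is stated and proved for general graphs and is of independent interest; the paper only specializes to $K_n$ at the very last step, when it inserts the mean-field asymptotics of $m(\beta,h)$. One small bookkeeping remark: your $O(1)$ remainder in $\chi^\perp$ hides a term of size $O(1/h^2)$ coming from $\eta''_\lambda(h)\approx\tfrac1{4\sinh^2(h/2)}$ and the $\lambda$-independent part of $\eta'_\lambda(h)^2$; this term is divided by $n$ before the $h\downarrow0$ limit is taken, so it is harmless exactly because the order of limits is $n\to\infty$ first, as you correctly note, but the dependence on $h$ of the pre-limit remainder is worth flagging. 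Both of the delicate points you identify at the end (concentration of $\lambda/n$ at the tilted maximizer; surviving $\Theta(n)$ term after cancellation) are indeed where the real work is, and the first is precisely what Lemma \ref{conv_lem} supplies.
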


We note that
the critical exponents
\eqref{critexp} are exactly the same
as for the classical spin-$\tfrac12$ Curie--Weiss (Ising)
model, which 
has Hamiltonian $H_n=-\frac2n\sum_{i<j} S^{(1)}_i S^{(1)}_j$,
see e.g.\ \cite[Ch.~2]{FV}.
Moreover, in the case
$S=\tfrac12$ the pressure \eqref{heis-press} for the quantum
Heisenberg model equals that of the Curie--Weiss model, see
\cite[Thm 2.8]{FV}.  Nonetheless, the models are not identical, as
shown by Theorem \ref{spin S thm}:  for the Curie--Weiss model a simple
calculation shows that 
$\langle \e{\tfrac h n \sum_i S^{(1)}_i}\rangle\to\cosh(hm^\star)$.

In proving \eqref{trv-critexp} we will use general 
inequalities relating the
transverse susceptibility to the magnetization,
which follow from Ward-identities and the Falk--Bruch inequality.
For details, see Section \ref{sec critexp}.

\section{Random loop representations}
\label{sec loops}

The Gibbs states of quantum spin systems can be described with the
help of Feynman--Kac expansions. In some cases these expansions can be
represented as probability measures on sets of loop
configurations.  Such cases include T\'oth's random interchange
representation for the spin-$\frac12$ Heisenberg ferromagnet. 
(An early
version of this representation is due to Powers \cite{Pow}; 
it was independently proposed by T\'oth in
\cite{Toth2}, with a precise formulation and interesting
applications.)  Another useful representation is Aizenman and
Nachtergaele's loop model for the spin-$\frac12$ Heisenberg
antiferromagnet, and models of arbitrary spins where interactions are
given by projectors onto spin singlets \cite{AN}.  Nachtergaele
extended these representations to Heisenberg models of arbitrary spin
\cite{Nac}.  A synthesis of the T\'oth- and the Aizenman--Nachtergaele loop
models, which allows one to describe the spin-$\frac12$ \xy-model and a
spin-1  nematic model, was proposed in \cite{Uel1}.

These models are interesting from the point of view of 
probability theory and they are relevant here
because the joint distribution of loop lengths turns out
to be related to the extremal state decomposition of the corresponding
quantum systems. Indeed, some characteristic functions for the loop
lengths are equal to the Laplace transforms of the measure on the set
of extremal states.

The loop models considered in this paper can be defined on any graph
$\Gamma$, 
and involve one-dimensional loops immersed in the space
$\Gamma\times[0,\beta]$.  Quantum-mechanical correlations can be expressed in
terms of probabilities for loop connectivity.  The lengths of the
loops, rescaled by an appropriate fractional power of the spatial volume, are expected to display a {\it
universal behavior}: there are macroscopic and microscopic loops, and
the limiting joint distribution of the lengths of macroscopic loops is
expected to be the Poisson--Dirichlet (PD) distribution that
originally appeared in  the work of Kingman \cite{Kin}.
This distribution is
illustrated in Fig.\ \ref{fig partition}.

\begin{centering}
\bfig
\begin{picture}(0,0)%
\includegraphics{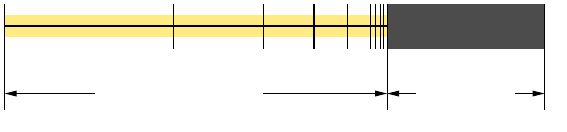}
\end{picture}%
\setlength{\unitlength}{2368sp}%
\begingroup\makeatletter\ifx\SetFigFont\undefined%
\gdef\SetFigFont#1#2#3#4#5{%
  \reset@font\fontsize{#1}{#2pt}%
  \fontfamily{#3}\fontseries{#4}\fontshape{#5}%
  \selectfont}%
\fi\endgroup%
\begin{picture}(7538,1726)(1141,-2965)
\put(2476,-2491){\makebox(0,0)[lb]{\smash{{\SetFigFont{8}{9.6}{\rmdefault}{\mddefault}{\updefault}{\color[rgb]{0,0,0}macroscopic, PD($\theta$)}%
}}}}
\put(6766,-2506){\makebox(0,0)[lb]{\smash{{\SetFigFont{8}{9.6}{\rmdefault}{\mddefault}{\updefault}{\color[rgb]{0,0,0}microscopic}%
}}}}
\put(6226,-2911){\makebox(0,0)[lb]{\smash{{\SetFigFont{8}{9.6}{\rmdefault}{\mddefault}{\updefault}{\color[rgb]{0,0,0}$z^\star$}%
}}}}
\put(1141,-2911){\makebox(0,0)[lb]{\smash{{\SetFigFont{8}{9.6}{\rmdefault}{\mddefault}{\updefault}{\color[rgb]{0,0,0}$0$}%
}}}}
\put(8356,-2911){\makebox(0,0)[lb]{\smash{{\SetFigFont{8}{9.6}{\rmdefault}{\mddefault}{\updefault}{\color[rgb]{0,0,0}$1$}%
}}}}
\end{picture}%
\caption{Conjectured form for 
typical partition given by loop lengths in 
dimensions $d\geq3$. For some $z^\star\in[0,1]$, the partition in $[0,z^\star]$ 
follows a Poisson-Dirichlet distribution; 
the partition in the interval $[z^\star,1]$ consists of microscopic elements.}
\label{fig partition}
\efig
\end{centering}

The Poisson--Dirichlet distribution, denoted by PD($\theta$), with $\theta>0$ arbitrary, can be defined via the
following `stick-breaking' construction: 
Let $B_1,B_2,\dotsc$ be independent Beta(1,$\theta$)-distributed random variables,
thus $\bbP(B_i>t)=(1-t)^{\theta}$ for $t\in[0,1]$.
Consider the sequence $Y=(Y_1,Y_2,\dotsc)$
given by 
\be\label{gem-eq}
Y_1=B_1; \quad Y_2=B_{2}(1-B_1); \quad\dotsc\quad 
Y_n=B_n\prod_{i=1}^{n-1}(1-B_i).
\ee
The vector $X$ obtained by ordering the elements of $Y$ by size has
the PD($\theta$)-distribution.
Note that $\sum_{i\geq1}X_i=1$ with probability 1, hence the $X_i$ may
be regarded as giving a partition of the interval $[0,1]$.
To obtain a partition of an interval
$[0,z^\star]$ as in Fig.\ \ref{fig partition} one 
simply rescales $X$ by $z^\star$.
For future reference we note here the following formula, which will
turn out to be relevant for the spin-systems considered in this paper.
In \cite[Eq.\ (4.18)]{Uel3} it is shown that
\be
\label{calc cosh} 
\bbE_{{\rm PD}(\theta)} 
\biggl[ \prod_{i\geq1}
\cosh(h X_i) \biggr] 
= \frac1{\Gamma(\theta/2)} 
\sum_{k\geq0}\frac{\Gamma(\theta/2 + k)}
{k! \, \Gamma(\theta+2k)} h^{2k}.
\ee

The Poisson--Dirichlet distribution first appeared in the study of
the random interchange model 
(transposition-shuffle) on the complete
graph.  David Aldous 
formulated a conjecture
concerning the convergence of the rescaled loop sizes to PD(1), and he
explained the heuristics;
Schramm then provided a proof \cite{Sch} of Aldous' conjecture. Models on the complete
graph are easier to analyse than the corresponding models on a lattice $\bbZ^d$, $d\geq 3$; 
but the heuristics for the latter models is
remarkably similar to the one for the former models; see \cite{GUW, Uel3}. 
The ideas sketched here are confirmed by the results of numerical simulations of various loop soups, including
lattice permutations \cite{GLU}, loop O(N)-models \cite{NCSOS}, and the random interchange model \cite{BBBU}.

\subsection{Spin-$\frac12$ models}
\label{sec loops XY}

We begin by describing the loop representations of the Heisenberg models with spin
$S=\tfrac12$.  These representations are quite well known and contain many of the
essential features, but without some of the complexities that appear
for larger spin.

We pick a real number $u \in [0,1]$. 
Let $\Gamma=K_n$ be the complete graph, with vertices
$V_n=\{1,\dotsc,n\}$ and edges $E_n=\big\{\{i,j\}:1\leq i<j\leq n\big\}$.
With each edge we associate an
independent Poisson point process on the time interval $[0,\beta/n]$ with two kinds of outcomes:
`crosses' occur with intensity $u$ and `double bars' occur
with intensity $1-u$. We let $\rho_{n,\beta,u}$ denote the law of the Poisson
point processes. Given a realization $\omega$, the loop
containing the point $(v,t) \in K_n \times [0,\beta/n]$ is obtained by
moving vertically until meeting a cross or a double bar, then crossing the edge
to the other vertex, and continuing in the same vertical direction, for a cross,
while continuing in the opposite direction, for a double bar;
see Fig. \ref{fig loops}. Periodic boundary conditions are imposed in
the vertical direction at $0$ and 
$\beta/n$. In the following,
$\caL(\omega)$ denotes the set of all such loops.

\begin{centering}
\bfig
\includegraphics[width=60mm]{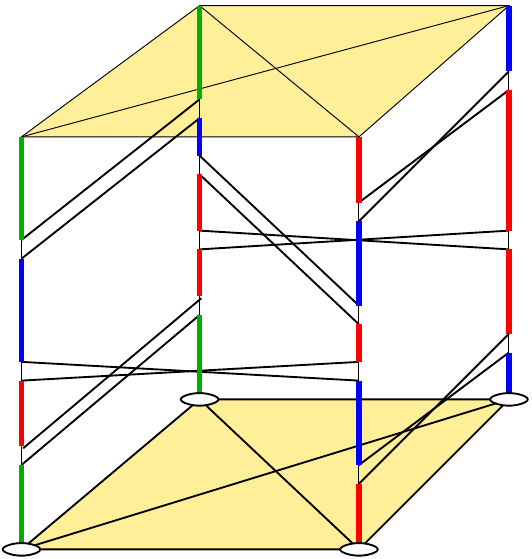}
\caption{A realization on the complete graph $K_4$ with three loops; the green loop has length 2, the red and blue loops have length 1.}
\label{fig loops}
\efig
\end{centering}

Let 
\be\label{toth-prob}
\bbP_{n,\beta,2,u}(\dd\omega) = \frac1{Z(n,\beta,2,u)}
2^{|\caL(\omega)|} \rho_{n,\beta,u}(\dd\omega),
\ee 
where the
normalisation $Z(n,\beta,2,u) = \int 2^{|\caL(\omega)|}
\rho_{n,\beta,u}(\dd\omega)$ is the partition function. By
$\bbE_{n,\beta,2,u}$ we denote an expectation with respect to this
probability measure.

We 
define the \emph{length of a
loop} as the number of points $(i, 0)$ that it contains; i.e., the length of a loop
is the number of sites at level $0\in[0,\b/n]$ visited by the loop. 
(According to this definition, there are loops of length 0.) Given a realisation $\omega$, let
$\ell_1(\omega), \ell_2(\omega), \dots$ be the lengths of the loops in
decreasing order. We have that $\sum_{i\geq1} \ell_i(\omega) = n$, for an arbitrary
$\omega$. Thus, $\bigl( \frac{\ell_1(\omega)}n, \frac{\ell_2(\omega)}n,
\dots \bigr)$ is a random partition of the interval $[0,1]$. We expect
it to resemble the partition depicted in Fig.\ \ref{fig partition}. 

One manifestation of the connection between the loop-model and the
spin system is the following identity, valid for $\Delta = 2u-1$:
 \be
 \langle \e{\frac hn \sum_i S_i^{(1)}}
 \rangle^{\mathrm{Heis}}_{n,\beta,\Delta} 
 = \bbE_{n,\beta,2,u}\biggl[ 
 \prod_{i\geq1} \cosh \Bigl( \frac{h \ell_i(\omega)}{2n} \Bigr) \biggr].
 \ee
This is a special case of \eqref{relation spins and loops} below.

\subsection{Heisenberg models with arbitrary spins}
\label{sec loops Heis}

An extension of the loop representation for the Heisenberg ferromagnet
(and antiferromagnet, and further interactions) with arbitrary spin was
proposed by Bruno Nachtergaele \cite{Nac}. As in \cite{Uel1} it can be generalised
 to include asymmetric Heisenberg models. We first describe
this representation and state our results about the lengths of the loops. Afterwards, we will outline
the derivation of this representation from models of spins.

We introduce a model where every site is replaced by 
$2S$ ``pseudo-sites''. Let $\widetilde K_n$ be
the graph whose vertices are the pseudo-sites $\bigl\{ (i,\alpha): i \in
\{1,\dots,n\}, \alpha \in \{1,\dots,2S\} \bigr\}$ and whose edges are given by
\be
\widetilde\caE_n = \bigl\{ \{ (i,\alpha), (j,\alpha') \}: 1 \leq i < j \leq n, 1 \leq \alpha, \alpha' \leq 2S \bigr\}.
\ee
We require the following ingredients:
\begin{itemize}[leftmargin=*]
\item A uniformly random permutation $\sigma$ 
of the pseudo-sites at each vertex; 
namely, $\sigma = (\sigma_i)_{i=1}^n$, where the
$\sigma_i$ are independent, uniform permutations of $2S$ elements.
\item 
(Independently of $\sigma$)
the result $\omega$ of independent Poisson point processes in
the time interval $[-\frac\beta{2n},\frac\beta{2n}]$, for every edge of
$\widetilde\caE_n$, where crosses have intensity $u$
and double bars have intensity $1-u$.
\end{itemize}
Let $\widetilde\rho_{n,\beta,u}$ denote the measure for the Poisson point
process. The measure on the set of permutations is just the counting
measure. Loops are defined as before, except that the permutations
rewire the threads between times $\frac\beta{2n}$ and $-\frac\beta{2n}$. An
illustration is given in Fig.\ \ref{fig Bruno rep}.

\bfig
\begin{picture}(0,0)%
\includegraphics{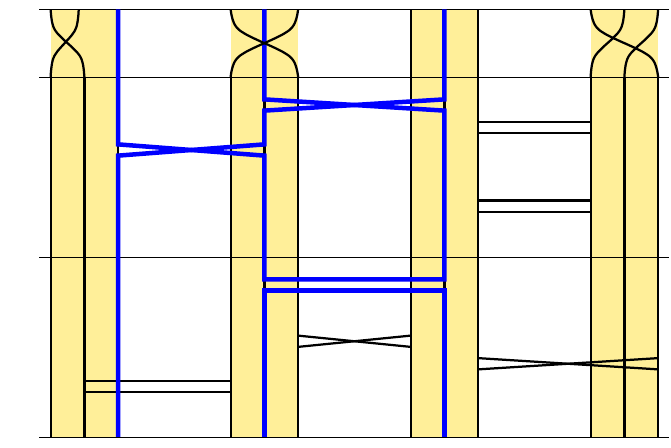}
\end{picture}%
\setlength{\unitlength}{2368sp}%
\begingroup\makeatletter\ifx\SetFigFont\undefined%
\gdef\SetFigFont#1#2#3#4#5{%
  \reset@font\fontsize{#1}{#2pt}%
  \fontfamily{#3}\fontseries{#4}\fontshape{#5}%
  \selectfont}%
\fi\endgroup%
\begin{picture}(8937,5964)(526,-5308)
\put(9451, 14){\makebox(0,0)[lb]{\smash{{\SetFigFont{10}{12.0}{\rmdefault}{\mddefault}{\updefault}{\color[rgb]{0,0,0}$\sigma_4$}%
}}}}
\put(751,-2836){\makebox(0,0)[lb]{\smash{{\SetFigFont{11}{13.2}{\rmdefault}{\mddefault}{\updefault}{\color[rgb]{0,0,0}0}%
}}}}
\put(526,-5236){\makebox(0,0)[lb]{\smash{{\SetFigFont{11}{13.2}{\rmdefault}{\mddefault}{\updefault}{\color[rgb]{0,0,0}$-\tfrac\beta{2n}$}%
}}}}
\put(751,-436){\makebox(0,0)[lb]{\smash{{\SetFigFont{11}{13.2}{\rmdefault}{\mddefault}{\updefault}{\color[rgb]{0,0,0}$\tfrac\beta{2n}$}%
}}}}
\put(526,464){\makebox(0,0)[lb]{\smash{{\SetFigFont{11}{13.2}{\rmdefault}{\mddefault}{\updefault}{\color[rgb]{0,0,0}$-\tfrac\beta{2n}$}%
}}}}
\put(2251, 14){\makebox(0,0)[lb]{\smash{{\SetFigFont{10}{12.0}{\rmdefault}{\mddefault}{\updefault}{\color[rgb]{0,0,0}$\sigma_1$}%
}}}}
\put(4651, 14){\makebox(0,0)[lb]{\smash{{\SetFigFont{10}{12.0}{\rmdefault}{\mddefault}{\updefault}{\color[rgb]{0,0,0}$\sigma_2$}%
}}}}
\put(7051, 14){\makebox(0,0)[lb]{\smash{{\SetFigFont{10}{12.0}{\rmdefault}{\mddefault}{\updefault}{\color[rgb]{0,0,0}$\sigma_3$}%
}}}}
\end{picture}%
\caption{Loop representation for Heisenberg models with spins
  $S=\frac32$. The original graph is modified so each site is now
  hosting $2S=3$ pseudo-sites. There are random permutations of
  pseudo-sites between times $\frac\beta{2n}$ and
  $-\frac\beta{2n}$. As before, there is an overall factor 
$2^{\# {\rm loops}}$. In the realisation above, one loop is highlighted (it
  has length 3) and there are three other loops (of length 0, 4, and
  5).} 
\label{fig Bruno rep}
\efig

The probability measure relevant for the following considerations is the following measure:
\be
\label{def P tilde}
\widetilde\bbP_{n,\beta,2,u}(\sigma,\dd\omega) = \frac1{\widetilde Z(n,\beta,2,u)} 2^{|\caL(\sigma,\omega)|} \widetilde\rho_{n,\beta,2,u}(\dd\omega).
\ee
Expectation with respect to $\widetilde\bbP_{n,\beta,2,u}(\sigma,\dd\omega)$ is denoted by $\tilde\bbE_{n,\beta,2,u}$.
We define the length of a loop as the number of sites at time 0 visited by it.
For any realisation $(\sigma,\omega)$, we have that $\sum_{i\geq1}
\ell_i(\sigma,\omega) = 2Sn$.

As we will explain below, this loop model provides a probabilistic
representation of the Heisenberg
model with $\Delta=2u-1$.  The two parts of the following theorem are
equivalent to Theorems \ref{spin S thm} and \ref{thm XY}, respectively.

\begin{theorem}
\label{thm loops Heis}
Let $z^\star=m^\star(\b)/S$ with $m^\star(\beta)$ defined above in Eq.\ \eqref{g-deri}.
For  any $h \in \bbC$, we have that
\[
\lim_{n\to\infty} \tilde\bbE_{n,\beta,2,u}\biggl[ \prod_{i\geq1} 
\cosh \Bigl( \frac{h \ell_i(\sigma,\omega)}{2Sn} \Bigr) \biggr] =
\begin{cases} \sinh(hz^\star)/hz^\star, & \text{if } u = 1, \\
I_0(h z^\star), & \text{if } u \in [0,1).
\end{cases}
\]
\end{theorem}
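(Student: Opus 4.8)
The plan is to deduce Theorem~\ref{thm loops Heis} from Theorems~\ref{spin S thm} and \ref{thm XY} via the loop--spin correspondence, so that the maximiser $m^\star(\beta)$ of $g_\beta$ in \eqref{def g ii} does all the work and no new variational analysis is needed. The crucial input is the finite-volume identity, a special case of \eqref{relation spins and loops} below: for every $n\in\bbN$, every $t\in\bbC$, and $\Delta=2u-1$,
\[
\tilde\bbE_{n,\beta,2,u}\Bigl[\,\prod_{i\geq1}\cosh\Bigl(\tfrac{t\,\ell_i(\sigma,\omega)}{2n}\Bigr)\Bigr]
=\Bigl\langle\exp\Bigl\{\tfrac tn\sum_{i=1}^n S_i^{(1)}\Bigr\}\Bigr\rangle^{\rm Heis}_{n,\beta,\Delta},
\]
where $\ell_i$ is the length of the $i$-th loop measured in pseudo-sites, so $\sum_{i\geq1}\ell_i=2Sn$. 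Since $\tfrac{t\,\ell_i}{2n}=\tfrac{(St)\,\ell_i}{2Sn}$, the substitution $t=h/S$ turns the left-hand side into $\tilde\bbE_{n,\beta,2,u}[\prod_{i\geq1}\cosh(h\ell_i/(2Sn))]$ -- exactly the quantity in the theorem -- and the right-hand side into $\langle\e{(h/S)\tfrac1n\sum_i S_i^{(1)}}\rangle^{\rm Heis}_{n,\beta,\Delta}$. For $u=1$ (so $\Delta=1$) Theorem~\ref{spin S thm}, applied with field parameter $h/S$, gives that this converges to $\sinh\bigl((h/S)m^\star(\beta)\bigr)/\bigl((h/S)m^\star(\beta)\bigr)=\sinh(hz^\star)/(hz^\star)$, using $z^\star=m^\star(\beta)/S$; for $u\in[0,1)$ (so $\Delta=2u-1\in[-1,1)$, the endpoint $\Delta=-1$ at $u=0$ being covered by that theorem on the complete graph) Theorem~\ref{thm XY} gives the limit $I_0\bigl((h/S)m^\star(\beta)\bigr)=I_0(hz^\star)$. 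For each fixed $n$ the product $\prod_{i\geq1}\cosh(\cdot)$ has only finitely many nontrivial factors, so both sides are entire in $h$ and the convergence supplied by Theorems~\ref{spin S thm} and \ref{thm XY} is available for all $h\in\bbC$; no separate argument is needed to reach complex $h$.

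It remains to establish the displayed identity, for which I would use the Feynman--Kac (Trotter) expansion of $\e{-\beta H^{\rm Heis}_{n,\Delta}}$ in Nachtergaele's pseudo-site form. Realising the spin-$S$ Hilbert space at each site as the symmetric subspace of $(\bbC^2)^{\otimes 2S}$ introduces the symmetrising projectors $\tfrac1{(2S)!}\sum_{\sigma_i}U_{\sigma_i}$ at the periodic gluing of the time interval, which is precisely the uniform random permutation layer $\sigma=(\sigma_i)_i$; expanding each bond operator $\e{\tfrac{2\beta}{n}(S^{(1)}_iS^{(1)}_j+S^{(2)}_iS^{(2)}_j+\Delta S^{(3)}_iS^{(3)}_j)}$ over a fine mesh of $[-\tfrac\beta{2n},\tfrac\beta{2n}]$ decomposes it, to leading order, into the identity plus corrections proportional to the transposition operator (``cross'') and the ``double bar'' operator in the ratio $u:(1-u)$ with $u=(1+\Delta)/2$ -- all weights nonnegative exactly because $u\in[0,1]$ -- which in the mesh limit produces the Poisson law $\widetilde\rho_{n,\beta,u}$. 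Taking the trace glues the threads into the loops of $\caL(\sigma,\omega)$, each closed loop contributing the dimension factor $2$, i.e.\ the weight $2^{|\caL(\sigma,\omega)|}$ and the measure $\widetilde\bbP_{n,\beta,2,u}$. Finally, writing $\sum_{i=1}^n S_i^{(1)}=\sum_{p}s^{(1)}_p$ as a sum of commuting spin-$\tfrac12$ pseudo-spin operators and observing that a loop visiting $\ell_i$ pseudo-sites at time $0$ carries a single $\bbC^2$ along which $(\e{\tfrac tn s^{(1)}})^{\ell_i}=\e{\tfrac{t\ell_i}{n}s^{(1)}}$ acts, the trace over that loop changes from $2$ to $\Tr_{\bbC^2}\e{\tfrac{t\ell_i}{n}s^{(1)}}=2\cosh(t\ell_i/(2n))$; dividing by the partition function gives the identity.

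All the genuinely nontrivial analysis -- identifying the maximiser of $g_\beta$ and evaluating the $n\to\infty$ limit -- has already been done in the proofs of Theorems~\ref{spin S thm} and \ref{thm XY}, so the only delicate point here is the finite-$n$ identity, and within it the bookkeeping of two numerical factors: the $\tfrac12$ coming from the eigenvalues $\pm\tfrac12$ of the pseudo-spin $s^{(1)}$, and the $2S$ coming from the number of pseudo-sites per site. It is precisely their interplay that produces the argument $h\ell_i/(2Sn)$ rather than $h\ell_i/(2n)$, and hence the relation $z^\star=m^\star(\beta)/S$; I expect a sign or scaling slip there, rather than any analytic subtlety, to be the main pitfall. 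As a sanity check on the normalisation I would also verify consistency with the spin-$\tfrac12$ case recorded above, where $2S=1$, the permutation layer is trivial, $\widetilde K_n=K_n$, and the identity reduces to the known one.
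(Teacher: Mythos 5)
Your proposal is correct and is essentially the paper's own argument: the paper takes the finite-volume loop--spin identity \eqref{relation spins and loops} as the given input (pointing to \cite{Toth2, AN, Uel1} for its derivation, just as you sketch via Trotter/Feynman--Kac), and then derives Theorem \ref{thm loops Heis} from Theorems \ref{spin S thm} and \ref{thm XY}. The one step the paper leaves implicit --- the substitution $h\mapsto h/S$ needed to reconcile the $h\ell_i/(2n)$ in \eqref{relation spins and loops} with the $h\ell_i/(2Sn)$ in the theorem and to produce $z^\star=m^\star(\beta)/S$ --- you have made explicit and carried out correctly.
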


We note that the limiting quantities agree with the corresponding
expectations with respect to the Poisson--Dirichlet distributions; more
precisely PD(2), for $u=1$, and 
PD(1), for $u<1$.  Indeed,
setting $\theta=2$ in \eqref{calc cosh}, we find that
\be
\bbE_{{\rm PD}(2)} 
\biggl[ \prod_{i\geq1}
\cosh(hX_i) \biggr] 
=\sum_{k\geq 0} \frac{h^{2k}}{(2k+1)!} =\frac{\sinh(h)}{h}\,,
\ee
while setting $\theta=1$ yields
\be
\bbE_{{\rm PD}(1)} 
\biggl[ \prod_{i\geq1}
\cosh(hX_i) \biggr] 
= \frac{1}{\Gamma(\tfrac12)}
\sum_{k\geq 0} \frac{\Gamma(k+\tfrac12)}{k!(2k)!}h^{2k} 
=I_0(h).
\ee

Next, we explain how to derive this loop model from quantum spin systems. This will show that Theorem \ref{thm loops Heis} is equivalent to Theorem \ref{spin S thm}.

Following Nachtergaele \cite{Nac}, we consider the Hilbert space
\be
\widetilde{\mathcal{H}}_n = \otimes_{i=1}^n \otimes _{\alpha=1}^{2S} \bbC^2.
\ee
On $\otimes_{\alpha=1}^{2S} \bbC^2$, let $P^{\rm sym}$ denote the projection onto the symmetric subspace; i.e.,
\be
\label{def P sym}
P^{\rm sym} = \frac1{(2S)!} \sum_{\sigma \in \caS_{2S}} U(\sigma),
\ee
where the unitary matrix $U(\sigma)$ is the representative of the permutation $\sigma$,
\be
U(\sigma) |\varphi_1\rangle \otimes \dots \otimes |\varphi_{2S}\rangle = |\varphi_{\sigma(1)} \otimes \dots \otimes |\varphi_{\sigma(2S)}\rangle.
\ee
One can check that ${\rm rank}(P^{\rm sym}) = 2S+1$. Let $P_n^{\rm
sym} = \otimes_{i=1}^n P^{\rm sym}$ and $\widetilde \caH_n^{\rm sym} =
P_n^{\rm sym} \widetilde \caH_n$. Since ${\rm dim} \,
\widetilde\caH_n^{\rm sym} = (2S+1)^n$, there is an embedding 
\be
\iota: \caH_n = (\bbC^{2S+1})^{\otimes n} \to \widetilde\caH_n = \widetilde
\caH_n^{\rm sym} \oplus (\widetilde\caH_n^{\text{sym}})^\perp,
\ee 
with the
property that 
\be
A\mapsto \iota(A) = A \oplus 0.
\ee

With each pseudo-site $(i,\alpha)$ one associates spin operators $S_{i,\alpha}^{(j)}$, $j=1,2,3$, given by ($\frac{1}{2} \times$) Pauli matrices, tensored by the identity. Let
\be
R_i^{(j)} = P_n^{\rm sym} \sum_{\alpha=1}^{2S} S_{i,\alpha}^{(j)}.
\ee
Then $\iota(S_i^{(j)}) = R_i^{(j)}$. The Hamiltonian is
\be
\begin{split}
\widetilde H_n &= -2 \sum_{1\leq i < j\leq n} \bigl( R_i^{(1)} R_j^{(1)} + R_i^{(2)} R_j^{(2)} + \Delta R_i^{(3)} R_j^{(3)} \bigr) \\
&= -2 P_n^{\rm sym} \sumtwo{1\leq i<j \leq n}{1 \leq \alpha, \alpha' \leq 2S} \bigl( S_{i,\alpha}^{(1)} S_{j,\alpha'}^{(1)} + S_{i,\alpha}^{(2)} S_{j,\alpha'}^{(2)} + \Delta S_{i,\alpha}^{(3)} S_{j,\alpha'}^{(3)} \bigr).
\end{split}
\ee
Notice that $\widetilde H_n = \iota(H_n)$. We introduce the transposition operator $T_{(i,\alpha),(j,\alpha')}$ and the ``double bar operator'' $Q_{(i,\alpha),(j,\alpha')}$; in the basis where $S_{i,\alpha}^{(1)} = \frac12 \bigl( \begin{smallmatrix} 1 & 0 \\ 0 & -1 \end{smallmatrix} \bigr)$, it has matrix elements
\be
\langle a | \otimes \langle b | Q_{(i,\alpha),(j,\alpha')} | c \rangle \otimes | d \rangle = \delta_{a,b} \delta_{c,d}.
\ee
Let $u = \frac12 (\Delta+1)$; we have that
\be
2 \bigl( S_{i,\alpha}^{(1)} S_{j,\alpha'}^{(1)} + S_{i,\alpha}^{(2)} S_{j,\alpha'}^{(2)} + \Delta S_{i,\alpha}^{(3)} S_{j,\alpha'}^{(3)} \bigr) = u T_{(i,\alpha),(j,\alpha')} + (1-u) Q_{(i,\alpha),(j,\alpha')} - \tfrac12.
\ee

The loop expansion can be carried out as in \cite[Theorem 2]{Toth2}, \cite[Proposition 2.1 (iii)]{AN}, \cite{Nac}, and \cite[Section III. B]{Uel1}. In order to formulate the relation between quantum spins and random loops, we need the notion of \emph{space-time spin configurations} $\bss = \bigl( s_{i,\alpha}(t) \bigr)$, taking values in $\{-\frac12, \frac12\}$, and indexed by integers $1 \leq i \leq n$, $1 \leq \alpha \leq 2S$ and by real numbers $0\leq t < \beta$. Given a realisation $(\sigma,\omega)$, we let $\Sigma(\sigma,\omega)$ denote the set of space-time spin configurations $\bss$ that take constant values along the loops of $(\sigma,\omega)$, and that are left-continuous at the points of discontinuity. Notice that
\be
\label{number of stsc}
|\Sigma(\sigma,\omega)| = 2^{|\caL(\sigma,\omega)|}.
\ee

\begin{proposition}
\label{prop qu spins vs loops}
Let $\Delta = 2u-1$.
For all functions $f : [-\frac12, \frac12]^{2Sn} \to \bbC$ that have convergent Taylor series, we have
\[
\bigl\langle f(\{S^{(1)}_{i,\alpha}\}) \bigr\rangle_{n,\beta,\Delta} = \frac1{\widetilde Z(n,\beta,2,u)} \int \widetilde\rho_{n,\beta,2,u}(\dd\omega) \sum_\sigma \sum_{\bss \in \Sigma(\sigma,\omega)} f \bigl( \{s_{i,\alpha}(0)\} \bigr).
\]
\end{proposition}

It immediately follows from this proposition that
\be  \label{relation spins and loops}
\Bigl\langle \exp\Bigl\{ \frac hn \sum_{i=1}^n S_i^{(1)} \Bigr\} \Bigr\rangle_{n,\beta,\Delta} = \tilde\bbE_{n,\beta,2,u} \biggl[ \prod_{i\geq1} \cosh \Bigl( \frac{h \ell_i(\sigma,\omega)}{2n} \Bigr) \biggr].
\ee
In particular, Theorem \ref{thm loops Heis} follows from 
Theorems \ref{spin S thm} and \ref{thm XY}, which are proven in
Sects. \ref{heis-sec} and \ref{xy-pf-sec}, respectively.

\subsection{The quantum interchange model}
\label{sec loops int}

The interchange model has a loop-representation very similar to T\'oth's
representation of the spin-$\tfrac12$ Heisenberg ferromagnet, which
was described in Section \ref{sec loops XY}.  
Indeed, the measure appropriate for this model is obtained by replacing Eq.\ \eqref{toth-prob} by
\be
\bbP_{n,\beta,\theta,u=1}(\dd\omega) = \frac1{Z(n,\beta,\theta,1)}
\theta^{|\caL(\omega)|} \rho_{n,\beta,1}(\dd\omega),
\ee
where $\theta=2S+1$.
Note that we set $u=1$, meaning we have only crosses (no double-bars),
and that we replace the weight 
$2^{|\mathcal{L}(\omega)|}$ by $\theta^{|\mathcal{L}(\omega)|}$.

We write 
$\bh=(h_1,\dotsc,h_\theta)$ and 
\be\label{qh}
q_\bh(t)=\tfrac1\theta\big(e^{h_1t}+\dotsb+e^{h_\theta t}\big).
\ee
Recall the function $R$ defined in \eqref{R-def}.

\begin{theorem}
\label{interchange-thm}
For any fixed $\bh=(h_1,\dotsc,h_\theta)$ we have,
as $n\to\oo$,
\be\label{det-lim-2}
\bbE_{n,\b,\theta,1}
\Big[\prod_{i\geq 1} q_\bh(\tfrac1n \ell_i)\Big]
\to 
R(h_1,\dotsc,h_\theta;x_1^\star,\dotsc,x_\theta^\star),
\ee
where $(x^\star_1,\dotsc,x_\theta^\star)$ is the maximizer of 
$\phi_\b(\cdot)$, as above.
\end{theorem}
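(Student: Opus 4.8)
The plan is to recognise the left-hand side of \eqref{det-lim-2} as a Gibbs expectation of the quantum interchange model and then to invoke Theorem \ref{thm int}. Fix $\bh=(h_1,\dots,h_\theta)$ with $\theta=2S+1$ and let $A=\mathrm{diag}(h_1,\dots,h_\theta)$ act on $\bbC^{2S+1}$, so that the eigenvalues of $A$ are exactly $h_1,\dots,h_\theta$. The key is the identity
\[
\bbE_{n,\b,\theta,1}\Big[\prod_{i\geq1}q_\bh\big(\tfrac1n\ell_i\big)\Big]
=\Big\langle \exp\Bigl\{\tfrac1n\sum_{i=1}^n A_i\Bigr\}\Big\rangle_{n,\b}^{\rm int},
\qquad n\in\bbN.
\]
Granting this, Theorem \ref{thm int} applied with this $A$ gives that the right-hand side tends, as $n\to\oo$, to $R(h_1,\dots,h_\theta;x_1^\star,\dots,x_\theta^\star)$, which is the assertion of the theorem. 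Since $A$ may have repeated eigenvalues and since $x_2^\star=\dots=x_\theta^\star$ (all $x_i^\star$ coincide when $\b<\b_{\rm c}(S)$), the right-hand side of \eqref{det-lim-2} is to be read as the continuous extension of $R$; that this extension is well defined is precisely the content of Lemma \ref{R-prop}.

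The identity above is the standard random-loop (Feynman--Kac) representation of the interchange model in the case of crosses only; I only sketch it, since the argument parallels the one behind \eqref{relation spins and loops}. Writing $\e{-\b H_n^{\rm int}}=\e{\frac\b n\sum_{i<j}T_{i,j}}$ and expanding by the Duhamel series (absolutely convergent because $\|T_{i,j}\|=1$ and $\tfrac\b n\binom n2<\oo$) expresses $\e{-\b H_n^{\rm int}}$, up to a positive constant $c_n$ that cancels below, as the $\rho_{n,\b,1}$-average of the time-ordered product of the transpositions $T_{i,j}$ attached to a realisation $\omega$ of crosses. For fixed $\omega$ this product is a permutation $\pi_\omega$ of $\{1,\dots,n\}$ whose cycles are exactly the loops of $\caL(\omega)$, a cycle of length $\ell$ corresponding to a loop visiting $\ell$ sites at level $0$. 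Taking the trace over $(\bbC^{2S+1})^{\otimes n}$ in the basis indexed by colourings $c:\{1,\dots,n\}\to\{1,\dots,\theta\}$, the matrix element $\langle c|\cdot|c\rangle$ of that permutation is nonzero only when $c$ is constant on every cycle; inserting the diagonal operator $\e{\frac1n\sum_iA_i}$, which multiplies a basis vector by $\e{\frac1n\sum_v h_{c(v)}}$, each cycle $\gamma$ of length $\ell(\gamma)$ contributes the factor $\sum_{s=1}^\theta\e{h_s\ell(\gamma)/n}=\theta\,q_\bh(\ell(\gamma)/n)$ (taking $A=0$ recovers the factor $\theta$ per cycle, i.e.\ $\Tr[\pi_\omega]=\theta^{|\caL(\omega)|}$). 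Hence
\[
\Tr\Big[\e{\frac1n\sum_iA_i}\e{-\b H_n^{\rm int}}\Big]
=c_n\int\prod_{\gamma\in\caL(\omega)}\theta\,q_\bh\big(\tfrac1n\ell(\gamma)\big)\,\rho_{n,\b,1}(\dd\omega),
\]
and dividing by the same quantity at $A=0$, namely $\Tr[\e{-\b H_n^{\rm int}}]=c_n\,Z(n,\b,\theta,1)$, yields the identity upon recalling $\bbP_{n,\b,\theta,1}(\dd\omega)=Z(n,\b,\theta,1)^{-1}\theta^{|\caL(\omega)|}\rho_{n,\b,1}(\dd\omega)$.

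With Theorem \ref{thm int} available this completes the proof, and the only remaining points are routine: absolute convergence of the Duhamel expansion, the identification of the cycles of $\pi_\omega$ with the loops of $\caL(\omega)$, the classical formula $\Tr_{(\bbC^\theta)^{\otimes n}}[\pi]=\theta^{\#\mathrm{cycles}(\pi)}$, and the continuity supplied by Lemma \ref{R-prop}. The genuinely substantial input is Theorem \ref{thm int} itself: to prove Theorem \ref{interchange-thm} without it one would have to carry out the complete-graph (mean-field) analysis of $\Tr[\e{\frac1n\sum_iA_i}\e{-\b H_n^{\rm int}}]$ directly --- writing $\sum_{i<j}T_{i,j}$ as a constant plus a quadratic form in averaged operators, using a Laplace/large-deviation argument (or Berezin--Lieb coherent-state bounds) to reduce the $n\to\oo$ limit to the variational problem defining $\phi_\b$, whose maximiser is $(x_1^\star,\dots,x_\theta^\star)$, and then recognising the limiting integral as $R$ via the Harish-Chandra--Itzykson-Zuber formula \cite{IZ}. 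This is exactly what the proof of Theorem \ref{thm int} in Section \ref{int-pf-sec} does, and it is the main obstacle.
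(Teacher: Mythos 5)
Your strategy is to derive Theorem \ref{interchange-thm} from Theorem \ref{thm int} via the loop--spin equivalence, but this is circular in the context of this paper: the paper proves things in the opposite direction. Section \ref{int-pf-sec} opens with the statement ``When studying the interchange model we prefer to use the probabilistic representation in our proof. Thus we prove the statements in Theorem \ref{interchange-thm}, which is equivalent to Theorem \ref{thm int}.'' That is, the paper proves the loop statement directly and then obtains the spin statement as a corollary by exactly the Feynman--Kac/Duhamel correspondence you sketch. Your argument invokes Theorem \ref{thm int} as a black box, but the paper contains no proof of Theorem \ref{thm int} other than via Theorem \ref{interchange-thm}, so you have not produced a proof --- you have restated the equivalence in the other direction.

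You also mischaracterise what the actual proof of Theorem \ref{interchange-thm} (equivalently, of Theorem \ref{thm int}) does. It is not a mean-field/Berezin--Lieb coherent-state analysis of $\Tr[\e{\frac1n\sum_i A_i}\e{-\b H^{\rm int}_n}]$ ending in the Harish-Chandra--Itzykson--Zuber formula. Rather, the paper works entirely on the loop side: it expands the observable $\prod_i q_\bh(\ell_i/n)$ in irreducible characters of $S_n$ via the identity $p_\mu=\sum_\la \chi_\la(\mu) s_\la$ (Eq.\ \eqref{power-schur}), uses the Berestycki--Kozma formula $\EE[\chi_\la(\s)]=d_\la\exp\{\frac\b n\binom n2[r(\la)-1]\}$ together with the asymptotics $d_\la \e{\frac\b n\binom n2[r(\la)-1]}=\e{n[\phi_\b(\la/n)+o(1)]}$ established in \cite{Bjo2}, and finishes with the ratio-convergence Lemma \ref{conv_lem} and the Schur-polynomial limit of Lemma \ref{schur-lem}. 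The HCIZ formula appears in the paper only in the \emph{heuristic} discussion of symmetry breaking, not in the proof. So the ``genuinely substantial input'' you identify is real, but you have pointed at the wrong toolkit: what is actually needed is the $S_n$-representation-theoretic saddle-point, not a coherent-state bound.

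What is correct and useful in your write-up is the explicit derivation of the identity
\[
\bbE_{n,\b,\theta,1}\Big[\prod_{i\geq1}q_\bh\big(\tfrac1n\ell_i\big)\Big]
=\Big\langle \exp\Bigl\{\tfrac1n\sum_{i=1}^n A_i\Bigr\}\Big\rangle_{n,\b}^{\rm int},
\]
including the observation $\Tr_{(\bbC^\theta)^{\otimes n}}[\pi]=\theta^{\#\mathrm{cycles}(\pi)}$ and that each cycle of length $\ell$ contributes $\theta\,q_\bh(\ell/n)$ once $\e{\frac1n\sum_iA_i}$ is inserted. The paper only asserts this equivalence without detail, so your Duhamel computation is a welcome gloss on the ``equivalent to'' step. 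But it is a bridge, not a proof; to close the gap you would need to supply the representation-theoretic argument of Section \ref{int-pf-sec} (or some genuinely independent proof of Theorem \ref{thm int}).
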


Again, the result is equivalent to a statement about the spin system.
In this case it is equivalent to Theorem \ref{thm int}, since we have
the identity (that follows from Proposition \ref{prop qu spins vs loops})
\be
\Big\langle \exp \Bigl\{
\frac 1n \sum_{i=1}^n A_i \Bigr\} \Big\rangle_{n,\beta}^{\rm int} 
=\bbE_{n,\b,\theta,1}
\Big[\prod_{i\geq 1} q_\bh(\tfrac1n \ell_i)\Big]
\ee
if $A$ has eigenvalues $h_1,\dotsc,h_\theta$.

The two special cases \eqref{int-spin} and \eqref{int-proj}
have the following counterparts.  
We use the notation
\be\label{q-def-eq}
q_S(t)=\tfrac1\theta\big(e^{-St}+e^{-(S-1)t}+\dotsb+e^{St}\big)
=\frac{\sinh(\tfrac\theta 2 t)}{\theta \sinh(\tfrac12 t)},
\ee
which corresponds to $h_i=(-S+i-1)$.
For all  $h\in\bbC$, we have that
\be\label{int-spin-loops}
\lim_{n\to\oo}
\bbE_{n,\b,\theta,1}\Big[
\displaystyle\prod_{i\geq 1}  q_S(\tfrac hn \ell_i)\Big]=
\Big[\frac{\sinh(\tfrac12hz^\star)}{\tfrac12hz^\star}\Big]^{2S},
\ee
and
\be\label{int-proj-loops}
\displaystyle
\lim_{n\to\oo}
\bbE_{n,\b,\theta,1}\Big[\prod_{i\geq 1} 
\tfrac1\theta(e^{h\ell_i/n}+\theta-1)\Big]
= \exp\big(\tfrac h\theta(1-z^\star)\big)
\frac{
\sum_{j=\theta-1}^{\oo} \tfrac1{j!} (hz^\star)^j
}{
\tfrac1{(\theta-1)!} (hz^\star)^{\theta-1}
}.\ee

Moreover, the
limiting quantities agree with the corresponding Poisson--Dirichlet
expectations, in this case PD($\theta$). 
In Appendix \ref{q-PD-app} we show that 
\be\label{pd-expr}
\EE_{\mathrm{PD}(\theta)}
\Big[\prod_{i\geq 1} q_\bh(z^\star X_i)\Big]=
\exp\big(-\tfrac{1-z^\star}{\theta}\textstyle \sum_i h_i\big)
R(h_1,\dotsc,h_\theta;x_1^\star,\dotsc,x_\theta^\star).
\ee
In particular, 
\be\label{q-PD}
\bbE_{{\rm PD}(\theta)} 
\biggl[ \prod_{i\geq1} q_S(h X_i) \biggr] 
=\Big[\frac{\sinh(\tfrac12h)}{\tfrac12h}\Big]^{2S},
\quad\mbox{for } \theta=2S+1
\ee
and
\be
\bbE_{{\rm PD}(\theta)} 
\biggl[ \prod_{i\geq1} \tfrac1\theta(\e{h X_i}+\theta-1)\biggr] 
=\frac{
\sum_{j=\theta-1}^{\oo} \tfrac1{j!} h^j
}{
\tfrac1{(\theta-1)!} h^{\theta-1}
}.
\ee

\section{Isotropic Heisenberg Model -- Proof of
Theorem \ref{spin S thm}} 
\label{heis-sec}

The proof uses standard facts about addition of angular momenta, which
for the reader's convenience are summarised in 
Appendix \ref{momenta-app}.  We also use a simple result about
convergence of ratios of sums where the terms are of exponentially
large size, Lemma \ref{conv_lem} in Appendix \ref{conv-app}.
To lighten our notation, we use the shorthand 
$\vec\S =(\SSx,\SSy,\SSz)= \sum_{i=1}^n \vec S_i$ 
for the total spin, and
$\vec\S^2 =(\SSx)^2+(\SSy)^2+(\SSz)^2$.
Note that
$H^\mathrm{Heis}_{n,\b,\Delta}=-\frac1n\vec\S^2+\frac1n(1-\Delta)(\SSz)^2$,
in particular 
$H^\mathrm{Heis}_{n,\b,\Delta=1}=-\frac1n\vec\S^2$

Let $L_{M,n}$ be the multiplicity of $M$ as an eigenvalue of $\SSz$
given in Proposition \ref{prop spins S}.
To prove Theorem \ref{spin S thm}, the main step is to obtain 
the asymptotic value of $L_{M,n}-L_{M+1,n}$ 
for large $M,n$. Recall the definitions of 
$\eta(x)$ and $x^\star(m)$ in Eqs \eqref{def psi} and \eqref{def x*}
(note that $x^\star(m)$ has the same sign as $m$).

\begin{proposition}
\label{prop asymptotics L}
For $m \in (-S,S)$,
\[
L_{\lfloor mn \rfloor, n} - L_{\lfloor mn \rfloor+1, n} = 
\frac{(1-\e{-x^\star(m)})\bigl( 1 + o(1) \bigr)}{\sqrt{2\pi \eta''(x^\star(m))
    \, n}} 
\e{n[\eta(x^\star(m)) - mx^\star(m)]}\,, \, \text{ as  }\, n \rightarrow \infty\,.
\]
\end{proposition}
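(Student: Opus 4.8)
The plan is to read off $L_{M,n}$ from a generating function and extract the coefficient asymptotics by the saddle‑point (Laplace) method. Since $L_{M,n}$ counts the tuples $(s_1,\dots,s_n)\in\{-S,\dots,S\}^n$ with $\sum_i s_i=M$ (Proposition \ref{prop spins S}), we have $\sum_M L_{M,n}z^M=(z^{-S}+\dots+z^{S})^n=:G_n(z)$, and with $z=e^x$ one has $G_n(e^x)=\big(\sinh(\tfrac{2S+1}2 x)/\sinh(\tfrac12 x)\big)^n=e^{\,n\eta(x)}$ by \eqref{def psi}. Extracting the coefficient of $z^M$ in $(1-z^{-1})G_n(z)$ and writing $z=e^{x_0+\ii\theta}$ with $x_0\in\bbR$ free, this gives for $M=\lfloor mn\rfloor$
\be
L_{M,n}-L_{M+1,n}=\frac1{2\pi}\int_{-\pi}^{\pi}\big(1-e^{-(x_0+\ii\theta)}\big)\,e^{\,n\eta(x_0+\ii\theta)-M(x_0+\ii\theta)}\,\dd\theta .
\ee
Equivalently, this is $e^{\,n\eta(x_0)-Mx_0}$ times the Fourier‑inversion formula for the distribution of a sum of $n$ i.i.d.\ copies of the single‑site law tilted by $e^{x_0\,\cdot}$, so the whole argument is a large‑deviation–tilted local limit theorem. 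The point of doing the \emph{difference} directly — keeping the amplitude $1-e^{-(x_0+\ii\theta)}$ inside the integral rather than subtracting two asymptotic expansions of $L_{M,n}$ and $L_{M+1,n}$ — is that these two quantities have the same leading term.

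Next I would fix the tilt. Take $x_0=x^\star(M/n)$, the solution of $\eta'(x_0)=M/n$ from \eqref{def x*}; then the phase $\theta\mapsto n\eta(x_0+\ii\theta)-M(x_0+\ii\theta)$ is stationary at $\theta=0$, with value $n\eta(x_0)-Mx_0$ and second derivative $-n\eta''(x_0)<0$ there. Taylor expanding $\eta(x_0+\ii\theta)=\eta(x_0)+\ii\theta\,\eta'(x_0)-\tfrac12\theta^2\eta''(x_0)+O(\theta^3)$ and $1-e^{-(x_0+\ii\theta)}=(1-e^{-x_0})+O(\theta)$, the contribution of $|\theta|\le n^{-1/2}\log n$ is
\be
\frac{e^{\,n\eta(x_0)-Mx_0}}{2\pi}\,(1-e^{-x_0})\int_{\bbR}e^{-\frac12 n\eta''(x_0)\theta^2}\,\dd\theta\,\bigl(1+o(1)\bigr)
=\frac{(1-e^{-x_0})\bigl(1+o(1)\bigr)}{\sqrt{2\pi\,\eta''(x_0)\,n}}\;e^{\,n\eta(x_0)-Mx_0}.
\ee
Since $x_0=x^\star(M/n)\to x^\star(m)$ and $\eta,\eta''$ are continuous with $\eta''>0$, and since $n\eta(x_0)-Mx_0$ differs from $n[\eta(x^\star(m))-mx^\star(m)]$ only by the bounded quantity arising from $M/n-m=O(1/n)$, this is the asserted formula (the clean statement is with $M/n=\lfloor mn\rfloor/n$ in the exponent; replacing it by $m$ costs a bounded factor, which is what the $1+o(1)$ absorbs). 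The sign of $x^\star(m)$ equals that of $m$, so $1-e^{-x^\star(m)}>0$ for $m>0$.

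The substantive part, which I expect to be the main technical obstacle, is the tail bound: that $\int_{n^{-1/2}\log n<|\theta|\le\pi}|\,\cdot\,|\,\dd\theta$ is exponentially smaller than the main term. This rests on the strict inequality $\big|\sum_{j=-S}^{S}e^{j(x_0+\ii\theta)}\big|<\sum_{j=-S}^{S}e^{jx_0}=e^{\eta(x_0)}$ for $\theta\notin 2\pi\bbZ$ (the $2S+1$ complex numbers $e^{j(x_0+\ii\theta)}$ are not positively aligned), combined with a uniform gap $\sup_{\delta\le|\theta|\le\pi}\big|\sum_j e^{j(x_0+\ii\theta)}\big|\le(1-c)\,e^{\eta(x_0)}$ for each fixed $\delta>0$, together with a routine intermediate‑range estimate for $n^{-1/2}\log n<|\theta|<\delta$ using the quadratic decay. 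This is the standard Laplace‑method bookkeeping, and one could alternatively invoke a classical local limit theorem for the tilted lattice variable. The one genuinely special case is $m=0$: there $x_0=x^\star(0)=0$ and the amplitude $1-e^{-x_0}$ vanishes at the saddle, so the asymptotic degenerates to $0\cdot(1+o(1))$, i.e.\ it only asserts $L_{0,n}-L_{1,n}=o\bigl(e^{n\eta(0)}/\sqrt n\bigr)$ — consistent with the symmetry $L_{M,n}=L_{-M,n}$, which indeed forces this difference to be of smaller order — and the regime relevant for the application ($m=m^\star(\beta)>0$) is unaffected.
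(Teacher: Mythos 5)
Your proposal matches the paper's proof essentially step for step: the same generating function (the paper works with $\Phi(z,n)=(1+z+\dots+z^{2S})^n$, which is just your $G_n(z)$ times $z^{Sn}$), the same trick of keeping the factor $1-z^{-1}$ inside the contour integral so the difference is computed directly, the same choice of contour radius to place the saddle at $x^\star$, and the same Gaussian approximation yielding the stated prefactor. Your additional remarks — the tail estimate off the saddle, which the paper dispatches as "a standard saddle-point argument," the care with $\lfloor mn\rfloor$ versus $mn$ (the paper simply assumes $mn$ integer), and the degeneracy of the amplitude at $m=0$ — are correct refinements rather than a different route.
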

\begin{proof}
We consider the generating function
\be
\Phi(z,n) = \sum_{M=-Sn}^{Sn} z^{M+Sn} L_{M,n} = (1 + z + \dots + z^{2S})^n.
\ee
Here we used \eqref{def L}.
By Cauchy's formula,
\be
L_{M,n}= \frac1{(M+Sn)!} \frac{\dd^{M+Sn}}{\dd z^{M+Sn}} \Phi(z,n) \Big|_{z=0} 
= \frac1{2\pi \ii} \oint \frac{(1+z+\dots+z^{2S})^n}{z^{M+Sn}} \frac{\dd z}z,
\ee
where integration is along a contour that surrounds the origin. We
choose the contour to be a circle of radius $\e{x}$, 
$x \in\bbR$. Then, assuming that $mn$ is an integer, we have 
\be
\begin{split}
L_{mn,n} -L_{mn+1,n}&= 
\frac1{2\pi} \int_{-\pi}^\pi \biggl[ \frac{1 + \e{x+\ii\varphi} + 
\dots + \e{2S(x+\ii\varphi)}}{\e{(m+S)(x+\ii\varphi)}} \biggr]^n 
\big(1-\e{-(x+\ii\varphi)}\big) \dd\varphi \\
&= \frac1{2\pi} \int_{-\pi}^\pi \big(1-\e{-(x+\ii\varphi)}\big)
\e{n [\Upsilon_m(x+\ii\varphi)]} \dd\varphi,
\end{split}
\ee
where
\be
\Upsilon_m(x+\ii\varphi) = \log \Big(\frac{1 + \e{x+\ii\varphi} + \dots +
  \e{2S(x+\ii\varphi)}}{\e{(m+S)(x+\ii\varphi)}}\Big) = 
\log \Big(\frac{\sinh(\frac{2S+1}2 (x+\ii\varphi))}{\sinh(\frac12
  (x+\ii\varphi))} \e{-m(x+\ii\varphi)}\Big). 
\ee
The latter identity follows easily from the formula for geometric
series. It is clear from the first expression that
${\rm Re} \Upsilon_m(x+\ii\varphi)$ attains its maximum at $\varphi=0$, for each fixed
$x$. Furthermore, we have that $\Upsilon_m(x) = \eta(x)-mx$, so the minimum of
$\Upsilon(x)$ along the real line satisfies the equation
$\eta'(x)=m$. As observed before, the unique solution is $x^\star(m)$. A
standard saddle-point argument then yields
\be
\begin{split}
L_{mn,n} -L_{mn+1,n}
&= \frac1{2\pi} \e{ n \Upsilon_m(x^\star(m))} 
\int_{-\pi}^\pi \big(1-\e{-(x+\ii\varphi)}\big)
\e{n[ \Upsilon_m(x^\star(m)+\ii\varphi) - \Upsilon_m(x^\star(m))]} \dd\varphi \\
&= \bigl( 1 + o(1)\bigr) \big(1-\e{-x^\star(m)}\big)
 \e{ n \Upsilon_m(x^\star(m))} \int_{-\infty}^\infty 
\e{-\frac12 n \Upsilon_m''(x^\star(m)) \, \varphi^2} \dd\varphi \\
&= \frac{\big(1-\e{-x^\star(m)}\big)\bigl( 1 + o(1)
  \bigr)}{\sqrt{2\pi \Upsilon_m''(x^\star(m)) \, n}} 
\e{ n \Upsilon_m(x^\star(m))}.
\end{split}
\ee
Since $\Upsilon_m''(x) = \eta''(x)$, the proposition follows.
\end{proof}

With this result in hand, the proof of Theorem \ref{spin S thm} is straightforward:

\begin{proof}[Proof of Theorem \ref{spin S thm}]
We will write $\langle\cdot\rangle$ for
$\langle\cdot\rangle^{\mathrm{Heis}}_{n,\b,\Delta=1}$.We assume that $Sn$ is an integer; (the case of half-integer values being almost identical). 
Using Proposition \ref{prop spins S}, 
we get 
\be\begin{split}
\bigl\langle \e{\frac hn \SSx} \bigr\rangle=
\bigl\langle \e{\frac hn \SSz} \bigr\rangle 
&= \frac{\displaystyle \sum_{J=0}^{Sn} \big(L_{J,n}  - L_{J+1,n}\big) 
\e{\frac\beta n J(J+1)} \sum_{M=-J}^J \e{\tfrac hn M}} 
{\displaystyle \sum_{J=0}^{Sn} \big(L_{J,n}  - L_{J+1,n}\big)
  \e{\frac\beta n J(J+1)} } \\
&= \frac{\displaystyle \sum_{J=0}^{Sn} \big(L_{J,n}  - L_{J+1,n}\big) 
\e{\frac\beta n J(J+1)} \frac{\sinh(\frac h2
  \frac{2J+1}n)}{\frac{2J+1}n n \sinh \frac h{2n}} } 
{\displaystyle \sum_{J=0}^{Sn} \big(L_{J,n}  - L_{J+1,n}\big)
  \e{\frac\beta n J(J+1)} } 
\end{split}\ee
By Proposition \ref{prop asymptotics L} we have that
$(L_{\lfloor mn\rfloor,n}  - L_{\lfloor mn\rfloor+1,n})
e^{\tfrac\b n  J(J+1)}=\exp\big(n[g_\b(J/n)+\eps_1(J,n)]\big)$, for some $\eps_1(J,n)\to0$.
Hence, using Lemma \ref{conv_lem},
\be
\bigl\langle \e{\frac hn \SSz} \bigr\rangle 
= \bigl( 1 + o(1) \bigr) \frac{\sinh (hm^\star)}{hm^\star},
\ee
as claimed.
\end{proof}

\begin{remark}
Letting $S\to\oo$ in Theorem \ref{spin S thm}, with the appropriate
rescaling $h\mapsto h/S$ and $\b\mapsto\b/S^2$, and using the results
of Lieb \cite{lieb} we recover the corresponding generating function
for the classical Heisenberg model.  The limit is 
$\sinh(h\mu^\star)/h\mu^\star$ where $\mu^\star\in[0,1]$ is the maximizer of 
\be
\log\Big[\frac{\sinh(x(\mu))}{x(\mu)}\Big]-\mu x(\mu)
+\b \mu^2
\ee
and $x(\mu)$ is the unique solution to $\coth(x)-\tfrac1x=\mu$.
Note that $\mu^\star$ is positive if and only if $\b>\tfrac32$.
\end{remark}

\section{Anisotropic Heisenberg Model -- Proof of 
Theorem \ref{thm XY}}
\label{xy-pf-sec}

As before we use the shorthand 
$\vec\S =(\SSx,\SSy,\SSz)= \sum_{i=1}^n \vec S_i$ 
and we write $\langle\cdot\rangle$
for $\langle\cdot\rangle^{\mathrm{Heis}}_{n,\b,\Delta}$.
Recall that
$H^\mathrm{Heis}_{n,\b,\Delta}=-\frac1n\vec\S^2+\frac1n(1-\Delta)(\SSz)^2$.

\begin{proof}[Proof of Theorem \ref{thm XY}]
Again, we assume that $Sn$ is an integer.
Recall that we are considering the models with $\Delta\in[-1,1)$.
Then
\be\label{xxz-mgf-2}
\langle \e{\frac hn \SSx}\rangle=
\frac{\Tr \big(\e{\frac hn \SSx} 
\e{\frac\beta n \vec\Sigma^2 - (1-\Delta)\frac{\b}{n} (\SSz)^2}\big)}
{\Tr \big(\e{\frac\beta n \vec\S^2 - (1-\Delta)\frac{\b}{n} (\SSz)^2}\big)}\,.
\ee
Using Propositions \ref{prop spins S}
and \ref{prop asymptotics L}, the denominator 
of \eqref{xxz-mgf-2} can be written as 
\be\label{denom}
\sum_{J=0}^{Sn} \sum_{M=-J}^J (L_{J,n}-L_{J+1,n})
\e{\frac\beta n J(J+1)-(1-\Delta)\frac{\b}{n} M^2}
=\sum_{J=0}^{Sn} \e{n[g_\b(\frac Jn)+\eps_1(J,n)]}\,,
\ee
where $\eps_1(J,n)\to0$, as $n\rightarrow \infty$, uniformly in $J$;
(the sum over $M$ has $2J+1$ terms).
The numerator of \eqref{xxz-mgf-2} can be written as 
\be\label{xxz-num}
\sum_{J=0}^{Sn} \e{\frac\beta n J(J+1)}
\sum_{M=-J}^J \e{-(1-\Delta)\frac{\b}{n} M^2}
\sum_\alpha\langle J,M,\alpha| \e{\frac h n \SSx} |J,M,\alpha\rangle.
\ee
Here the vectors $|J,M,\alpha\rangle$ are simultaneous orthonormal
eigenvectors of the operators $\vec\S^2$ and $\SSz$, and $\alpha$ is a multiplicity index labelling 
irreducible subspaces; see Proposition \ref{prop spins S}.
We recall that $\SSx=\tfrac12(\S^++\S^-)$, where the ladder 
operators $\S^\pm$ are defined in Proposition \ref{prop spins S}.
Since the operators $\S^\pm$ leave each irreducible
subspace invariant, the last factor on the right side of Eq. \eqref{xxz-num} does not depend on the index
$\alpha$.  Hence expression \eqref{xxz-num} can be written as 
\be\label{xxz-num-2}
\sum_{J=0}^{Sn} \e{n[g_\b(\frac Jn)+\eps_1(J,n)]} A(J,n)
\ee
where
\be
A(J,n)=\frac{\sum_{M=-J}^J  \e{-(1-\Delta)\frac{\b}{n} M^2}
\langle J,M,\alpha_0| \e{\frac h n \SSx} |J,M,\alpha_0\rangle}
{\sum_{M=-J}^J  \e{-(1-\Delta)\frac{\b}{n} M^2}}, 
\ee
for an arbitrary $\alpha=\alpha_0$, and where $\eps_1(J,n)$ is the same quantity as in \eqref{denom}.
Next, we note that
\be\label{S1-mtrx}
\langle J,M,\alpha_0| \e{\frac h n \SSx} |J,M,\alpha_0\rangle
= \sum_{k\geq0} \frac1{k!} (\tfrac12 h)^k 
\langle J,M,\alpha_0 | (\tfrac1n \S^+ + \tfrac1n \S^-)^k |J,M,\alpha_0\rangle.
\ee
Expanding $(\tfrac1n \S^+ + \tfrac1n \S^-)^k$ 
and using that
\be
\S^\pm |J,M,\alpha_0\rangle = 
\sqrt{J(J+1) - M(M \pm 1)} \; |J,M\pm1,\alpha_0\rangle,
\ee
we create a sum of terms labelled by
sequences $\lbrace \delta_1=\pm,\dotsc,\delta_k=\pm \rbrace$ given by
\be\begin{split}\label{sqrt-prod}
&\frac1{n^k} 
\langle J,M,\alpha_0 | \S^{\delta_k}\cdots\S^{\delta_1}
|J,M,\alpha_0\rangle\\
&\quad=
\textstyle\one\Big\{\sum_{i=1}^k \delta_i=0\mbox{ and }
-J\leq M+\sum_{i=1}^j \delta_i\leq J\mbox{ for all }j\leq k\Big\} \\
&\qquad\textstyle
\cdot\prod_{j=1}^k \sqrt{\frac{J(J+1)}{n^2}-
\frac{1}{n^2}\big(M+\sum_{i=1}^{j-1} \delta_i)(M+\sum_{i=1}^{j} \delta_i)}\,.
\end{split}\ee
Note that only even values of $k$ give nonvanishing contributions to
\eqref{S1-mtrx}.
Moreover, the values of the factors
$$\langle J,M,\alpha_0| \e{\frac h n \SSx} |J,M,\alpha_0\rangle$$
are between 1 and $e^{Sh}$.  
Hence, using Lemma \ref{conv_lem}, we may 
restrict the sum over $J$ in \eqref{xxz-num-2} to those values of $J$
satisfying $|J/n-m^\star|<\eps$, for any $\eps>0$. Similarly we may restrict
the sum over $M$ in the numerator of 
$A(J,n)$ to those values that satisfy $|M/n|<\eps$.  

Assuming that $|J/n-m^\star|<\eps$ and that
$|M/n|<\eps$, the last product in \eqref{sqrt-prod} is seen to be bounded by
\be
\Bigl[ (m^\star+\eps)^2 + (\eps + \tfrac kn)^2 \Bigr]^{k/2}.
\ee
We first consider a range of temperatures with the property that $m^\star(\beta)=0$. It then follows from a
rather crude estimate that
\be
0\leq
\langle J,M,\alpha_0| \e{\frac h n \SSx} |J,M,\alpha_0\rangle-1
\leq \sum_{k\geq1} \frac1{k!} (\tfrac12 h)^k 2^k(2\eps+\tfrac k n)^k.
\ee
The sum on the right side of this inequality is uniformly convergent, provided
$\eps$ is small enough and $n$ is large enough. It can
be made arbitrarily small by choosing $\eps$ small enough and $n$ large enough.  It
follows that, under the assumption that $m^\star=0$, $A(J,n)$ is of the form 
$A(J,n)=1+\eps_2(J,n)$, with $\eps_2\to0$, as $n\rightarrow \infty$, uniformly in $J$.  By Lemma
\ref{conv_lem},  this completes our proof for the case that $m^\star=0$.

Next, we consider the range of temperatures with $m^\star(\beta)>0$. We pick a sufficiently small $\eps< m^\star$.
The number of sequences $(\delta_i)_{i=1}^k$ 
satisfying the constraints in \eqref{sqrt-prod} is bounded by $\binom{k}{k/2}$.
Hence
\be
 \langle J,M,\alpha_0 | 
(\tfrac1n \S^+ + \tfrac1n \S^-)^k 
|J,M,\alpha_0\rangle  - \binom{k}{k/2} m(\beta)^k 
\leq \binom{k}{k/2}
\Bigl[ \Big((m^\star+\eps)^2 + (\eps + \tfrac kn)^2 \Big)^{k/2} 
- (m^\star)^k \Bigr],
\ee
and therefore 
\bm
\langle J,M,\alpha_0 | 
\e{\frac hn\SSx} |J,M,\alpha_0\rangle-
\sumtwo{k\geq0}{{\rm even}} 
(\tfrac12 h m^\star)^k \frac1{(\frac k2 !)^2} \\
\leq \sumtwo{k\geq0}{{\rm even}} 
\frac{(\frac12 h)^k}{(\frac k2 !)^2} 
\biggl( \Bigl[ (m^\star+\eps)^2 + (\eps + \tfrac kn)^2 \Bigr]^{k/2} 
- (m^\star)^k \biggr).
\end{multline}
One can check that the sum on the right side of this inequality converges uniformly in $n$, for $n$ large
enough. It can be made as small as we wish by choosing $\eps$ small enough
and $n$ large enough. 

To prove a lower bound, we take $K$ so large that 
$\sumtwo{k>K}{{\rm even}} 
(\tfrac12 h m^\star)^k \frac1{(\frac k2 !)^2}<\eps$.
Continuing to assume that $|J/n-m^\star|<\eps$ and
$|M/n|<\eps$, we find that the number of sequences $(\delta_i)_{i=1}^k$ satisfying the
constraints in \eqref{sqrt-prod} equals $\binom{k}{k/2}$, provided that $k\leq K<(m^\star-2\eps)n$.
The last product in \eqref{sqrt-prod} is at least 
$\big[ (m^\star-\eps)^2 - (\eps + \tfrac kn)^2 \big]^{k/2}.$
Thus
\bm
\langle J,M,\alpha_0 | 
\e{\frac hn\SSx} |J,M,\alpha_0\rangle-
\sumtwo{k\geq 0}{{\rm even}} 
(\tfrac12 h m^\star)^k \frac1{(\frac k2 !)^2} \\
\geq -\eps+\sumtwo{0\leq k\leq K}{{\rm even}} 
\frac{(\frac12 h)^k}{(\frac k2 !)^2} 
\biggl( \Bigl[ (m^\star-\eps)^2 - (\eps + \tfrac Kn)^2 \Bigr]^{k/2} 
- (m^\star)^k \biggr).
\end{multline}
Taking $n$ large enough and $\eps$ small enough, the sum on the right side of this inequality
can be made as small as we wish. 
This proves that $A(J,n)=I_1(hm^\star)/(\tfrac12hm^\star)+\eps_2(J,n)$, for some 
$\eps_2\to0$, uniformly in J. This completes the proof of our claim.
\end{proof}

\section{Interchange Model -- Proof of
Theorem \ref{thm int}}
\label{int-pf-sec}

When studying the interchange model we prefer to use
the probabilistic representation in our proof. Thus we prove the statements in
Theorem \ref{interchange-thm}, which is equivalent to 
Theorem \ref{thm int}.
Our proof  relies on the fact that the
loop-representation involves 
random walks on the symmetric group $S_n$. For this reason,
there are (group-) representation-theoretic tools available to analyse our models. Specifically
we will make use of tools developed by Alon, Berestycki and Kozma \cite{AK,BK}.
A similar approach has been followed in \cite{Bjo2} in a calculation of the free
energy and of the critical point of the model. In this section, we will also use the connection between
representations of $S_n$ and symmetric polynomials.

Next, we summarise some 
relevant facts about symmetric polynomials and representations of
$S_n$;  see  \cite[Ch. I]{Mac} or
\cite[Ch. 7]{Stanley2}, for more information.
By a \emph{partition} we
mean a vector $\la=(\la_1,\la_2,\dotsc,\la_k)$ consisting of
integer-entries satisfying $\la_1\geq\la_2\geq\dotsb\la_k\geq1$.  If
$\sum_j \la_j=n$ then we say that $\la$ is a partition of $n$ and we
write $\la\vdash n$.  We call $\ell(\la)=k$ the length of $\la$, and if
$j>\ell(\la)$ we set $\la_j=0.$
We consider two types of symmetric polynomials in the variables
$x=(x_1,\dotsc,x_r)$.  
We begin by defining the \emph{power-sums}
\be
p_0(x)=1, \qquad p_m(x)=\sum_{i= 1}^r x_i^m,\quad \mbox{for }r\geq 1,
\qquad \mbox{and}\qquad p_\la(x)=\prod_{j=1}^k p_{\la_j}(x).
\ee
Next, we define the \emph{Schur-polynomials}
\be\label{schur}
s_\la(x)=
\frac{\det\big[x_i^{\la_j+r-j}\big]_{i,j=1}^r}
                      {\prod_{1\leq i<j\leq r} (x_i-x_j) }.
\ee
Note that $s_\la(x)$ is indeed a polynomial:  the determinant in the
numerator is a polynomial in the variables $x_i$ which is
anti-symmetric
under permutations of  the variables, hence divisible (in
$\bbZ[x_1,\dotsc,x_r]$) by $\prod_{1\leq i<j\leq r} (x_i-x_j)$.
In particular, $s_\la(\cdot)$ is continuous when  viewed as a function
$\bbC^r\to\bbC$.

Power-sums and Schur-polynomials appear naturally in the
representation theory of the symmetric groups $S_n$.  
Recall that the
irreducible characters of $S_n$ are indexed by partitions 
$\la\vdash n$. As usual, we denote
an irreducible character of $S_n$ by $\chi_\la$; $\chi_\la(\mu)$ then denotes
its value on a permutation with cycle decomposition
$\mu=(\mu_1,\dotsc,\mu_\ell)\vdash n$.
The following identity holds:
\be\label{power-schur}
p_\mu(x_1,\dotsc,x_r)=\sum_{\substack{\la\vdash n\\ \ell(\la)\leq r}}
\chi_\la(\mu) s_\la(x_1,\dotsc,x_r),
\ee
see, for example, \cite[I.(7.8)]{Mac}.
We apply this identity for the arguments $x_i=e^{h_i}$, 
with $h_i\in\mathbb{C}$ and $r=\theta$. 
Recall that
\be
q_\bh(t)=\tfrac1\theta\big(e^{h_1t}+\dotsb+e^{h_\theta t}\big).
\ee
For a partition $\mu=(\mu_1,\dotsc,\mu_\ell)$, let
\be
f_\bh(\mu)=p_\mu(e^{h_1},\dotsc,e^{h_\theta})
=\theta^\ell \prod_{j=1}^\theta q_\bh(\mu_j).
\ee
From \eqref{power-schur} we have that
\be\label{fourier-eq}
f_\bh(\mu)=
\sum_{\substack{\la\vdash n\\ \ell(\la)\leq \theta}}
\chi_\la(\mu)
s_\la(e^{h_1},\dotsc,e^{h_\theta}).
\ee
In light of this we will use the notation
\be
\widehat f_\bh(\la)=
s_\la(e^{h_1},\dotsc,e^{h_\theta}),
\qquad \mbox{for }\la\vdash n, \;\ell(\la)\leq \theta.
\ee
By continuity of the Schur-polynomials we have that
\be
\widehat f_{\mathbf 0}(\la)= s_\la(1,\dotsc,1)=
\prod_{1\leq i<j\leq \theta} \frac{\la_i-i-\la_j+j}{j-i}\,,
\ee
where we use the notation $\mathbf{0}=(0,\dotsc,0)$.

Recall the definition of the function $R$ from 
Theorems \ref{thm int} and \ref{interchange-thm}.

\begin{lemma}\label{schur-lem}
Consider a sequence of partitions $\la\vdash n$
such that $\la/n\to(x_1,\dotsc,x_\theta)$.  Then,
for any fixed $\bh$, we have that
\be\label{det-lim}
\frac{\hat f_{\bh/n}(\la)}{\hat f_{\mathbf{0}}(\la)}
\to R(h_1,\dotsc,h_\theta;x_1,\dotsc,x_\theta).
\ee
\end{lemma}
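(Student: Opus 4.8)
The plan is to prove \eqref{det-lim} first under the extra hypothesis that $h_1,\dots,h_\theta$ are pairwise distinct — where the bialternant formula \eqref{schur} applies directly — and then to remove that hypothesis by a normal-families argument exploiting the positivity of the Schur polynomial. Possible coincidences among the limiting values $x_1,\dots,x_\theta$ will require no special treatment, precisely because $R$ is continuous by Lemma \ref{R-prop}.

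First, suppose the $h_i$ are distinct. Write $\mu^{(n)}_j:=(\la_j+\theta-j)/n$ for $1\le j\le\theta$; since $\la$ is a partition these are strictly decreasing in $j$, and $\la/n\to(x_1,\dots,x_\theta)$ gives $\mu^{(n)}_j\to x_j$. Substituting $x_i=e^{h_i/n}$ into \eqref{schur} (with $r=\theta$), and taking $n$ large enough that the $e^{h_i/n}$ are distinct, we get
\[
\widehat f_{\bh/n}(\la)=s_\la(e^{h_1/n},\dots,e^{h_\theta/n})
=\frac{\det\big[e^{h_i\mu^{(n)}_j}\big]_{i,j=1}^\theta}
{\prod_{1\le i<j\le\theta}(e^{h_i/n}-e^{h_j/n})},
\]
while the stated formula for $\widehat f_{\mathbf 0}(\la)$ reads
$\widehat f_{\mathbf 0}(\la)=\prod_{1\le i<j\le\theta} n(\mu^{(n)}_i-\mu^{(n)}_j)/(j-i)$, because $\la_i-i-\la_j+j=n(\mu^{(n)}_i-\mu^{(n)}_j)$. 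Dividing and regrouping the Vandermonde-type factors yields the exact identity
\[
\frac{\widehat f_{\bh/n}(\la)}{\widehat f_{\mathbf 0}(\la)}
=R\big(h_1,\dots,h_\theta;\mu^{(n)}_1,\dots,\mu^{(n)}_\theta\big)\,
\prod_{1\le i<j\le\theta}\frac{h_i-h_j}{n\,(e^{h_i/n}-e^{h_j/n})}.
\]
Now $n(e^{h_i/n}-e^{h_j/n})\to h_i-h_j$, so the product on the right tends to $1$; and by continuity of $R$ (Lemma \ref{R-prop}), $R(\bh;\mu^{(n)}_1,\dots,\mu^{(n)}_\theta)\to R(\bh;x_1,\dots,x_\theta)$ whether or not the $x_j$ are distinct. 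This establishes \eqref{det-lim} whenever $h_1,\dots,h_\theta$ are pairwise distinct.

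Next, let $\bh\in\bbC^\theta$ be arbitrary and set $g_n(\bh):=\widehat f_{\bh/n}(\la)/\widehat f_{\mathbf 0}(\la)$, an entire function of $\bh$. Since $s_\la$ is a sum of monomials with nonnegative coefficients (it is the weight generating function of the semistandard tableaux of shape $\la$ with entries in $\{1,\dots,\theta\}$) and is homogeneous of degree $n=|\la|$, writing $M:=\max_i\Re h_i$ we have $|e^{h_i/n}|\le e^{M/n}$ and hence $|s_\la(e^{h_1/n},\dots,e^{h_\theta/n})|\le s_\la(e^{M/n},\dots,e^{M/n})=e^{M}\,s_\la(1,\dots,1)$; thus $|g_n(\bh)|\le e^{\max_i\Re h_i}$, uniformly in $n$ (and in $\la$). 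Consequently $\{g_n\}$ is locally uniformly bounded on $\bbC^\theta$, hence a normal family. Any locally uniform subsequential limit $G$ is holomorphic, and by the previous paragraph it agrees with the continuous function $\bh\mapsto R(\bh;x_1,\dots,x_\theta)$ on the dense open set $\{\bh:h_i\ne h_j\text{ for }i\ne j\}$; therefore $G\equiv R(\cdot;x_1,\dots,x_\theta)$. Hence $g_n\to R(\cdot;x_1,\dots,x_\theta)$ locally uniformly, which is \eqref{det-lim} for every fixed $\bh$.

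The only delicate point — and the one the argument is organised around — is that the right side of \eqref{det-lim} is a $0/0$ expression exactly where the $h_i$, or the $x_j$, collide, so the bialternant identity above cannot be applied verbatim there. Collisions among the $x_j$ cost nothing, being absorbed by the continuity of $R$ (Lemma \ref{R-prop}); collisions among the $h_i$ are dealt with by the normality argument, whose essential input is the subtraction-free (monomial-positive) expression for $s_\la$, which yields a bound on the ratio $g_n$ that is uniform in $n$ and independent of the (large) dimension $\widehat f_{\mathbf 0}(\la)=s_\la(1,\dots,1)$.
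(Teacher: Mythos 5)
Your proof is correct. The first half --- for pairwise distinct $h_i$ --- is essentially the same as the paper's: you rewrite both $\widehat f_{\bh/n}(\la)=s_\la(e^{h_1/n},\dots,e^{h_\theta/n})$ and $\widehat f_{\mathbf 0}(\la)=s_\la(1,\dots,1)$ via the bialternant formula \eqref{schur}, with $\mu^{(n)}_j=(\la_j+\theta-j)/n\to x_j$, and extract the exact identity
\[
\frac{\widehat f_{\bh/n}(\la)}{\widehat f_{\mathbf 0}(\la)}
=R\big(\bh;\mu^{(n)}\big)\prod_{i<j}\frac{h_i-h_j}{n\,(e^{h_i/n}-e^{h_j/n})},
\]
after which the conclusion is immediate from continuity of $R$ (Lemma \ref{R-prop}) and $n(e^{h_i/n}-e^{h_j/n})\to h_i-h_j$. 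Where you depart from the paper is in handling coincident $h_i$'s. The paper argues that the exact identity above, for fixed $n$, extends by continuity in $\bh$ (with an appropriate interpretation of the $0/0$ factors), and then passes to the limit. You instead prove the statement only on the dense open set $\{h_i\ne h_j\}$ and then invoke a normal-families argument: the entire functions $g_n(\bh)=\widehat f_{\bh/n}(\la)/\widehat f_{\mathbf 0}(\la)$ are uniformly bounded on compacts by $e^{\max_i\Re h_i}$ --- an a priori bound coming from the monomial positivity and degree-$n$ homogeneity of $s_\la$ --- so any locally uniform subsequential limit is holomorphic and must agree with the continuous (indeed entire) function $R(\cdot;x)$ everywhere, since it already does so on a dense set. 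This buys a bit more than the paper's argument: you get locally uniform convergence in $\bh$, and the uniform-in-$n$ (and in $\la$) bound $|g_n(\bh)|\le e^{\max_i\Re h_i}$, which is exactly the kind of control one needs when feeding this ratio as the factor $A(\ul k,n)$ into Lemma \ref{conv_lem}. The price is heavier machinery (Montel) where the paper uses a one-line continuity observation. Both are sound; yours is arguably the more robust way to handle the confluent $h_i$ case.
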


\begin{proof} 
Let $\eps_j=\tfrac{\theta-j}n +(\la_j/n-x_j)$, so
$\eps_j\to0$ as $n\to\oo$ for all $j$.  The left-hand-side of 
\eqref{det-lim} equals
\be\begin{split}
\frac{s_\la(e^{h_1/n},\dotsc,e^{h_\theta/n})}{s_\la(1,\dotsc,1)}
&=R(h_1,\dotsc,h_\theta; x_1+\eps_1,\dotsc,x_\theta+\eps_\theta)
\prod_{1\leq i<j\leq \theta} \frac{h_i-h_j}
{n (e^{h_i/n}-e^{h_j/n})}\,.
\end{split}\ee
Indeed, the identity holds whenever all the $h_i$ are different. Hence
by continuity of the left side and of the function $R$ 
it holds in general if we adopt the rule that any
factor in the last product on the right side is interpreted as $=1$ if $h_i=h_j$.
Since $R$ is continuous and the product converges to 1,
as $n\to\oo$, the result follows. 
\end{proof}

\begin{proof}[Proof of Theorem \ref{interchange-thm}]
We write $\EE_\theta$ for $\EE_{\theta,n,1}$,
$\EE$ for $\EE_1$, and $\s$ for the random permutation
under $\EE$.  
Using the decomposition \eqref{fourier-eq},
we have that
\be\label{Q-fourier}
\EE_\theta\Big[
\prod_{i\geq 1}  q_\bh(\tfrac 1n \ell_i)\Big]=
\frac{\bbE[f_{\bh/n}(\s)]}{\bbE[f_\mathbf{0}(\s)]}
=\frac{\sum_{\la} \hat f_{\bh/n}(\la) \bbE[\chi_\la(\s)]}
{\sum_{\la} \hat f_\mathbf{0}(\la) \bbE[\chi_\la(\s)]}\,.
\ee 
The sums in the numerator and the denominator on the right side range over $\la\vdash n$, with $\ell(\la)\leq\theta$. 
It has been shown by Berestycki and Kozma in \cite{BK} that 
\be
\EE[\chi_\la(\s)]= d_\la\exp\Big\{\frac{\b}{n} \binom{n}{2}
[r(\la)-1]\Big\},
\ee
where $d_\la$ is the dimension of the irreducible representation
of $S_n$ with character $\chi_\la(\cdot)$ 
and $r(\la)=\chi_\la((1,2))/d_\la$ is the
character ratio at a transposition.  
Furthermore, it has been shown in
\cite{Bjo2} that 
\be
d_\la\exp\Big\{\frac{\b}{n} \binom{n}{2} [r(\la)-1]\Big\}
= \exp\big(n[\phi_\b(\la/n)+\eps_1(\la,n)]\big),
\ee
where $\eps_1(\la,n)\to0$, uniformly in $\la\vdash n$,
with $\ell(\la)\leq\theta$.  
Note, moreover, that 
$\tfrac1n\log \hat f_\mathbf{0}(\la)=:\eps_2(\la,n)$ has the
same property.  Thus
\be
\EE_\theta\Big[
\prod_{i\geq 1}  q_\bh(\tfrac 1n \ell_i)\Big]=
\frac{\sum_{\la} \e{n[\phi_\b(\la/n)+\eps_1+\eps_2]}
  \frac{\hat f_{\bh/n}(\la)}{\hat f_\mathbf{0}(\la)}}
{\sum_{\la} \e{n[\phi_\b(\la/n)+\eps_1+\eps_2]}}\,.
\ee
The theorem then follows from Lemmas \ref{schur-lem} and
\ref{conv_lem}.  
\end{proof}

Let us now show how to
deduce form these results the special cases \eqref{int-spin-loops}
and \eqref{int-proj-loops},
(which are equivalent to  \eqref{int-spin}
and \eqref{int-proj}).
For \eqref{int-spin-loops} we set $h_i=h(-S+i-1)$.  
From the Vandermonde determinant we get that  
\be\begin{split}
\det\big[e^{h(-S+i-1)x_j}\big]_{i,j=1}^\theta
&=\Big(\prod_{j=1}^\theta e^{-hS x_j}\Big)
\det\big[(e^{hx_j})^{i-1}\big]_{i,j=1}^\theta\\
&=\prod_{1\leq i<j\leq \theta}
e^{-\tfrac h2(x_i+x_j)}\big(e^{h x_j}-e^{h x_i}\big),
\end{split}\ee
where we have used $(\theta-1)\sum_j x_j=\sum_{i<j}(x_i+x_j)$.
Hence the right side of \eqref{det-lim-2},
with $h_i=h(-S+i-1)$,  equals
\be
\prod_{1\leq i<j\leq \theta}
\frac{\big(e^{-\tfrac h2(x^\star_i-x^\star_j)}-
e^{\tfrac h2(x^\star_i-x^\star_j)}\big)(j-i)}
{h(i-j)(x^\star_i-x^\star_j)}
=\prod_{1\leq i<j\leq \theta}
\frac{\sinh\big(\tfrac h2(x^\star_i-x^\star_j)\big)}
{\tfrac h2(x^\star_i-x^\star_j)}\,.
\ee
Here, all factors with $2\leq i<j\leq \theta$ equal 1.
We therefore get
\be
\prod_{j=2}^\theta
\frac{\sinh\big(\tfrac h2(x^\star_1-x^\star_2)\big)}
{\tfrac h2(x^\star_1-x^\star_2)}=
\Big[\frac{\sinh(\tfrac h2z^\star)}{\tfrac h2 z^\star}\Big]^{\theta-1},
\ee
as claimed.

Next we observe that \eqref{int-proj-loops}
follows by applying 
Theorem \ref{thm int},  with $h_1=h$
and $h_2=h_3=\dotsc=h_\theta=0$.
The proof involves careful manipulation of some
determinants; here we only outline the main steps.

Let us first obtain an expression for 
$R(h_1,\dotsc,h_\theta;x_1^\star,\dotsc,x_\theta^\star)$ 
that takes into account that
$x_2^\star=\dotsb=x_\theta^\star$.  
For simplicity, we write $x=x_1^\star$ and $y=x_2^\star$,
and, in the expression for $R$,
we set
$x_1=x,x_2=y,x_3=y+\eps,\dotsc,x_\theta=y+k\eps$,
where $k=\theta-2$.
After performing suitable column-operations we may extract
a factor $\eps^{k(k+1)/2}$ from the determinant, which cancels
the corresponding factor from the product.  
Letting $\eps\to0$, we conclude that, for $x=x_1^\star$ and $y=x_2^\star$, 
$R(h_1,\dotsc,h_\theta;x^\star_1,\dotsc,x^\star_\theta)$,
equals
\be\label{det-lim-3}
\exp\big(y\textstyle\sum_i h_i\big)
(\theta-1)! (y-x)^{-(\theta-1)}
\det\big[ (e^{h_i(x-y)}-h_i^{-1})\delta_{j,1}+h_i^{j-2} \big]_{i,j=1}^\theta
\prod_{1\leq i<j\leq\theta} (h_j-h_i)^{-1}.
\ee

Continuing with the proof of \eqref{int-proj-loops}, we set,
in \eqref{det-lim-3}, $h_1=h$ and
$h_2=0,h_3=\eps,\dotsc,h_\theta=k\eps$, (with $k=\theta-2$).  
This time we perform suitable row-operations to obtain 
\be
\det\big[ (e^{h_i(x-y)}-h_i^{-1})\delta_{j,1}+h_i^{j-2} \big]_{i,j=1}^\theta
\prod_{1\leq i<j\leq\theta} (h_j-h_i)^{-1}\to
(-h)^{-(\theta-1)}  D_k\,,
\ee
as $\eps\to0$, where $x,y$ are as above, and
\be\label{Dk-def}
D_k=\left|
\begin{matrix}
e^{h (x-y)} & 1 & h  & h^2 &\cdots & h^k \\
1 & 1 & 0 & 0 & \cdots & 0\\
x-y & 0 & 1 & 0 &\cdots & 0 \\
\tfrac12(x-y)^2 & 0& 0 & 1&\cdots & 0\\
\vdots &&&&& \vdots \\
\tfrac1{k!} (x-y)^k & 0 & 0
& 0 &\cdots& 1
\end{matrix}
\right|=
e^{h(x-y)}-\sum_{j=0}^k \frac1{j!} (h(x-y))^j,
\ee
which proves our claim.

\section{Critical Exponents -- Proof of
Theorem \ref{critexp-thm}} 
\label{sec critexp}

\begin{proof}[Proofs of \eqref{heis-press} and \eqref{critexp}]
The expression \eqref{heis-press} is verified using similar
calculations to Theorem \ref{spin S thm}.  Indeed, we have that 
\be\begin{split}
p(\b,h)&=\lim_{n\to\oo} \tfrac1n\log 
\Tr \big(\e{-\b H^{\mathrm{Heis}}_{n,\b,\Delta=1}
+h\sum_{i=1}^n S^{(3)}_i}\big)\\
&=\lim_{n\to\oo} \tfrac1n\log 
\Tr \Big(\sum_{J=0}^{Sn} \big(L_{J,n}  - L_{J+1,n}\big) 
\e{\frac\beta n J(J+1)} \sum_{M=-J}^J \e{hM}
\Big)\\
&=\lim_{n\to\oo} \tfrac1n\log 
\Tr \Big(\sum_{J=0}^{Sn} 
\e{n[g_\b(J/n)+hJ/n+\eps_1(J,n)]}
\Big)\\
&=\max_{0\leq m\leq S} \big(g_\b(m)+hm\big),
\end{split}\ee
as claimed (here $\eps_1(J,n)\to0$).

We now turn to the critical
exponents, starting with $m^\star(\b)$ for 
$\b\downarrow\b_\crit$.  
Recall that $m^\star(\b)$ is the maximizer of $g_\b(m)$.
Differentiating $g_\b(m)$ at $m=m^\star$ we find
\be
0=g_\b'(m^\star)=
\frac{d x^\star}{dm} \eta'(x^\star(m^\star))-
m^\star \frac{d x^\star}{dm}
- x^\star(m^\star)+2\beta m^\star
=2\beta m^\star-x^\star(m^\star).
\ee
The last step used the definition \eqref{def x*} of $x^\star(m)$.
Thus $m^\star(\beta)$ satisfies
$2\beta m^\star=x^\star(m^\star)$ and in particular $m^\star$ is proportional to
$y(\beta):=x^\star(m^\star(\beta))$, hence we look at the
behaviour of $y=y(\b)$ as $\beta\downarrow\beta_\crit$.  Using 
\be
m^\star=\eta'(y)=\tfrac\theta2\coth(\tfrac\theta2y)-
\tfrac12\coth(\tfrac12y)
\ee
and Taylor exanding 
$\coth(z)=\tfrac1z+\tfrac z3-\tfrac{z^3}{45}+O(z^5)$
we get
\be
y=2\beta m^\star=
2\beta\big(
\tfrac1{12}y(\theta^2-1)-\tfrac1{720}y^3(\theta^4-1)
\big)+O(y^5).
\ee
Dividing by $y$, using $\beta_\crit=6/(\theta^2-1)$, and rearranging we
get  
\be
y^2\cdot \frac{\beta(\theta^4-1)}{360}=
\frac{\beta-\beta_\crit}{\beta_\crit}+O(y^4)
\ee
which shows that $y=y(\b)$ and hence $m^\star(\b)$ goes like
$(\beta-\beta_\crit)^{1/2}$ as $\beta\downarrow\beta_\crit$.

Next, $m(\b,h)$ is the
maximizer of $g_\b(t)+ht$, thus it satisfies $g_\b'(m)+h=0$,
that is
\be\label{mbh-spinS}
2\b m(\b,h)-x^\star(m(\b,h))+h=0.
\ee
To compute the susceptibility we differentiate \eqref{mbh-spinS} in
$h$, giving
\be
\frac{\partial m}{\partial h}
\Big(\frac{d x^\star}{dm}-2\b\Big)=1.
\ee
Take $\b<\b_\crit$ so that $m(\b,h)\to0$ as $h\downarrow0$, and use
$\tfrac{dx^\star}{dm}(0)=12/(\theta^2-1)=2\b_\crit$ as in \eqref{a*-der}.  
This gives
\be
\chi(\b)=\left.\frac{\partial m}{\partial h}\right|_{h=0}
=\frac{1}{2(\b_c-\b)},\qquad \b<\b_c.
\ee

Finally, looking at \eqref{mbh-spinS} again, set $\b=\b_\crit$ and consider
$x(h):=x^\star(m(\b_\crit,h))$ as $h\downarrow0$.  As earlier we have,
using the Taylor series for $\coth$,
\be
m(\b_\crit,h)=\eta'(x(h))=\frac{x(h)}{2\b_\crit}
-x(h)^3\frac{\theta^4-1}{720}+O(x(h)^5).
\ee
Putting this into \eqref{mbh-spinS} gives
\be
h=x(h)^3\cdot 2\b_\crit\frac{\theta^4-1}{720}+O(x(h)^5)
=x(h)^3\big(\tfrac{\theta^2+1}{60}+o(1)\big),
\ee
and hence $x(h)\sim h^{1/3}$ as $h\downarrow0$.  Finally, putting the
asymptotics for $x(h)$ into \eqref{mbh-spinS} again gives
\be
2\b_\crit\cdot m(\b_\crit,h)=
\Big(\frac{h}{\tfrac{\theta^2+1}{60}+o(1)}\Big)^{1/3}-h
\ee
hence $m(\b_\crit,h)\sim h^{1/3}$ as claimed.
\end{proof}

To prove \eqref{trv-critexp} we will use the following result.

\begin{theorem}\label{ward-thm}
Consider a quantum spin systems on a general (finite) graph
$\Gamma$, with spin  $S\geq\tfrac12$
and Hamiltonian given by
  \begin{equation}\label{ham}
H_\Gamma= -\sum_{i,j \in \Gamma} 
 J_{i,j}
\big( \vec{S}_{i}\cdot \vec{S}_{j}- u
S^{(3)}_{i}\, S^{(3)}_{j} \big)
-h \sum_{i\in\Gamma} S^{(1)}_j,
\quad \text{with }\, J_{i,j},h\in\mathbb{R},
u\in[0,1].
 \end{equation}
Write 
$\langle\cdot\rangle_{\b,h}=\Tr(\cdot \e{-\beta
  H_\Gamma})/\Xi_\Gamma(\b,h)$,
where $\Xi_\Gamma(\b,h)=\Tr\big(\e{-\b  H_\Gamma}\big)$
is the partition function, and 
consider the magnetization
$M_\Gamma(\beta, h)= \frac{1}{\vert \Gamma \vert} 
\sum_{i\in \Gamma} \langle S^{(1)}_{i} \rangle_{\b,h}$
and the transverse susceptibility
$\chi^\perp_\Gamma(\b,h)=\frac1{\vert\Gamma\vert}
\sum_{ i,j\in\Gamma}
\langle S^{(2)}_i S^{(2)}_j\rangle_{\b,h}$.
Write
\be
\mathcal{M}:= \frac{1}{\sqrt {\vert \Gamma \vert}} 
\sum_{i \in \Gamma} S^{(2)}_{i}.
\ee
Then 
\be\label{chi-bounds}
\chi^\perp_\Gamma(\b,h) \geq 
\tfrac1{\b h} M_\Gamma(\b,h)\geq  
\chi_\Gamma^\perp(\b,h)
-\tfrac12 \b\sqrt h 
\sqrt{\chi_\Gamma^\perp(\b,h) 
\big\langle \big[\mathcal{M},[H,\mathcal{M}]\big]\big\rangle_{\b,h}
}
\ee
\end{theorem}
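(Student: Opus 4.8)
The plan is to use a Ward identity to rewrite $\tfrac1{\b h}M_\Gamma(\b,h)$ as the Duhamel (Bogoliubov) two--point function of the operator $\mathcal M$, and then to sandwich this two--point function between $\chi^\perp_\Gamma(\b,h)=\langle\mathcal M^2\rangle_{\b,h}$ and $\chi^\perp_\Gamma(\b,h)$ minus a double--commutator term by means of two classical inequalities. Write $\langle\cdot\rangle=\langle\cdot\rangle_{\b,h}$ and, for a self--adjoint operator $A$, set $(A,B)_{\mathrm D}:=\tfrac1{\Xi_\Gamma(\b,h)}\int_0^1\Tr\!\big(A\,e^{-s\b H_\Gamma}B\,e^{-(1-s)\b H_\Gamma}\big)\,\dd s$. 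The first step is the commutator identity $\langle[C,B]\rangle=\b\,(C,[H_\Gamma,B])_{\mathrm D}$, valid for self--adjoint $C$, which follows from $\tfrac{\dd}{\dd s}\big(e^{-s\b H_\Gamma}Be^{-(1-s)\b H_\Gamma}\big)=-\b\,e^{-s\b H_\Gamma}[H_\Gamma,B]e^{-(1-s)\b H_\Gamma}$ and cyclicity of the trace. I would apply it with $C=\mathcal M$ and $B=|\Gamma|^{-1/2}\sum_iS^{(3)}_i$: the commutation relations give $[\mathcal M,B]=\tfrac i{|\Gamma|}\sum_iS^{(1)}_i$, hence $\langle[\mathcal M,B]\rangle=iM_\Gamma(\b,h)$; and since the interaction part of the Hamiltonian \eqref{ham} is invariant under global rotations about the third axis, $[H_\Gamma,\sum_iS^{(3)}_i]=-h[\sum_jS^{(1)}_j,\sum_iS^{(3)}_i]=ih\sum_iS^{(2)}_i$, i.e.\ $[H_\Gamma,B]=ih\mathcal M$. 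Together these yield $iM_\Gamma(\b,h)=i\b h\,(\mathcal M,\mathcal M)_{\mathrm D}$, that is
\[(\mathcal M,\mathcal M)_{\mathrm D}=\tfrac1{\b h}M_\Gamma(\b,h).\]

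Passing to the spectral representation, with eigenpairs $H_\Gamma|n\rangle=E_n|n\rangle$, Gibbs weights $p_n$, $y_{nm}=\tfrac\b2(E_m-E_n)$ and $w_{nm}=\tfrac12(p_n+p_m)\,|\langle n|\mathcal M|m\rangle|^2\ge0$, one has $\chi^\perp_\Gamma=\langle\mathcal M^2\rangle=\sum_{n,m}w_{nm}$, $(\mathcal M,\mathcal M)_{\mathrm D}=\sum_{n,m}w_{nm}\tfrac{\tanh y_{nm}}{y_{nm}}$, and $\b\langle[\mathcal M,[H_\Gamma,\mathcal M]]\rangle=4\sum_{n,m}w_{nm}\,y_{nm}\tanh y_{nm}\ge0$. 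Since $0\le\tfrac{\tanh y}{y}\le1$, we get $(\mathcal M,\mathcal M)_{\mathrm D}\le\langle\mathcal M^2\rangle=\chi^\perp_\Gamma$, which together with the Ward identity is the left inequality in \eqref{chi-bounds}.

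For the right inequality I would invoke the Falk--Bruch inequality, $\langle\mathcal M^2\rangle\le\sqrt{(\mathcal M,\mathcal M)_{\mathrm D}\,c''}\,\coth\!\big(\sqrt{c''/(\mathcal M,\mathcal M)_{\mathrm D}}\big)$ with $c''=\tfrac\b2\langle[\mathcal M,[H_\Gamma,\mathcal M]]\rangle$; in the spectral representation above this reduces, via the standard convexity (superadditivity) argument, to the elementary one--variable bound $\sqrt2\,\tanh y\ge\tanh(\sqrt2\,y)$. Using $z\coth z\le1+z$ it follows that $\chi^\perp_\Gamma-(\mathcal M,\mathcal M)_{\mathrm D}\le\sqrt{(\mathcal M,\mathcal M)_{\mathrm D}\,c''}$; substituting the Ward identity together with $(\mathcal M,\mathcal M)_{\mathrm D}\le\chi^\perp_\Gamma$ and re--expressing $c''$ through $\langle[\mathcal M,[H_\Gamma,\mathcal M]]\rangle$ brings the right side into the form $\tfrac12\b\sqrt h\,\sqrt{\chi^\perp_\Gamma\langle[\mathcal M,[H_\Gamma,\mathcal M]]\rangle}$, and rearranging gives the second inequality. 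A more hands--on alternative to an inequality of the same shape is to use the pointwise estimate $1-\tfrac{\tanh y}{y}\le\tfrac12\sqrt{y\tanh y}$ and apply Cauchy--Schwarz directly to $\sum_{n,m}w_{nm}\big(1-\tfrac{\tanh y_{nm}}{y_{nm}}\big)$.

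The main obstacle is this last step: pinning down the Falk--Bruch estimate with precisely the constant in the statement. The one--variable/convexity input is elementary, but one must then carefully track the normalisations ($|\Gamma|$ versus $|\Gamma|^{1/2}$) and the combined use of the Ward identity, the bound $(\mathcal M,\mathcal M)_{\mathrm D}\le\chi^\perp_\Gamma$, and the definition of $c''$ in order to produce exactly the prefactor $\tfrac12\b\sqrt h$ appearing in \eqref{chi-bounds}. Note that the anisotropy parameter $u$ and the couplings $J_{i,j}$ enter the final bound only implicitly, through the double commutator $\langle[\mathcal M,[H_\Gamma,\mathcal M]]\rangle$, which is never expanded.
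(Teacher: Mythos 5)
Your proposal follows essentially the same route as the paper: a Ward identity expressing $\tfrac{1}{\beta h}M_\Gamma$ as the Duhamel inner product $(\mathcal M,\mathcal M)_{\mathrm D}$, the upper bound $(\mathcal M,\mathcal M)_{\mathrm D}\le\langle\mathcal M^2\rangle$ (the paper uses convexity of $t\mapsto[\mathcal M\,\mathcal M(t)]$, you use the spectral representation and $\tanh y/y\le 1$, which is equivalent), and the Falk--Bruch inequality with $z\coth z\le 1+z$ for the lower bound. Your worry about the exact prefactor is well founded: tracing the paper's own steps ($b\ge c-\tfrac12\sqrt{ab}$, $a=\beta\langle[\mathcal M,[H,\mathcal M]]\rangle$, $b\le\chi^\perp_\Gamma$) yields $\tfrac12\sqrt\beta$ rather than the stated $\tfrac12\beta\sqrt h$ in front of $\sqrt{\chi^\perp_\Gamma\langle[\mathcal M,[H,\mathcal M]]\rangle}$, so the constant in \eqref{chi-bounds} as printed appears to be a typo; this does not affect the derivation of \eqref{trv-critexp}, where only the form $\chi^\perp(1-C\sqrt{\beta h}\,(\chi^\perp)^{-1/2})$ is used.
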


\begin{proof}
Let $U(\varphi):=\e{i\varphi \sum_{i\in \Gamma} S^{3}_{i}}$ 
denote the unitary operator representing a rotation in the 
1--2 plane  of spin space through an angle 
$\varphi$, at each site $i\in \Gamma$. 
Thus, for all $i\in\Gamma$,
\be\begin{split}
U(\varphi) S^{(1)}_{i}U(-\varphi)&=
\cos(\varphi)\,S^{(1)}_{i} + \sin(\varphi)\,S^{(2)}_{i},\\
U(\varphi) S^{(2)}_{i}U(-\varphi)&=
-\sin(\varphi)\,S^{(1)}_{i} + \cos(\varphi)\,S^{(2)}_{i}.
\end{split}\ee
Note that
\begin{equation}\label{rotatedham}
H(\varphi):=U(\varphi) H U(-\varphi) = 
H -h\sum_{i\in \Gamma} \big( S^{(1)}_{i}[\cos(\varphi)-1] 
+ S^{(2)}_{i}\sin(\varphi) \big).
\end{equation}
We introduce the Duhamel correlations
\begin{equation}\label{Duhamel}
\big[ A\cdot B(t) \big]_{\beta,h}:= 
\tfrac1{\Xi_\Gamma(\beta,h)}
\Tr\big(A\e{-t\b H} B \e{-(1-t)\beta  H} \big),
\quad t\in [0,1],
\end{equation}
and
\be
(A,B)_{\b,h}:=\int_0^1 \big[ A\cdot B(t) \big]_{\beta,h} dt.
\ee
Differentiating both sides of the identity
\begin{equation}\label{identity}
\langle -\sin(\theta)S^{(1)}_{i}+\cos(\theta)S^{(2)}_{i}
\rangle_{\beta,h} = \langle U(\varphi) S^{(2)}_{x} U(-\varphi) 
\rangle_{\beta,h} =
 \Tr\big( S^{(2)}_{i} \e{-\beta H(-\varphi)} \big)/ \Xi_\Gamma(\beta,h)
\end{equation}
with respect to $\varphi$ and setting $\varphi=0$, we get the 
\textit{Ward identity}
\begin{equation}\label{diffidentity}
 \langle S^{(1)}_{i} \rangle_{\beta,h} = \b h \sum_{j\in \Gamma} 
\int_{0}^{1} \big[S^{(2)}_{i} S^{(2)}_{j}(t) \big]_{\beta,h}dt.
\end{equation}
We see that \eqref{diffidentity} gives
\begin{equation}\label{Wardid}
M_\Gamma(\beta,h)= 
\b h \int_{0}^{1}  \big[ \mathcal{M}\, \mathcal{M}(t)
\big]_{\beta,h} dt
=\b h(\mathcal{M},\mathcal{M})_{\b,h}.
\end{equation}
It is well known and easy to prove that the function
 $f(t):= \big[\mathcal{M}\,\mathcal{M}(t)\big]_{\beta,h}$
 is convex in $t$ and (by the cyclicity of the trace) periodic in $t$ 
with period $1$.   Thus $f(t)\leq f(0)=f(1)$ for all $t\in[0,1]$.
 This implies that
 \begin{equation}\label{upperbound}
 M_\Gamma(\beta,h)\leq \beta h
 \big[\mathcal{M}\mathcal{M}(0)\big]_{\beta,h} 
= \beta h \chi^{\perp}_\Gamma (\beta,h),
 \end{equation}
which is the first of the claimed inequalities \eqref{chi-bounds}. 

For the other part we will use the Falk--Bruch inequality.
First, there exists a positive measure $\mu$ on $\bbR$ such that
\be
F(s):=\big[ \mathcal{M}\, \mathcal{M}(s)
\big]_{\beta,h} = \int e^{st} d\mu(t)
\ee
(note that $\mathcal{M}^*=\mathcal{M}$).  Then we have that 
\be\begin{split}
b&:=\int_0^1 F(s)ds \equiv(\mathcal{M},\mathcal{M})_{\b,h}
=\int \frac{e^t-1}{t} d\mu(t),\\
c&:=\tfrac12\big(F(0)+F(1)\big)\equiv 
\langle \mathcal{M}^2\rangle_{\b,h}
=\int \frac{e^t+1}{2} d\mu(t),\\
a&:=F'(1)-F'(0)\equiv \b 
\big\langle \big[\mathcal{M},[H,\mathcal{M}]\big]\big\rangle_{\b,h}
=\int t(e^t-1) d\mu(t).
\end{split}\ee
Define the probability measure $\nu$ on $\bbR$ by
\be
d\nu(t):=\tfrac1a t(e^t-1)d\mu(t),
\ee
and consider the concave function $\phi:[0,\oo)\to[0,\oo)$ given by 
\be
\phi(t):=\sqrt t \coth\big(\tfrac1{\sqrt t}\big).
\ee
By Jensen's inequality we have
\be
\phi\big(\tfrac{4b}{a}\big)=\phi\Big(\int \frac4{t^2}d\nu(t)\Big) 
\geq \int \phi\big(\tfrac4{t^2}\big)d\nu(t)
=\int \tfrac2t\coth\big(\tfrac t2\big)d\nu(t)
=\tfrac{4c}{a}.
\ee
Using that $\phi(t)\leq t+\sqrt t$ we get 
$b\geq c-\tfrac12\sqrt{ab}$, which using 
$b\leq \chi_\Gamma^\perp(\b,h)$ from
\eqref{upperbound} gives
\be
\tfrac1{\b h} M_\Gamma(\b,h)\geq \chi_\Gamma^\perp(\b,h)
-\tfrac12 \b\sqrt h 
\sqrt{\chi_\Gamma^\perp(\b,h) 
\big\langle \big[\mathcal{M},[H,\mathcal{M}]\big]\big\rangle_{\b,h}
}
\ee
as claimed.
\end{proof}

\begin{proof}[Proof of \eqref{trv-critexp}]
We use Theorem \ref{ward-thm} with $|\Gamma|=n$,
$u=0$ and $J_{i,j}=\tfrac1n$ for $i\neq j$ (and $J_{i,i}=0$).
Note that $M_\Gamma(\b,h)\to m(\b,h)$
as $n\to\oo$ for $h>0$, also note that 
we should replace $\b h$ in \eqref{chi-bounds}
by $h$ to account for the slightly different conventions 
in \eqref{press-eq} and \eqref{ham}.

We need an upper bound on 
the double commutator $\big[\mathcal{M},[H,\mathcal{M}]\big]$.  
Writing 
\be
h_{i,j}=-J_{i,j} \vec S_i\cdot\vec S_j-\tfrac h{2n} 
(S_i^{(1)}+S_j^{(1)})
\ee
we have that 
\be
[H,\mathcal M]=\frac1{\sqrt n}
\sum_{i,j=1}^n [h_{i,j},S_i^{(1)}+S_j^{(1)}]
\ee
and hence 
\be
[\mathcal M,[H,\mathcal M]]=\frac1{n}
\sum_{i,j=1}^n [S_i^{(1)}+S_j^{(1)},
[h_{i,j},S_i^{(1)}+S_j^{(1)}]].
\ee
The operator norm of $h_{i,j}$ is at most $c/n$ for some constant $c$,
hence the operator norm of  $[\mathcal M,[H,\mathcal M]]$
is bounded by a constant.  This gives that, for some constant $C>0$, 
\be
\chi^\perp(\b,h)\geq \tfrac1h m(\b,h)\geq
\chi^\perp(\b,h)\big(1-
C\sqrt{\b h} \big(\chi^\perp(\b,h)\big)^{-1/2}
\big).
\ee
If $\b=\b_\crit$ then $m(\b_\crit,h)\sim h^{1/3}$ by \eqref{critexp},
and if $\b>\b_\crit$ then  $m(\b,h)$ is bounded below by a positive
constant.  These facts give \eqref{trv-critexp}.
\end{proof}

\appendix

\section{Addition of angular momenta}
\label{momenta-app}

We summarize standard facts about addition of $n$ spins.  Recall that 
$\vec\Sigma = \sum_{i=1}^n \vec S_i$ denotes the total
spin and that $\vec\Sigma^2$ commutes with $\SSx$,$\SSy$ and $\SSz$. 

\begin{proposition} 
\label{prop spins S}
For $S\geq\tfrac12$ we have:
\begin{itemize}
\item[(a)] The set of eigenvalues of $\SSz$ is
\be
\caE(\SSz) = \{ -nS, -nS+1, \dots, nS \},
\ee
and the multiplicity of $M \in \caE(\SSz)$ is 
\be
\label{def L}
L_{M,n} = \sum_{\sigma_1,\dots,\sigma_n=-S}^S \delta_{\sigma_1+\dots+\sigma_n,M}.
\ee
\item[(b)] The set of eigenvalues of $\vec\Sigma^2$ is
\be
\caE(\vec\Sigma^2) = 
\begin{cases} 
\{ J(J+1) : J = 0, 1, \dots,  nS \} & 
\text{if $nS$ is an integer}; \\ 
\{ J(J+1) : J = \frac12, \frac32, \dots, nS \} & 
\text{otherwise}. 
\end{cases}
\ee
\item[(c)] Let $\caH^J$ be the eigensubspace for the eigenvalue
$J(J+1)\in \caE(\vec\Sigma^2)$, and $\caH^{J,M}$ be the
eigensubspace where $\vec\Sigma^2$ has eigenvalue $J(J+1)$ and
$\SSz$ has eigenvalue $M$. Then
\be 
\tfrac1{2J+1} \dim(\caH^J) = \dim(\caH^{J,M}) =
(L_{J,n}-L_{J+1,n}) \one\{|M|\leq J\}.
\ee
\end{itemize}
\end{proposition}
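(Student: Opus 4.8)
The plan is to deduce all three parts from the standard Clebsch--Gordan decomposition of $\caH_n=(\bbC^{2S+1})^{\otimes n}$ under the total-spin $\mathfrak{su}(2)$-action generated by $\SSx,\SSy,\SSz$: write $D^J$ for the $(2J+1)$-dimensional irreducible, let $a_J\geq 0$ be its multiplicity, and let $\caH^{J,M}$ be the weight-$M$ subspace of the isotypic component $\caH^J$, so $\dim\caH^J=(2J+1)a_J$. Part~(a) is immediate: $S_1^{(3)},\dots,S_n^{(3)}$ commute and each acts on its tensor factor with spectrum $\{-S,\dots,S\}$, so the eigenvalues of $\SSz=\sum_i S_i^{(3)}$ are the sums $\sigma_1+\dots+\sigma_n$ with $\sigma_i\in\{-S,\dots,S\}$, which fill $\{-nS,\dots,nS\}$, and the multiplicity of $M$ is the number of such representations, namely $L_{M,n}$ from \eqref{def L} (a short induction on $n$ shows this is strictly positive for every admissible $M$, so $\caE(\SSz)$ is exactly this set).

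The key step is weight counting inside the decomposition. Each copy of $D^J$ has one-dimensional $\SSz$-eigenspaces for the eigenvalues $-J,-J+1,\dots,J$ and no others, so $\dim\caH^{J,M}=a_J\,\one\{|M|\leq J\}$; summing over $J$ gives $L_{M,n}=\sum_{J\geq|M|}a_J$, the sum over the total spins $J\leq nS$ present in the decomposition. Since every weight of $\caH_n$ is congruent to $nS$ mod $1$, so is every such $J$, hence consecutive admissible values of $J$ differ by exactly $1$; subtracting the identity for $M=J+1$ from that for $M=J$ yields
\[
a_J=L_{J,n}-L_{J+1,n},
\]
which also holds trivially ($0=0$) when $J\not\equiv nS\pmod 1$. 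Part~(c) is now immediate: $\dim\caH^{J,M}=(L_{J,n}-L_{J+1,n})\,\one\{|M|\leq J\}$ and $\dim\caH^J=(2J+1)(L_{J,n}-L_{J+1,n})$.

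What remains is part~(b): $a_J>0$ exactly for the claimed values of $J$. If $J>nS$ or $nS-J\notin\bbZ$, then $L_{J,n}=L_{J+1,n}=0$, whence $a_J=0$; for the remaining $J$ (in the range from $0$, resp.\ $\tfrac12$, up to $nS$) one needs $L_{J,n}>L_{J+1,n}$, which follows from strict unimodality and symmetry of the coefficient sequence of $(1+z+\dots+z^{2S})^n$. This is the one genuinely non-formal input, and I regard it as the main obstacle; I would supply it either by quoting the classical fact that powers (of exponent $\geq 2$) of $1+z+\dots+z^{k}$ have symmetric, strictly unimodal coefficients, or, to stay self-contained, by induction on $n$ via the recursion $L_{M,n+1}=\sum_{\sigma=-S}^{S}L_{M-\sigma,n}$, observing that convolving a symmetric unimodal sequence with the length-$(2S+1)$ all-ones sequence preserves symmetry and unimodality and upgrades it to strict unimodality once $n\geq 2$. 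An alternative that avoids unimodality altogether is to establish (b) directly by induction on $n$ from $D^J\otimes D^S\cong\bigoplus_{L=|J-S|}^{J+S}D^L$, which produces the set of admissible total spins for $n+1$ spins from that for $n$ spins.
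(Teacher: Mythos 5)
Your argument is essentially the paper's: both use the $\mathfrak{su}(2)$ total-spin decomposition of $\caH_n$, observe that $\dim\caH^{J,M}$ is independent of $M$ for $|M|\leq J$, derive the weight-count identity $L_{M,n}=\sum_{J\geq|M|}d_J$, and telescope to get $d_J=L_{J,n}-L_{J+1,n}$. The paper rederives the needed $\mathfrak{su}(2)$ facts from scratch via the ladder operators $\Sigma^\pm=\SSx\pm\ii\SSy$, whereas you invoke the Clebsch--Gordan decomposition as known; the core bookkeeping is identical. Where you genuinely add something is part~(b): the paper's proof ends with ``Together with the claim (a), we get (b)'', but that step only establishes that any $J$ appearing in $\caE(\vec\Sigma^2)$ must satisfy $J\leq nS$ and $J\equiv nS\pmod 1$ --- it does not show that every such $J$ actually occurs. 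You correctly isolate the missing ingredient as the positivity $L_{J,n}>L_{J+1,n}$ (strict unimodality, beyond the peak, of the coefficients of $(1+z+\cdots+z^{2S})^n$, or equivalently an induction via $D^J\otimes D^S\cong\bigoplus_{L=|J-S|}^{J+S}D^L$), and you are also right that this requires $n\geq 2$: as stated, part~(b) in fact fails for $n=1$ and $S>\tfrac12$ (there $\vec\Sigma^2=S(S+1)\bbone$ has only one eigenvalue). This is harmless for the paper since all its applications are asymptotic in $n$, and part~(c), which is what is really used, remains correct with $d_J=0$ for the absent $J$.
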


\begin{proof}
Part (a) is immediate, using $\caH_n \simeq {\rm span} \{
(\omega_x)_{1\leq x\leq n} : \omega_x \in\{-S,-S+1,\dotsc,S\} \}$, and 
$S_i|\omega\rangle = \omega_i |\omega\rangle$. 

For (b), let $\Sigma^\pm = \SSx \pm \ii \SSy$. Then
$[\SSz, \Sigma^\pm] = \pm \Sigma^\pm$ and $[\Sigma^+,\Sigma^-]
= 2 \SSz$. Further, 
\be
\label{pm mp} 
\Sigma^\pm \Sigma^\mp = \vec\Sigma^2 - (\SSz)^2
\pm \SSz.  
\ee 
The operators on the left side are nonnegative
and this implies that $|M| \leq J$. If $|M\rangle$ is eigenvector of
$\SSz$ with eigenvalue $M$, then 
\be
\SSz \Sigma^\pm
|M\rangle = (\Sigma^\pm \SSz\pm \Sigma^\pm) |M\rangle =
(M\pm1) \Sigma^\pm |M\rangle.  
\ee
Further, if $|M\rangle \in \caH^J$,
\be
\| \Sigma^\pm |M\rangle \|^2 = J(J+1) - M(M\pm1).  
\ee
Then
$\Sigma^\pm |M\rangle$ is eigenvector of $\SSz$ with
eigenvalue $M \pm 1$, unless $M = \pm J$ in which case it is zero. It
follows that eigenvalues of $\SSz$ in $\caH^J$ are $-J, -J+1,
\dots, J$. Together with the claim (a), we get (b).

For (c), let $|J,M,\alpha\rangle$ denote the eigenvector of
$\vec\Sigma^2$ and $\SSz$ with respective eigenvalues $J(J+1)$
and $M$; the third index, $\alpha$, runs from 1 to
$\dim(\caH^{J,M})$. Observe that $[\vec\Sigma^2, \Sigma^\pm] = 0$. Then
$\Sigma^\pm |J,M,\alpha\rangle \in \caH^{J,M\pm1}$, and, using
\eqref{pm mp}, $\Sigma^\pm |J,M,\alpha\rangle \perp \Sigma^\pm
|J,M,\alpha'\rangle$ if $\alpha \neq \alpha'$. It follows that
$\dim\caH^{J,M}$ depends on $J$ but not on $M$,
as long as $|M|\leq J$. Let $d_J =
\dim\caH^{J,M}$. We have 
\be
\sum_{J=|M|}^{Sn} d_J=L_{M,n}.  
\ee
Then $d_J =L_{J,n}-L_{J+1,n}$, which gives the expression in (c).
\end{proof}

\section{Lemma on convergence}
\label{conv-app}

Although simple, we include a proof of the following lemma for the sake
of completeness:

\begin{lemma}\label{conv_lem} For $d\geq1$, let $K\subseteq[0,1]^d$ be
a compact set and $G:K\to\bbR$ a continuous function.  Suppose there
is some $x^\star\in K$ such that $G(x^\star)>G(x)$ for all $x\in
K\setminus \{x^\star\}$.  Write $K_n=\{\ul k=(k_1,\dotsc,k_d)\in
\bbN^d: \ul k/n\in K\}$ and let $\eps_i(\ul k,n)$ be sequences
satisfying $\max_{\ul k\in K_n} |\eps_i(\ul k,n)|\to 0$.
\begin{enumerate}
\item If $A(\ul k,n)$ are sequences satisfying $\tfrac1n\log(\max_{\ul
k\in K_n} |A(\ul k,n)|)\to 0$ then for any $\eps>0$ 
\be
\frac{\sum_{\ul k\in K_n} 
\e{n[G(\ul k/n)+\eps_1(\ul k,n)]} A(\ul k,n)}
{\sum_{\ul k\in K_n} \e{n[G(\ul k/n)+\eps_1(\ul k,n)]}}
=
\frac{\sum_{\ul k:\|\ul k/n - x^\star\|<\eps} 
\e{n[G(\ul k/n)+\eps_1(\ul k,n)]} A(\ul k,n)}
{\sum_{\ul k\in K_n}  
\e{n[G(\ul k/n)+\eps_1(\ul k,n)]}}
+o(1),\quad \mbox{as } n\to \oo.
\ee
\item 
If $F:K\to\bbR$ is a continuous function 
then
\be
\frac{\sum_{\ul k\in K_n} 
\e{n[G(\ul k/n)+\eps_1(\ul k,n)]} 
[F(\ul k/n)+\eps_2(\ul k,n)]}
{\sum_{\ul k\in K_n} \e{n[G(\ul k/n)+\eps_1(\ul k,n)]}}
\to F(x^\star),\qquad \mbox{as } n\to \oo.
\ee
\end{enumerate}
\end{lemma}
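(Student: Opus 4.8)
The plan is to prove both parts by a standard Laplace-asymptotics argument: show that the common denominator $D_n:=\sum_{\ul k\in K_n} e^{n[G(\ul k/n)+\eps_1(\ul k,n)]}$ is, up to subexponential factors, as large as $e^{nG(x^\star)}$, while the total contribution of lattice points $\ul k$ with $\ul k/n$ bounded away from the unique maximiser $x^\star$ is exponentially smaller.

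\emph{Part (1).} Fix $\eps>0$. The difference between the two sides of the asserted identity equals
\[
D_n^{-1}\sum_{\ul k\in K_n:\,\|\ul k/n-x^\star\|\geq\eps}
e^{n[G(\ul k/n)+\eps_1(\ul k,n)]}A(\ul k,n),
\]
so it suffices to show this tends to $0$. If the index set is empty there is nothing to prove; otherwise set $\gamma:=\tfrac12\big(G(x^\star)-\max_{x\in K,\,\|x-x^\star\|\geq\eps}G(x)\big)$, a maximum over a nonempty compact subset of $K$ not containing the strict maximiser, so $\gamma>0$. First I would lower bound $D_n$ by a single term: choose $\ul k_n\in K_n$ with $\ul k_n/n\to x^\star$; then by continuity of $G$ and $\max_{\ul k}|\eps_1(\ul k,n)|\to0$ one gets $D_n\geq e^{n(G(x^\star)-\gamma/2)}$ for $n$ large. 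Next, on the far-away lattice points $G(\ul k/n)\leq G(x^\star)-2\gamma$, while $|\eps_1(\ul k,n)|\leq\gamma/4$ uniformly and $|A(\ul k,n)|\leq e^{n\gamma/4}$ for $n$ large by the hypothesis on $A$; since there are at most $(n+1)^d$ such $\ul k$, the displayed numerator (without $D_n^{-1}$) is at most $(n+1)^d e^{n(G(x^\star)-3\gamma/2)}$. Dividing by the lower bound for $D_n$ gives a bound $(n+1)^d e^{-n\gamma}\to0$, which finishes Part (1).

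\emph{Part (2).} Applying Part (1) with $A\equiv1$ shows that, for every $\eps>0$, the total weight of the far-away points, $W_n(\eps):=D_n^{-1}\sum_{\ul k\in K_n:\,\|\ul k/n-x^\star\|\geq\eps}e^{n[G(\ul k/n)+\eps_1(\ul k,n)]}$, tends to $0$. Now write the quotient in Part (2) as $\sum_{\ul k\in K_n}w_{\ul k}\,[F(\ul k/n)+\eps_2(\ul k,n)]$ with nonnegative weights $w_{\ul k}:=D_n^{-1}e^{n[G(\ul k/n)+\eps_1(\ul k,n)]}$ summing to $1$. Given $\delta>0$, pick $\eps>0$ by uniform continuity of $F$ on $K$ so that $\|x-x^\star\|<\eps$ implies $|F(x)-F(x^\star)|<\delta$; splitting the sum according to whether $\|\ul k/n-x^\star\|<\eps$ or not, and using $\sum_{\ul k}w_{\ul k}=1$ together with boundedness of $F$, one obtains
\[
\Big|\sum_{\ul k\in K_n}w_{\ul k}[F(\ul k/n)+\eps_2(\ul k,n)]-F(x^\star)\Big|
\leq \delta+\max_{\ul k}|\eps_2(\ul k,n)|
+\big(2\|F\|_\infty+\max_{\ul k}|\eps_2(\ul k,n)|\big)W_n(\eps).
\]
Letting $n\to\oo$ and then $\delta\downarrow0$ gives the claim.

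The argument is entirely routine; the one step deserving a word is the lower bound on $D_n$, which requires a sequence of lattice points $\ul k_n/n\in K_n$ converging to $x^\star$. This holds in every application of the lemma in this paper, where $K$ is an interval or a simplex whose rational points are dense, but it should be stated explicitly, since without such approximability the conclusion can fail (already when $K$ is a two-point set).
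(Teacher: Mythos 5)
Your argument is correct and follows essentially the same Laplace-asymptotics strategy as the paper: bound the denominator below by a single term near $x^\star$, bound the far-away contribution above by a polynomial factor times an exponentially small term, and for Part (2) combine Part (1) with uniform continuity of $F$. The only cosmetic difference is in Part (2), where the paper feeds $A(\ul k,n)=F(\ul k/n)+\eps_2(\ul k,n)-F(x^\star)$ directly into Part (1), whereas you apply Part (1) with $A\equiv1$ and then split a convex combination; both routes are equivalent.

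Your closing remark is well taken: both your proof and the paper's implicitly require the existence of lattice points $\ul k_n\in K_n$ with $\ul k_n/n\to x^\star$ (the paper simply writes ``let $\ul k^\star$ satisfy $\ul k^\star/n\to x^\star$'' without comment), and without such approximability --- indeed without $K_n$ even being nonempty --- the statement as written is false. This is harmless in the applications (where $K$ is an interval or a simplex), but it is a genuine hidden hypothesis and you are right to flag it.
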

\begin{proof}
For the first part, let $\alpha>0$ be such that 
$\|x-x^\star\|\geq\eps$
implies $G(x^\star)\geq G(x)+2\alpha$, and let $\ul k^\star$
satisfy $\ul k^\star/n\to x^\star$.  Then for $n$ large enough
\be
\begin{split}
\Big|\frac{\sum_{\ul k:\|\ul k/n - x^\star\|\geq\eps} 
\e{n[G(\ul k/n)+\eps_1(\ul k,n)]} A(\ul k,n)}
{\sum_{\ul k\in K_n}  
\e{n[G(\ul k/n)+\eps_1(\ul k,n)]}}\Big|
&\leq \max_{\ul k\in K_n} |A(\ul k,n)|
\frac{\sum_{\ul k:\|\ul k/n - x^\star\|\geq\eps} 
\e{n[G(\ul k/n)+\eps_1(\ul k,n)]}}
{\e{n[G(\ul k^\star/n)+\eps_1(\ul k^\star,n)]}}\\
&\leq (n+1)^d \max_{\ul k\in K_n} |A(\ul k,n)|
\e{n[\max_{\ul k} \eps_1(\ul k,n)-
\eps_1(\ul k^\star,n)-\alpha]}\\
&=o(1).
\end{split}
\ee
For the second part, let $\delta>0$ be arbitrary and let $\eps>0$ be
such that $\|x-x^\star\|<\eps$ implies $|F(x)-F(x^\star)|<\delta$.
Applying the first part with
$A(\ul k,n)=F(\ul k/n)+\eps_2(\ul k,n)-F(x^\star)$ we get
\be\begin{split}
&\left|
\frac{\sum_{\ul k\in K_n} 
\e{n[G(\ul k/n)+\eps_1(\ul k,n)]} 
[F(\ul k/n)+\eps_2(\ul k,n)]}
{\sum_{\ul k\in K_n} \e{n[G(\ul k/n)+\eps_1(\ul k,n)]}}
-F(x^\star)\right|\\
&\leq o(1)+
\frac{\sum_{\|\ul k/n-x^\star\|<\eps} 
\e{n[G(\ul k/n)+\eps_1(\ul k,n)]} 
\big|F(\ul k/n)+\eps_2(\ul k,n)-F(x^\star)\big|}
{\sum_{\ul k\in K_n} \e{n[G(\ul k/n)+\eps_1(\ul k,n)]}}\\
&\leq 2\delta
\end{split}
\ee
for $n$ large enough.  This proves the claim.
\end{proof}

\section{Uniqueness of the maximizer of $\phi_\b$}
\label{phi-app}

Recall that,  for 
$x_1\geq x_2\geq\dotsc\geq x_\theta\geq 0$ 
satisfying $\sum_i x_i=1$, we defined
\be
\phi_\b(x_1,\dotsc,x_\theta)=
\frac\b2\Big(\sum_{i=1}^\theta x_i^2-1\Big)
-\sum_{i=1}^\theta x_i\log x_i.
\ee
In \cite{Bjo2} it was proved that (for $\theta\geq3$, that is
$S\geq1$) $\phi_\b(\cdot)$ is maximised at
$x_1=x_2=\dotsb=x_\theta=\tfrac1\theta$ when $\b<\b_\crit$, and at
some point satisfying $x_1>x_2$ when $\b\geq\b_\crit$.
Here we provide the following additional information about the
maximiser.

\begin{lemma}\label{x1-lem}
 For all values of $\b>0$, there is a unique maximizer
$x^\star$ of $\phi_\b(x)$, which is of the form
\[
x^\star=(x^\star_1,\tfrac{1-x^\star_1}{\theta-1},\dotsc,
\tfrac{1-x^\star_1}{\theta-1})
\]
with the last $\theta-1$ entries equal.
\end{lemma}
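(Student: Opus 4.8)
The plan is to establish the claimed \emph{form} first, and then to deduce uniqueness by reducing to a one–dimensional problem. Write $\phi_\beta(x)=\sum_{i=1}^\theta h(x_i)-\tfrac\beta2$ with $h(t)=\tfrac\beta2 t^2-t\log t$ (so $h(0)=0$), hence $h'(t)=\beta t-\log t-1$ and $h''(t)=\beta-\tfrac1t$. Working on the full (unordered) simplex $\{x_i\geq0,\ \sum_ix_i=1\}$, on which $\phi_\beta$ is continuous and attains its maximum, I would first rule out boundary maximizers: if $x^\star_\theta=0$, moving a mass $\eps$ from the necessarily positive coordinate $x^\star_1$ onto $x^\star_\theta$ changes $\phi_\beta$ by $h(\eps)-h(0)+h(x^\star_1-\eps)-h(x^\star_1)=-\eps\log\eps-h'(x^\star_1)\eps+O(\eps^2)>0$ for small $\eps$, a contradiction. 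So every maximizer is an interior critical point and satisfies $h'(x^\star_i)=\mu$ for a common Lagrange multiplier $\mu$. Since $h''$ vanishes only at $1/\beta$, $h'$ is strictly monotone on each of $(0,1/\beta)$ and $(1/\beta,\infty)$, so the equation $h'(t)=\mu$ has at most one root in each interval; hence the $x^\star_i$ take at most two distinct values $b\leq a$, with $b<1/\beta<a$ whenever $b<a$. Finally, if the larger value $a$ occurred at two coordinates, say $x^\star_1=x^\star_2=a$, then $h''(a)=\beta-\tfrac1a>0$ and replacing $(a,a)$ by $(a+\eps,a-\eps)$ would raise $\phi_\beta$ by $h''(a)\eps^2+O(\eps^4)>0$, again impossible. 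Therefore $a$ occurs exactly once (or $a=b$, the uniform point), i.e.\ $x^\star=(x^\star_1,r,\dots,r)$ with $r=\tfrac{1-x^\star_1}{\theta-1}$ and $x^\star_1\geq\tfrac1\theta$, which is the asserted form.

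Since every maximizer has this form, uniqueness amounts to showing that $\psi(s):=\phi_\beta(s,\tfrac{1-s}{\theta-1},\dots,\tfrac{1-s}{\theta-1})$ has a unique maximizer on $[\tfrac1\theta,1)$. Parametrizing by the order parameter $z=\tfrac{\theta s-1}{\theta-1}\in[0,1)$ (so the configuration is $(\tfrac{1+(\theta-1)z}{\theta},\tfrac{1-z}{\theta},\dots)$), one computes $\psi$ as a function $\widetilde\psi(z)=\beta\,a(z)+b(z)$, where $a(z)=-\tfrac{(\theta-1)(1-z^2)}{2\theta}$ is strictly increasing in $z$ and $b$ is the concave entropy term, with $\widetilde\psi'(z)=\tfrac{\theta-1}{\theta}\bigl(\beta z-\log\tfrac{1+(\theta-1)z}{1-z}\bigr)$ and $\widetilde\psi''(z)=\tfrac{\theta-1}{\theta}\bigl(\beta-\tfrac{\theta-1}{1+(\theta-1)z}-\tfrac1{1-z}\bigr)$. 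Because $z\mapsto\tfrac{\theta-1}{1+(\theta-1)z}+\tfrac1{1-z}$ is unimodal (decreasing, then increasing), $\widetilde\psi''$ changes sign at most twice, so $\widetilde\psi$ is concave–convex–concave; since moreover $\widetilde\psi'(0)=0$, it has at most two local maxima, namely the endpoint $z=0$ and at most one interior point $z_+(\beta)$ (with $z_+(\beta)$ bounded away from $0$ by the location of the minimum of the unimodal function above). It then remains to compare $\widetilde\psi(0)$ with $\widetilde\psi(z_+)$.

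This comparison is the crux, and the step I expect to require the most care. The useful observation is that, for each fixed $z$, $\beta\mapsto\widetilde\psi(z;\beta)$ is affine with slope $a(z)$, and $a(\cdot)$ is injective; hence the free energy $p(\beta):=\max_z\widetilde\psi(z;\beta)$ is convex in $\beta$, and at every $\beta$ where $p$ is differentiable the maximizer is unique (two maximizers $z_1<z_2$ would force the distinct slopes $a(z_1)<a(z_2)$ both to lie in $\partial p(\beta)$). Using in addition that the maximizer varies monotonically in $\beta$ (supermodularity, $a$ being increasing in $z$) and that $\widetilde\psi(z_+(\cdot);\cdot)$ is a maximum over a region bounded away from $0$, hence an upper envelope of affine functions whose slopes strictly exceed $a(0)$, one finds that $E(\beta):=p(\beta)-\widetilde\psi(0;\beta)$ is convex, nonnegative, identically $0$ on an initial interval $[0,\beta_\crit]$ — where $p$ is affine, hence differentiable, so $z=0$ is the unique maximizer — and strictly positive for $\beta>\beta_\crit$, where the interior point $z_+(\beta)$ (the unique interior local maximum) is the unique maximizer. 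This confines any failure of uniqueness to the single value $\beta_\crit$, where $\widetilde\psi(0;\beta_\crit)=\widetilde\psi(z_+(\beta_\crit);\beta_\crit)$; to settle that borderline case I would invoke the identification of the maximizer on either side of $\beta_\crit$ from \cite{Bjo2} together with a direct check of the status of the uniform configuration at $\beta=\beta_\crit$.
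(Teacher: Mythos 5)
Your argument for the \emph{form} of the maximizer is correct but takes a genuinely different route from the paper: the paper leans on the Lagrange-multiplier classification $x_1=\dotsb=x_r=t$, $x_{r+1}=\dotsb=x_\theta=\tfrac{1-rt}{\theta-r}$, extends $\phi_\beta$ to a function of the continuous pair $(r,t)$, rules out interior critical points of that extension, and then compares values on the three boundary pieces to land on $r=1$; you instead observe that $\phi_\beta=\sum_i h(x_i)-\tfrac\beta2$, rule out boundary maximizers by the $-\eps\log\eps$ divergence, use that $h'$ is unimodal to restrict to two values $b<1/\beta<a$, and then kill repeated large values by the second-order perturbation $(a,a)\mapsto(a+\eps,a-\eps)$ using $h''(a)>0$. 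Your version is more elementary and self-contained (it does not invoke the Lagrange classification from \cite{Bjo2} nor the continuous-$r$ device), and it is sound as written.

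For uniqueness your route also diverges: the paper argues that $\partial\phi_\beta/\partial t=0$ at $r=1$ has at most two solutions $\xi>0$, of which at most one is a maximum; you reduce to $\widetilde\psi(z)$, show it is concave--convex--concave (for $\beta<\theta$; it is simply convex--concave for $\beta\geq\theta$, which does not affect the conclusion), deduce at most two local maxima (namely $z=0$ and one interior $z_+$), and then compare their values via the convexity in $\beta$ of $p(\beta)=\max_z\widetilde\psi(z;\beta)$ and the injectivity of $a(z)$. That pressure-convexity device is a nice addition. However, the gap you flag at the end is not a mere technicality — it is a genuine failure of the stated lemma at $\beta=\beta_\crit$. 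The functional $\phi_\beta$ is exactly the Curie--Weiss--Potts free-energy functional with $q=\theta\geq3$, whose transition is first order: at $\beta_\crit=\tfrac{2(\theta-1)}{\theta-2}\log(\theta-1)$ the uniform point $z=0$ and the nontrivial critical point $z_+=\tfrac{\theta-2}{\theta-1}$ (i.e.\ $x_1^\star=\tfrac{\theta-1}{\theta}$, $x_2^\star=\dotsb=\tfrac1{\theta(\theta-1)}$) give the \emph{same} value of $\phi_\beta$. (For $\theta=3$ one checks directly that $\phi_{\beta_\crit}(\tfrac13,\tfrac13,\tfrac13)=\phi_{\beta_\crit}(\tfrac23,\tfrac16,\tfrac16)=-\tfrac43\log2+\log3$.) So uniqueness holds for $\beta\neq\beta_\crit$ but not at $\beta=\beta_\crit$, and no finishing argument of the kind you sketch (nor the paper's own one-line uniqueness claim, which compares only the interior critical points $\xi>0$ and never compares them with $\xi=0$) can close this at the critical value. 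The fix is to weaken the statement to $\beta\neq\beta_\crit$, or to adopt the convention that at $\beta_\crit$ one selects the $z_+$ branch, which is what the cited result of \cite{Bjo2} effectively does.
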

\begin{proof} 
As noted in \cite[Thm 4.2]{Bjo2}, the method of Lagrange
multipliers tells us that a maximizer $x$ of $\phi_\b(\cdot)$ must be
of the form
\be\label{x-lagrange}
x_1=\dotsc=x_r=t,\qquad
x_{r+1}=\dotsc=x_\theta=\tfrac{1-rt}{\theta-r},
\ee
for some $r\in\{1,\dotsc,\theta\}$ 
and some $t\in[\tfrac1\theta,\tfrac1r]$.  Let
us write $\cD=\{(r,t):r\in[1,\theta], t\in[\tfrac1\theta,\tfrac1r]\}$ and
\be
\phi_\b(r,t)=\tfrac\b2\big(rt^2+\tfrac{(1-rt)^2}{\theta-r}-1\big)
-\big(rt\log t+(1-rt)\log \tfrac{1-rt}{\theta-r}\big),
\quad (r,t)\in\cD.
\ee
Thus, when $r$ is an integer, $\phi_\b(r,t)$ agrees with $\phi_\b(x)$
evaluated at $x$ of the form \eqref{x-lagrange}.
We aim to show:  first that $\phi_\b(r,t)$ has no maximum in the
interior of $\cD$, and second that, on the boundary $\partial\cD$, it is
largest along the line $r=1$.

We find that 
\be\label{dhpi-dt}
\frac{\partial \phi_\b}{\partial t} =r\Big[
\b \big(\tfrac{\theta t-1}{\theta-r}\big)
-\log \big(\tfrac{t(\theta-r)}{1-rt}\big)
\Big].
\ee
Clearly $\frac{\partial \phi_\b}{\partial t} =0$ whenever $t=\tfrac1\theta$.
The other solutions to $\frac{\partial \phi_\b}{\partial t} =0$ may be
parameterized using $\xi=\tfrac{\theta t-1}{\theta-r}$:
\be\label{xi-param}
r=\tfrac 1\xi \big(1-\tfrac{\theta\xi}{e^{\b\xi}-1}\big),\quad
t=\xi\big(1+\tfrac{1}{e^{\b\xi}-1}\big), 
\ee
for $\xi>0$ in a suitable range.
Next,
\be
\frac{\partial \phi_\b}{\partial r} =
\tfrac\b2 \big(\tfrac{\theta t-1}{\theta-r}\big)^2-
t\log\big(\tfrac{t(\theta-r)}{1-rt}\big)+\tfrac{\theta t-1}{\theta-r}.
\ee
To look for points where both partial derivatives vanish, we put in
the parameterization \eqref{xi-param} and set the result to $=0$.
After simplifying, this reduces to the condition:
\be
\tfrac12\b\xi=\tanh(\tfrac12\b\xi),
\ee
which has no solution $\xi>0$.  It follows that any maxima of
$\phi_\b(r,t)$ must lie on the boundary $\partial\cD$.
The boundary consists of the following 3 parts:
\begin{itemize}
\item A: the line $t=\tfrac1\theta$,
\item B: the curve $t=\tfrac1r$, and
\item C: the line $r=1$.
\end{itemize}
Along A, $\phi_\b(r,\tfrac1\theta)$ is constant.  Along B we have 
\be
\phi_\b(r,\tfrac1r)=\log r-\tfrac\b2(1-r^{-1})=:f(r),\qquad
1\leq r\leq \theta.
\ee
It is easy to see that $f(r)$ is either monotone, or has only one
extreme point (at $r=\tfrac\b2$) which is a minimum.  Thus $f(r)$ is
maximal at one of the endpoints.  This proves that $\phi_\b(r,t)$ is
maximized along C, as claimed.

For uniqueness of the maximizer note that \eqref{dhpi-dt}, with $r=1$,
has at most two solutions $\xi>0$, at most one of which can be at a
maximum.
\end{proof}

\section{Proof of the Poisson--Dirichlet 
formula \eqref{pd-expr}}
\label{q-PD-app}

Recall that we write
\be
R(h_1,\dotsc,h_\theta;x_1,\dotsc,x_\theta)=
 \det\big[e^{h_ix_j}\big]_{i,j=1}^\theta
\prod_{1\leq i<j\leq \theta} \frac{j-i}{(h_i-h_j)(x_i-x_j)}
\ee
and recall also from \eqref{qh} the notation 
\be
q_\bh(t)=\tfrac1\theta\big(e^{h_1t}+\dotsb+e^{h_\theta t}\big).
\ee
  We prove:

\begin{proposition}\label{pd-prop}
For $\theta\in\{2,3,\dotsc\}$ we have
\be\label{pd-prod}
\EE_{\mathrm{PD}(\theta)}
\Big[\prod_{i\geq 1} q_\bh(X_i)\Big]=
R(h_1,\dotsc,h_\theta;1,0,\dotsc,0).
\ee
\end{proposition}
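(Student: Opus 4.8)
The plan is to decouple the parameters $h_1,\dots,h_\theta$ by a colouring trick, which turns the left side into the Laplace transform of an explicit random vector, and then to recognise that Laplace transform as $R(\bh;1,0,\dots,0)$. First I would write $q_\bh(t)=\tfrac1\theta\sum_{k=1}^\theta e^{h_kt}$ and attach to each block $X_i$ an independent uniformly random colour $c_i\in\{1,\dots,\theta\}$, with the whole sequence $(c_i)_{i\geq1}$ independent of $X\sim\mathrm{PD}(\theta)$, so that $q_\bh(X_i)=\EE_{c_i}\!\big[e^{h_{c_i}X_i}\big]$. Since $|q_\bh(t)|\leq e^{t\max_k|h_k|}$ and $|e^{h_{c_i}t}|\leq e^{t\max_k|h_k|}$ for $t\geq0$ while $\sum_iX_i=1$ a.s., the infinite product $\prod_iq_\bh(X_i)$ converges absolutely (one has $\sum_i|q_\bh(X_i)-1|\leq C\sum_iX_i<\infty$) and is bounded by $e^{\max_k|h_k|}$; dominated convergence and Fubini then give
\[
\EE_{\mathrm{PD}(\theta)}\Big[\prod_{i\geq1}q_\bh(X_i)\Big]
=\EE_{X,c}\Big[\prod_{i\geq1}e^{h_{c_i}X_i}\Big]
=\EE_{X,c}\Big[\exp\Big(\textstyle\sum_{k=1}^\theta h_kP_k\Big)\Big],
\qquad P_k:=\sum_{i:\,c_i=k}X_i ,
\]
the rearrangement $\sum_ih_{c_i}X_i=\sum_kh_kP_k$ being licensed by $\sum_i|h_{c_i}|X_i<\infty$. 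So everything reduces to computing the Laplace transform of the colour-mass vector $(P_1,\dots,P_\theta)$.

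Second, I would show that $(P_1,\dots,P_\theta)$ is $\mathrm{Dirichlet}(1,\dots,1)$, i.e.\ uniform on the simplex $\{p\in[0,1]^\theta:\sum_kp_k=1\}$ — the familiar fact that i.i.d.\ colouring of the blocks of a $\mathrm{PD}(\theta)$ partition produces Dirichlet colour masses. The clean argument uses the gamma-subordinator construction: let $\Pi$ be a Poisson process on $(0,\infty)$ with intensity $\theta\,t^{-1}e^{-t}\,dt$, with (a.s.\ distinct) atoms $J_1\geq J_2\geq\cdots$ and total mass $\Sigma=\sum_iJ_i\sim\mathrm{Gamma}(\theta)$; then $(J_i/\Sigma)_{i\geq1}\sim\mathrm{PD}(\theta)$, and colouring the $i$-th largest atom by $c_i$ amounts to marking the atoms of $\Pi$ independently. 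By the thinning property the colour-$k$ atoms form an independent Poisson process of intensity $\tfrac1\theta\cdot\theta\,t^{-1}e^{-t}\,dt=t^{-1}e^{-t}\,dt$, whose total mass $\Sigma_k$ is therefore unit-rate exponential, the $\Sigma_1,\dots,\Sigma_\theta$ are independent, and $\Sigma=\sum_k\Sigma_k$; the classical gamma--Dirichlet relation then gives $(\Sigma_1/\Sigma,\dots,\Sigma_\theta/\Sigma)\sim\mathrm{Dirichlet}(1,\dots,1)$, and this is exactly $(P_1,\dots,P_\theta)$ since $\Sigma_k/\Sigma=\sum_{i:\,c_i=k}J_i/\Sigma$.

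Third, I would close the computation by recognising the Dirichlet Laplace transform as the Harish-Chandra--Itzykson--Zuber integral. For a Haar-distributed $U\in\caU(\theta)$ the first column is uniform on the complex unit sphere, so $(|U_{11}|^2,\dots,|U_{\theta1}|^2)\sim\mathrm{Dirichlet}(1,\dots,1)$; hence, with $H=\mathrm{diag}(h_1,\dots,h_\theta)$ and $P$ the rank-one projector onto $e_1$,
\[
\EE_{\mathrm{PD}(\theta)}\Big[\prod_{i\geq1}q_\bh(X_i)\Big]
=\EE_{\mathrm{Dirichlet}(1,\dots,1)}\Big[e^{\sum_kh_kp_k}\Big]
=\int_{\caU(\theta)}e^{\Tr(HUPU^{-1})}\,dU
=R(h_1,\dots,h_\theta;1,0,\dots,0),
\]
the last step being the HCIZ formula \cite{IZ}, with $R$ read via its continuous extension to the coinciding arguments $x_2=\dots=x_\theta=0$ (cf.\ Lemma \ref{R-prop}). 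Setting $\bh=\mathbf0$ recovers the consistency check $1=R(\mathbf0;1,0,\dots,0)$.

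The one step with real content is the last identification of the Dirichlet Laplace transform with the determinantal object $R$ at the degenerate point $x=(1,0,\dots,0)$; the colouring reduction and the Dirichlet law of $(P_k)$ are soft once the subordinator picture is in hand, and the exchange of expectation and infinite product is covered by the crude bound $e^{\max_k|h_k|}$. If one prefers to keep the appendix free of HCIZ, the natural fallback is a direct evaluation: for pairwise distinct $h_k$ the Hermite--Genocchi (divided-difference) identity for $\exp$ gives the left side as $(\theta-1)!\sum_{k=1}^\theta e^{h_k}\big/\!\prod_{j\neq k}(h_k-h_j)$, while $R(\bh;1,\eps,2\eps,\dots,(\theta-2)\eps)$ is brought to the same form by the column-operation and $\eps\to0$ confluent-Vandermonde manoeuvre already used in the proof of \eqref{int-proj-loops} around \eqref{det-lim-3}--\eqref{Dk-def}; continuity in $\bh$ then removes the distinctness assumption.
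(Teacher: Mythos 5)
Your proof is correct, but the route is genuinely different from the paper's. The paper works with the Ewens($\theta$) measure on $S_n$: it expands $\prod_i q_\bh(\sigma_i/n)$ in Schur polynomials and irreducible characters of $S_n$ via \eqref{power-schur}, uses orthogonality of characters to isolate the trivial partition $\lambda=(n)$, evaluates the surviving Schur-function ratio explicitly, and sends $n\to\infty$ using Kingman's Ewens-to-PD limit. You instead argue directly at the level of PD($\theta$): the i.i.d.\ colouring turns the left side into the Laplace transform of the colour masses $(P_1,\dots,P_\theta)$; the gamma-subordinator construction plus Poisson thinning identifies $(P_1,\dots,P_\theta)\sim\mathrm{Dirichlet}(1,\dots,1)$; and the Dirichlet$(1,\dots,1)$ Laplace transform is the HCIZ integral $\int_{\caU(\theta)}e^{\Tr(HUPU^{-1})}\,\dd U = R(\bh;1,0,\dots,0)$, with coinciding $h_i$ handled by the continuity in Lemma \ref{R-prop}. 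All steps check out, including the absolute-convergence bound for the infinite product and the fact that colouring the size-ordered blocks is distributionally the same as marking the unsorted Poisson atoms (exchangeability of the i.i.d.\ colours). As for what each buys: yours is shorter, more probabilistic, and makes the HCIZ structure manifest, tying \eqref{pd-expr} back to the heuristic \eqref{harish} in a satisfying way; the cost is that it imports the subordinator/thinning machinery and HCIZ, though your Hermite--Genocchi fallback combined with the confluent-Vandermonde limit from the discussion around \eqref{det-lim-3}--\eqref{Dk-def} would keep the appendix HCIZ-free. The paper's proof is self-contained modulo \eqref{power-schur} and the Ewens limit, and it deliberately reuses the $S_n$-representation infrastructure already set up for Theorem \ref{interchange-thm} in Section 6, so no new tools are introduced solely for the appendix.
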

\begin{proof}
We use the classical fact
that the Poisson--Dirichlet distribution may
be constructed as a limit of Ewens 
distributions on $S_n$ as $n\to\oo$. The
Ewens distribution assigns to each permutation $\s\in S_n$ the
probability 
\be
\frac{\theta^{\ell(\s)}}{\theta(\theta+1)\dotsb(\theta+n-1)},
\ee
and if $\s$ is random with this distribution then
the ordered cycle sizes of $\s$, rescaled by $n$, 
converge weakly to PD($\theta$), as proved in
\cite{kingman}. 

Let $\EE_n$ denote expectation over the Ewens-distribution on $S_n$,
and for $\s\in S_n$ let us also write 
$\s=(\s_1,\s_2,\dotsc,\s_\ell)$ for the
partition of $n$ corresponding to its cycle-decomposition.
Recall that 
\be
\prod_{i\geq1} q_{\bh}(\s_i/n)=
\theta^{-\ell(\s)} p_\s(e^{h_1/n},\dotsc,e^{h_\theta/n}),
\ee
and note that this is a bounded function of $\s$ (it is at most
$e^{\max_i |h_i|}$).
Using \eqref{power-schur} we have 
\be
\EE_n\Big[\prod_{i\geq1} q_{\bh}(\s_i/n)\Big]
=\frac{1}{\theta(\theta+1)\dotsb(\theta+n-1)}
\sum_{\substack{\la\vdash n\\\ell(\la)\leq\theta}}
s_\la(e^{h_1/n},\dotsc, e^{h_\theta/n}) 
\sum_{\s\in S_n} \chi_\la(\s).
\ee
By orthogonality of irreducible characters the last sum is simply 
$n! \,\delta_{\la,(n)}$ where
$(n)=(n,0,0,\dotsc)$ is the trivial partition.  Using the definition
\eqref{schur} of the Schur-function we thus get
\be\begin{split}\label{label}
\EE_n\Big[\prod_{i\geq1} q_{\bh}(\s_i/n)\Big]
&=\frac{n!}{\theta(\theta+1)\dotsb(\theta+n-1)}
\frac{\det\big[e^{\tfrac{h_i}{n}(n\delta_{j,1}+\theta-j)}\big]_{i,j=1}^\theta}
{\prod_{1\leq i<j\leq\theta} (e^{h_i/n}-e^{h_j/n})}\\
&=R(h_1,\dotsc,h_\theta; 1+\tfrac{\theta-1}n,\tfrac{\theta-2}{n},
\dotsc,\tfrac1n,0)
\frac{n!}{\theta(\theta+1)\dotsb(\theta+n-1)}\\
&\phantom{=}
\cdot n^{\binom{\theta}{2}}
\prod_{1\leq i<j\leq \theta}
\frac{h_i-h_j}{n(e^{h_i/n}-e^{h_j/n})}
\frac{(\delta_{i,1}+\tfrac{\theta-i}{n})-(\delta_{j,1}+\tfrac{\theta-j}{n})}{j-i}.
\end{split}\ee
To see the last equality, note that it holds if all the $h_i$ are
distinct, hence by continuity it holds in general provided we
interpret  $(h_i-h_j)/n(e^{h_i/n}-e^{h_j/n})$ as $=1$
if $h_i=h_j$.  Here
\be
\prod_{1\leq i<j\leq \theta}
\frac{(\delta_{i,1}+\tfrac{\theta-i}{n})-(\delta_{j,1}+\tfrac{\theta-j}{n})}{j-i}
=\frac{n^{-\binom{\theta-1}{2}}}{(\theta-1)!}(1+o(1))
\ee
and
\be
\frac{n!}{\theta(\theta+1)\dotsb(\theta+n-1)}
=\frac{(\theta-1)!}{n^{\theta-1}}(1+o(1)).
\ee
Now $\binom{\theta}{2}-\binom{\theta-1}{2}=\theta-1$,
$R$  is  continuous, the left-hand-side of \eqref{label}
converges to the left-hand-side of \eqref{pd-prod}, 
and the remaining product on the right-hand-side of \eqref{label}
converges to 1, so the result follows on letting $n\to\oo$.
\end{proof}

\begin{proof}[Proof of \eqref{pd-expr}]
We have the two identities
\be\label{id1}
R(h_1,\dotsc,h_\theta;\alpha x_1,\dotsc,\alpha x_\theta)=
R(\alpha h_1,\dotsc,\alpha h_\theta; x_1,\dotsc,x_\theta),
\quad \alpha\in\bbC,
\ee
and
\be\label{id2}
R(h_1,\dotsc,h_\theta;x,y,y,\dotsc,y)=
\exp(y\textstyle\sum_i h_i)
R(h_1,\dotsc, h_\theta; x-y,0,\dotsc,0).
\ee
Indeed, \eqref{id1} is immediate from the definition of $R$, and
\eqref{id2} can be seen by letting $\eps\to0$ in the identity
\be
R(h_1,\dotsc,h_\theta;x,y,y+\eps,\dotsc,y+k\eps)=
\exp(y\textstyle\sum_i h_i)
R(h_1,\dotsc, h_\theta; x-y,0, \eps,\dotsc,k\eps),
\quad k=\theta-2,
\ee
which in turn follows from the multilinearity of the determinant.

Using Proposition \ref{pd-prop}, 
writing $x=x_1^\star$ and $y=x_2^\star=\dotsb=x_\theta^\star$,
and recalling that $z^\star=x_1^\star-x_2^\star=x-y$
(and hence $y=\tfrac{1-z^\star}\theta$),
this gives \eqref{pd-expr}.
\end{proof}

\subsection*{Acknowledgements}
JF and DU are grateful to Thomas Spencer for suggesting the identity
\eqref{spin-density-Lap} (``spin-density Laplace transform'').  We also
thank him for hosting us at the Institute for Advanced Study. 

JEB and DU thank Vojkan Jak\v si\'c and the Centre de Recherches
Math\'ematiques of Montreal for hosting them during the 
thematic semester
``Mathematical challenges in many-body physics and quantum
information'',  with support from the Simons Foundation through the
Simons--CRM scholar-in-residence program. 

DU thanks Bruno Nachtergaele and Robert Seiringer for useful suggestions about extremal states decomposition in the quantum interchange model and other aspects.
JEB thanks Bat\i{} \c{S}eng\"ul for discussions about
symmetric polynomials. Finally, the authors are grateful to the referee for helpful comments.

The research of JEB is supported by Vetenskapsr{\aa}det grant
2015-05195.

\renewcommand{\refname}{\small References}
\bibliographystyle{symposium}

\end{document}